
\documentclass[10pt,journal,compsoc]{IEEEtran}
%
% If IEEEtran.cls has not been installed into the LaTeX system files,
% manually specify the path to it like:
% \documentclass[10pt,journal,compsoc]{../sty/IEEEtran}

% Some very useful LaTeX packages include:
% (uncomment the ones you want to load)

% *** MISC UTILITY PACKAGES ***
%
%\usepackage{ifpdf}
% Heiko Oberdiek's ifpdf.sty is very useful if you need conditional
% compilation based on whether the output is pdf or dvi.
% usage:
% \ifpdf
%   % pdf code
% \else
%   % dvi code
% \fi
% The latest version of ifpdf.sty can be obtained from:
% http://www.ctan.org/pkg/ifpdf
% Also, note that IEEEtran.cls V1.7 and later provides a builtin
% \ifCLASSINFOpdf conditional that works the same way.
% When switching from latex to pdflatex and vice-versa, the compiler may
% have to be run twice to clear warning/error messages.

\usepackage{subfigure}
\usepackage[final]{graphicx}
\usepackage{amsmath}
\usepackage[utf8]{inputenc}
\usepackage[english]{babel}
\usepackage{amsthm}
\usepackage{bbm}
\usepackage{amssymb}
\usepackage{xcolor}

% *** CITATION PACKAGES ***
%

  % IEEE Computer Society needs nocompress option
  % requires cite.sty v4.0 or later (November 2003)
  % normal IEEE
  \usepackage{cite}

\newtheorem{thm}{Theorem}
\newtheorem{defi}{Definition}
\newtheorem{prop}{Proposition}
\newtheorem{lem}{Lemma}
\newtheorem{cor}{Corollary}
\newtheorem{rmk}{Remark}

% correct bad hyphenation here
\hyphenation{op-tical net-works semi-conduc-tor}

\begin{document}
%
% paper title
% Titles are generally capitalized except for words such as a, an, and, as,
% at, but, by, for, in, nor, of, on, or, the, to and up, which are usually
% not capitalized unless they are the first or last word of the title.
% Linebreaks \\ can be used within to get better formatting as desired.
% Do not put math or special symbols in the title.
\title{Local Information Privacy and Its Application to Privacy-Preserving Data Aggregation}
%
%
% author names and IEEE memberships
% note positions of commas and nonbreaking spaces ( ~ ) LaTeX will not break
% a structure at a ~ so this keeps an author's name from being broken across
% two lines.
% use \thanks{} to gain access to the first footnote area
% a separate \thanks must be used for each paragraph as LaTeX2e's \thanks
% was not built to handle multiple paragraphs
%
%
%\IEEEcompsocitemizethanks is a special \thanks that produces the bulleted
% lists the Computer Society journals use for "first footnote" author
% affiliations. Use \IEEEcompsocthanksitem which works much like \item
% for each affiliation group. When not in compsoc mode,
% \IEEEcompsocitemizethanks becomes like \thanks and
% \IEEEcompsocthanksitem becomes a line break with idention. This
% facilitates dual compilation, although admittedly the differences in the
% desired content of \author between the different types of papers makes a
% one-size-fits-all approach a daunting prospect. For instance, compsoc 
% journal papers have the author affiliations above the "Manuscript
% received ..."  text while in non-compsoc journals this is reversed. Sigh.

\author{Bo Jiang,~\IEEEmembership{Student Member,~IEEE,}
        Ming Li,~\IEEEmembership{Senior Member,IEEE,}
        and~Ravi~Tandon,~\IEEEmembership{Senior Member,IEEE}% <-this % stops a space
\IEEEcompsocitemizethanks{\IEEEcompsocthanksitem Bo Jiang, Ming Li and Ravi Tandon are with the Department
of Electrical and Computer Engineering, University of Arizona, Tucson,
AZ, 85721.\protect\\
% note need leading \protect in front of \\ to get a newline within \thanks as
% \\ is fragile and will error, could use \hfil\break instead.
E-mail: bjiang@email.arizona.edu, lim@email.arizona.edu, tandonr@email.arizona.edu}}
%\IEEEcompsocthanksitem J. Doe and J. Doe are with Anonymous University.}% <-this % stops an unwanted space
%\thanks{This paper was presented in part in IEEE CNS2018\cite{Jian1805:Context}.}}

% note the % following the last \IEEEmembership and also \thanks - 
% these prevent an unwanted space from occurring between the last author name
% and the end of the author line. i.e., if you had this:
% 
% \author{....lastname \thanks{...} \thanks{...} }
%                     ^------------^------------^----Do not want these spaces!
%
% a space would be appended to the last name and could cause every name on that
% line to be shifted left slightly. This is one of those "LaTeX things". For
% instance, "\textbf{A} \textbf{B}" will typeset as "A B" not "AB". To get
% "AB" then you have to do: "\textbf{A}\textbf{B}"
% \thanks is no different in this regard, so shield the last } of each \thanks
% that ends a line with a % and do not let a space in before the next \thanks.
% Spaces after \IEEEmembership other than the last one are OK (and needed) as
% you are supposed to have spaces between the names. For what it is worth,
% this is a minor point as most people would not even notice if the said evil
% space somehow managed to creep in.

% The paper headers
\markboth{Journal of \LaTeX\ Class Files,~Vol.~14, No.~8, August~2015}%
{Shell \MakeLowercase{\textit{et al.}}: Bare Demo of IEEEtran.cls for Computer Society Journals}
% The only time the second header will appear is for the odd numbered pages
% after the title page when using the twoside option.
% 
% *** Note that you probably will NOT want to include the author's ***
% *** name in the headers of peer review papers.                   ***
% You can use \ifCLASSOPTIONpeerreview for conditional compilation here if
% you desire.

% The publisher's ID mark at the bottom of the page is less important with
% Computer Society journal papers as those publications place the marks
% outside of the main text columns and, therefore, unlike regular IEEE
% journals, the available text space is not reduced by their presence.
% If you want to put a publisher's ID mark on the page you can do it like
% this:
%\IEEEpubid{0000--0000/00\$00.00~\copyright~2015 IEEE}
% or like this to get the Computer Society new two part style.
%\IEEEpubid{\makebox[\columnwidth]{\hfill 0000--0000/00/\$00.00~\copyright~2015 IEEE}%
%\hspace{\columnsep}\makebox[\columnwidth]{Published by the IEEE Computer Society\hfill}}
% Remember, if you use this you must call \IEEEpubidadjcol in the second
% column for its text to clear the IEEEpubid mark (Computer Society jorunal
% papers don't need this extra clearance.)

% use for special paper notices
%\IEEEspecialpapernotice{(Invited Paper)}

% for Computer Society papers, we must declare the abstract and index terms
% PRIOR to the title within the \IEEEtitleabstractindextext IEEEtran
% command as these need to go into the title area created by \maketitle.
% As a general rule, do not put math, special symbols or citations
% in the abstract or keywords.
\IEEEtitleabstractindextext{%
\begin{abstract}
In this paper, we propose local information privacy (LIP), and design LIP based mechanisms for statistical aggregation while protecting users' privacy without relying on a trusted third party. The concept of context-awareness is incorporated in LIP, which can be viewed as exploiting of data prior (both in privatizing and post-processing) to enhance data utility. %It enables the design of privacy-preserving mechanisms leveraging the prior distribution, which can potentially achieve a better utility-privacy tradeoff than context-free notions such as Local Differential Privacy (LDP).  
We present an optimization framework to minimize the mean square error of data aggregation while protecting the privacy of each user's input data or a correlated latent variable by satisfying LIP constraints. Then, we study optimal mechanisms under different scenarios considering the prior uncertainty and correlation with a latent variable. Three types of mechanisms are studied in this paper, including randomized response (RR), unary encoding (UE), and local hashing (LH), and we derive closed-form solutions for the optimal perturbation parameters that are prior-dependent. We compare LIP based mechanisms with those based on LDP, and theoretically show that the former achieve enhanced utility. We then study two applications: (weighted) summation and histogram estimation, and show how proposed mechanisms can be applied to each application. Finally, we validate our analysis by simulations using both synthetic and real-world data. Results show the impact on data utility by different prior distributions, correlations, and input domain sizes. Results also show that our LIP-based mechanisms provide better utility-privacy tradeoffs than LDP-based ones.
\end{abstract}

% Note that keywords are not normally used for peerreview papers.
\begin{IEEEkeywords}
privacy-preserving data aggregation, local information privacy, information-theoretic privacy
\end{IEEEkeywords}}

% make the title area
\maketitle

% To allow for easy dual compilation without having to reenter the
% abstract/keywords data, the \IEEEtitleabstractindextext text will
% not be used in maketitle, but will appear (i.e., to be "transported")
% here as \IEEEdisplaynontitleabstractindextext when the compsoc 
% or transmag modes are not selected <OR> if conference mode is selected 
% - because all conference papers position the abstract like regular
% papers do.
\IEEEdisplaynontitleabstractindextext
% \IEEEdisplaynontitleabstractindextext has no effect when using
% compsoc or transmag under a non-conference mode.

% For peer review papers, you can put extra information on the cover
% page as needed:
% \ifCLASSOPTIONpeerreview
% \begin{center} \bfseries EDICS Category: 3-BBND \end{center}
% \fi
%
% For peerreview papers, this IEEEtran command inserts a page break and
% creates the second title. It will be ignored for other modes.
\IEEEpeerreviewmaketitle

\IEEEraisesectionheading{\section{Introduction}\label{sec:introduction}}
% Computer Society journal (but not conference!) papers do something unusual
% with the very first section heading (almost always called "Introduction").
% They place it ABOVE the main text! IEEEtran.cls does not automatically do
% this for you, but you can achieve this effect with the provided
% \IEEEraisesectionheading{} command. Note the need to keep any \label that
% is to refer to the section immediately after \section in the above as
% \IEEEraisesectionheading puts \section within a raised box.

% The very first letter is a 2 line initial drop letter followed
% by the rest of the first word in caps (small caps for compsoc).
% 
% form to use if the first word consists of a single letter:
% \IEEEPARstart{A}{demo} file is ....
% 
% form to use if you need the single drop letter followed by
% normal text (unknown if ever used by the IEEE):
% \IEEEPARstart{A}{}demo file is ....
% 
% Some journals put the first two words in caps:
% \IEEEPARstart{T}{his demo} file is ....
% 
% Here we have the typical use of a "T" for an initial drop letter
% and "HIS" in caps to complete the first word.
\IEEEPARstart{P}{rivacy} issues are crucial in this big data era, as users' data are collected both intentionally or unintentionally by an increasing number of private or public organizations. Most of the collected data is used for ensuring high quality of service, but may also put one's sensitive information at potential risk. For instance, when people are rating movies, their preferences may be leaked; when users are searching for a parking spot nearby using a smartphone, their real locations are uploaded and prone to leakage. Besides the cases where collected data itself is sensitive and causes privacy leakage, non-sensitive data release may also enable malicious inference on one's private attributes: whenever there is a correlation between the collected data and people's private latent attribute, directly releasing it causes privacy leakage. For instance, heartbeat data collected by smartwatch may potentially reveal one's heart disease \cite{Smartwatch}; One can easily infer a target user's home or work location by tracking his daily location trace\cite{10.1007/978-3-540-72037-9_8}; Smart meters can reveal the activities of people inside a home by tracking their electricity, gas, or water usage frequently over time\cite{Smartmeter}. 
It is, therefore, desirable to design privacy-preserving mechanisms providing privacy guarantees without affecting data utility.

Traditional privacy notions such as  $k$-anonymity \cite{SS98} do not provide rigorous privacy guarantees and are prone to various attacks. Nowadays, Differential Privacy (DP) ~\cite{Dwork20061} has become the \textit{de facto} standard for ensuring data privacy in the database community~\cite{Dwork2006} and has been adopted by the U.S. Census in 2020\cite{inproceedingsCensus}. The definition of DP assures that each user's data has minimal influence on the output of statistical queries on a database. In the classical DP setting, a trusted server is assumed to hold all users' data and provide noisy answers to queries. However, organizations or companies collecting users' data may not be trustworthy, and the data storage system may not be secure. As a result, recently, local privacy protection mechanisms have gained attention as the local setting allows data aggregation while protecting each user's data without relying on a trusted third party.

\subsection{Local Privacy Notions}

 In local privacy-preserving data release, individuals perturb their data locally before uploading it. Organizations that want to take advantage of users' data then aggregate over the collected data. { The earliest such mechanism is randomized response (RR)\cite{randomresponse}, which randomly perturbs each user's data. However, the original RR does not have formal privacy guarantees.
Later, Local Differential Privacy (LDP) was proposed as a local variant of DP that bounds the privacy leakage in the local setting\cite{Freudiger:2011:EPR:2186383.2186387}. Many schemes were proposed under the notion of LDP. For example, \cite{Extreme_ldp,rr_ldp,ldp_lalitha}, and Google's RAPPOR \cite{Rappor}.
LDP based data aggregation mechanisms have already been deployed in the real-world.} For example, in June 2016, Apple announced that it would deploy LDP-based mechanisms for data collection \cite{Apple}. However, Tang \textit{et al}. show that although Apple’s deployment ensures that the privacy budget \footnote{The parameters, $\epsilon\ge{0}$, measures the privacy level. A smaller $\epsilon$ corresponds to a higher privacy level.} of each datum submitted to its servers is $1$ or $2$, the overall privacy budget permitted by the system can be as high as $16$. %although each user's perturbation mechanism satisfies LDP, the privacy budget is too large ($\epsilon=16$ for the four initially announced applications of Emojis, New words, Deeplinks, and Lookup Hints) to provide any useful privacy protection. 
Wang \textit{et al}. proposed a variety of LDP protocols for frequency estimation \cite{Tianhao} and compared their performance with Google's RAPPOR. However, for a given reasonable privacy budget, these protocols provide limited utility. Intuitively, compared with the central DP model, it is more challenging to achieve a good utility-privacy tradeoff in the local setting. The main reasons are: (1) LDP requires introducing noise at a significantly higher level than required in the central setting. That is, for a summation/count query, with additive noise privacy-preserving mechanism, a lower bound of noise magnitude of $\Omega(\sqrt{N})$ is required for LDP in order to defend against potential coalitions of compromised users, where $N$ is the number of users. In contrast, only $O(1)$ is required for central DP \cite{Lowerbound}. 
(2)
LDP does not assume a neighborhood constraint on input data, for data with large domain, LDP leads to a significantly reduced utility \cite{Raef}.
\begin{figure*}[htp]
\centering
\includegraphics[width=14cm]{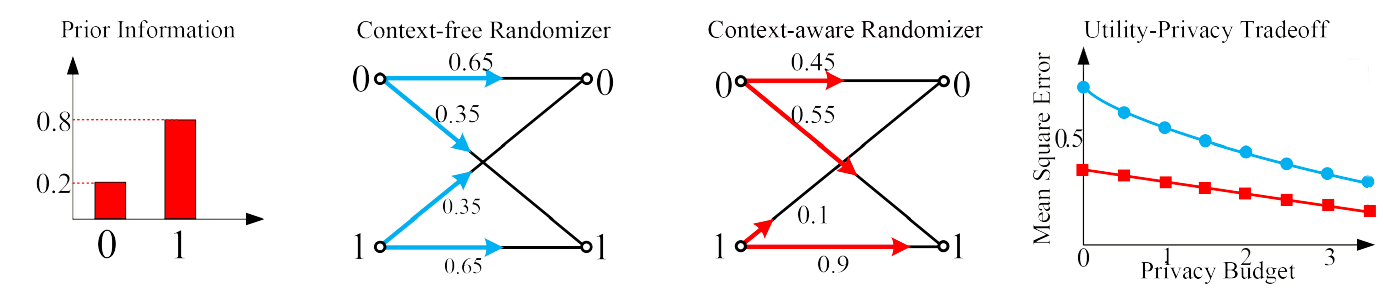}
\centering
\caption{LIP increases utility by explicitly designing perturbation parameters according to prior knowledge.}
\label{fig:example_channel}
\vspace{-13pt}
\end{figure*}

In general, both local and central DP provide strong context-free theoretical guarantees against worst-case adversaries \cite{context}. Context-free means the adversary can possess arbitrary background knowledge of a user's data (except her specific input instance). In other words, the definition of (L)DP is too strong
and regardless of scenarios where the particular context or prior knowledge of the
data is available. Such scenarios exist in many applications. For instance, in Internet of Things (IoT), the prior distribution of context
related to sensor data plays a critical role in distributed data transmission and computation
\cite{IOT}. Another example is location-based services: people have a higher likelihood to be at some locations than others; such as in Paris, people are more likely closer to Eiffel
tower than a coffee shop nearby\cite{Geo}. In mobile-health data collection, background knowledge such as the likelihood of people having certain diseases is available through previously published medical studies\cite{bound}. 
When background information is available, (L)DP fails to capture the explicit
privacy leakage of users or the information gain at the adversary. On the other hand, for a given utility, (L)DP may not always be feasible depending on the privacy budget\cite{freelunch}. Although approximated $(\epsilon,\delta)$-(L)DP is introduced \cite{TCS-042} to realize an achievable mechanism, the non-negative addend $\delta$ could be large enough (close to 1) to provide limited privacy guarantee.

\subsection{Relaxing Local Differential Privacy}

There is a trend among the privacy research community that leverages the background knowledge to relax the definition of DP, and the utility can be increased by explicitly modeling the adversary's knowledge. Privacy notions that consider such prior knowledge are denoted as ``context-aware" privacy notions. For context-aware privacy notions, besides the privacy budget $\epsilon$, the amount of required noise also depends on the prior distribution of the data:  context-dependent privacy mechanisms add noise selectively according to the data prior when most needed so that utility can be enhanced. For example, less noise is required to perturb for data with higher certainty \cite{context,relation}. 
In general the existing context-aware privacy definitions fall into two categories based on either average-case or worst-case guarantees. All information-theoretic privacy notions belong to the former class \cite{ITP1,7498650,aggMIP}. The latter includes Pufferfish \cite{kifer2012rigorous}, Bayes DP\cite{Yang:2015:BDP:2723372.2747643}, Membership privacy\cite{Li:2013:MPU:2508859.2516686}, etc. 
Average-case notions are generally weaker than the latter since they cannot bound the leakage for all the input and output pairs, which may not be easily adopted by the privacy-sensitive users. On the other hand, existing context-aware worst-case privacy notions like Pufferfish and Bayesian DP still follow the same structure of (L)DP -- the maximum ratio between two likelihoods of a certain output given different input data. Since the relationship with prior distribution is not directly captured in the definition, this makes context-aware privacy mechanism design challenging (either high complexity or not easily composable).

\subsection{Local Information Privacy}

In this paper, we make use of the maximum ratio of posterior to prior to capture information leakage in the local setting, denote as local information privacy (LIP). Originally,
information privacy (IP) was proposed in a central setting by Calmon et. al.
\cite{Centrlized_IP}, which requires a trusted curator. %The reason why it provides strong privacy protection lies in IP protects data value privacy (what exactly the data is) and membership privacy (whether the targeted user has participated) simultaneously. Besides the strong definition, 
The main reason that prohibits Centralized IP from being adopted in practice is that the distribution of all users' data is too complex to express or capture, especially for a large-size dataset. In contrast, LIP requires only the prior distribution of one particular user's data, which can be obtained through many approaches in practice.

An illustrative example of why context-aware privacy notions result in increased utility is shown in Fig. \ref{fig:example_channel}, which shows the perturbation mechanisms of context-free (LDP) and context-aware (LIP) notions and the comparison of the mean square errors when collecting private binary data with specific prior. We illustrate the optimal perturbation probabilities for the same privacy budget (epsilon=0.6) under both LDP and LIP privacy notions. Observe that the perturbation channel of LDP is symmetric, while LIP designs perturbation parameters according to the prior knowledge. When the data value is quite certain, it has a smaller probability of flipping the value to increase utility. While when the data takes a value that has a small probability of happening, the mechanism also protects its privacy by a large perturbation probability (a large amount of additive noise). In this example, the probability of flipping the data value through the LDP mechanism is $0.35$ in contrast to $0.2\times{0.55}+0.8\times 0.1=0.19$ of the LIP based mechanism. As a result, LIP leads to an enhanced utility than LDP.

\subsection{Related Work}
In the original paper on differential privacy, Dwork et al. \cite{Dwork2008} defined a notion of ``semantic" privacy that involves comparing the prior and posterior distributions of the database or a user's participation. Since then, similar privacy notions have been investigated in the central setting. Such as, in \cite{Nabar2010}, $\epsilon$-Semantic Privacy is studied, which captures the additional information caused by releasing a contingency table. The privacy is measured by the absolute distance between the prior to posterior ratio and 1, and it allows the ratio to scale linearly with $\epsilon$. In \cite{Kasiviswanathan_Smith_2014}, semantic privacy is redefined by capturing the statistical difference between two posterior beliefs at the adversary. The posterior probabilities are calculated by priors of two neighboring datasets and the same output of the mechanism. In \cite{Rastogi}, privacy is measured by the prior to posterior ratio at the adversary that one user's tuple belongs to a collection of records.  In \cite{Li:2013:MPU:2508859.2516686}, more specifically, $\epsilon$-Membership privacy measures the adversary's prior and posterior beliefs on whether the tuple of the target user belongs to the dataset. However, the privacy notions described above consider the central setting, and the input of the mechanism is a dataset rather than each individual's data. This makes it inconvenient to adapt them into the local setting. To the best of our knowledge, the prior to posterior structure has not yet been thoroughly explored in privacy definitions for the local setting, where each individual releases a privatized answer to an untrusted third party directly. 

To avoid explicitly modeling the adversary's background knowledge, a more robust and practical way to define privacy is to relax the exact prior assumption.  In \cite{bound}, bounded prior differential privacy is studied, which assumes that the real prior distribution comes from a bounded set of probability simplex. In \cite{Rastogi}, it is assumed that the adversary's belief on the targeted individual's membership is upper bounded. Pufferfish privacy \cite{kifer2012rigorous} also assumes bounded knowledge of the adversary. The knowledge is captured by set $\mathcal{P}$, which contains all plausible evolution scenarios of the hidden secret and the input data. As a result, by adjusting the size of $\mathcal{P}$, Pufferfish can be viewed as a generalization of DP while accounting for prior knowledge. In this paper, we also define a bounded set of priors to avoid modeling the adversary's knowledge explicitly.

On the other hand, in many applications, the user's secret information to be protected is different from but correlated with the data being collected. %For example, wearable sensors or fitness trackers can collect data about a person's acceleration, body temperature, heart rate and etc, to infer his/her daily activity pattern or exercise amounts. However, he/she may wish to prevent the inference of sensitive disease information from those data, as such data may lead to discrimination from insurance companies if they obtain the data. For example, smart meters collect people's daily usage of electricity and use statistical information to optimize the grid's operation. Still, each household does not wish to reveal which appliance they are using at a given time. %Similarly, big companies collect people's location data for statistical analysis, but a user may want to hide some specific event/pattern in her location trace or the private social relationship with another user. 
To this end, the privacy notions leveraging latent variable like Pufferfish enables a variety of definitions of data utility, such as principal inertia components \cite{DBLP:journals/corr/CalmonMMVCD17}, data pattern\cite{7930028}, distribution estimation\cite{45810}, \textit{etc}. However, one of the drawbacks of Pufferfish privacy is the difficulty of mechanism design. Recently, in \cite{DBLP:journals/corr/WangSC16}, Wang \textit{et al}. designed a Wasserstein Mechanism, which achieves Pufferfish privacy, but it is computationally inefficient, and the mechanism they proposed is approximated. In this work, we combine the bounded prior set and latent variable into the prior to posterior structure, and we show that LIP only assumes the adversary has access to the statistic of the input data, and the correlation with the latent variable, but not the distribution of latent variables. %We also demonstrate that LIP provides better utility compared to Pufferfish.

To derive the utility-privacy tradeoff, there's a line of work that formulating optimization problems to maximize the utility while subject to certain privacy constraints or doing conversely\cite{Tianhao,relation,Rappor,DBLP:journals/corr/abs-1807-11317,Jian1805:Context}. Firstly, most of them define utility for some specific applications, such as frequency estimation, itemset aggregation, statistic estimation, etc. In this paper, we consider a general type of utility defined by the mean square error of a function of the input and output. We showed that by instantiating it with different functions, the proposed mechanisms could be applied to multiple real-world applications. Secondly, only a few works above provide closed-form optimal solutions for mechanism design. In \cite{Tianhao}, optimization problems are formulated to increase the accuracy in frequency estimation, and different protocols are studied under LDP. However, the utility provided by various mechanisms is limited because no prior information about the data is incorporated into the mechanism. In \cite{DBLP:journals/corr/abs-1807-11317}, the utility optimized LDP mechanism is proposed, which is shown to achieve better utility by exploiting different data input's sensitivity, which is a different type of context than priors. %This work also motivates us to consider background information but from a different aspect. %To the best of our knowledge, we are the first to propose closed-form mechanisms in the context-aware setting. 

\subsection{Main Contributions}
The  main contributions of this paper are listed as follow:

(1)
We propose Local Information Privacy (LIP) for local data release (without a trusted third party), which relaxes the notion of LDP by incorporating prior knowledge and introducing latent variables. We formally derive the relationships between existing privacy definitions and LIP.

(2) We apply LIP to privacy-preserving data aggregation: we present a general framework to estimate a function of the collected data and minimize the mean squared error of the estimation while protecting each individual's privacy by satisfying LIP constraints. We consider three perturbation mechanisms. One can be viewed as a general form of the RR; the other two incorporate unary encoding and local hashing. We derive the optimal mechanisms for different scenarios on prior uncertainty and correlation between input data and latent secret.
%We further consider a latent variable model, where the collected data is correlated with some private latent properties of each user.  

(3) We consider two real-world applications in this paper, including weighted summation and histogram estimation. We demonstrate that considering prior knowledge helps the curator design an unbiased estimator, which significantly improves data utility by post-processing; On the other hand, for the users, we show how proposed mechanisms can be applied to these two applications. Compared with LDP based mechanisms, we show that LIP based mechanisms provide enhanced utility.

(4) We validate our analysis by simulations on both synthetic and real-world datasets (Karosak, a website-click stream data set, and Adult, a survey of census income). We illustrate the impact of data correlation, input data domain, and prior uncertainty on data utility provided by different mechanisms. When compared to LDP based mechanisms, LIP based mechanisms always provide better utility. For input data with a large domain, encoding methods could potentially increase utility than compared to RR. %when, %especially when the prior is not uniformly distributed. In the MIMO case, We show that when the domain size increases, the advantage of the LIP is enhanced.  When compared with central IP, which provides higher utility than LIP,  as the server possesses global information. However, the utility gap is not large. The advantages of the proposed mechanism compared with LDP are twofold: on one hand, users are designing perturbation parameters according to the prior; on the other hand, the adversary is allowed to design prior-related estimators to minimize the expected errors.  

\subsection{Paper Organization}
The remainder of the paper is organized as follows: In Section~\ref{sec:privacy_model}, we introduce the proposed LIP notion and its relationship with other existing privacy notions. In Section~\ref{sec:model}, we introduce the system model and problem formulation. In Section~\ref{tradeoff}, we derive the utility-privacy tradeoff, including model with a fixed prior, model with an uncertain prior. Under each model, encoding based mechanisms are studied. Then, we compare with LDP based model. Finally, we discuss the applications of these models, including weighted summation and histogram estimation. In Section~\ref{sec:sim}, we present the simulation results and compare the utility-privacy tradeoffs provided by different mechanisms under different data domain, data prior, data correlations with different datasets. In Section~\ref{sec:con}, we offer concluding remarks.

\section{Privacy Definitions and Relationships}\label{sec:privacy_model}
In this Section, we first recap several existing privacy notions in the local setting. We then introduce LIP and study its relationships with other notions. In this paper, we focus on discrete-valued data.

\subsection{Privacy Definitions}
Consider a privacy-protection mechanism $\mathcal{M}$ takes input data $X$ and outputs a perturbed version of $Y$. It is assumed that $X$ takes value from a discrete domain $\mathcal{X}$ with the prior distribution of $\theta_X\in\mathcal{P}_{\mathcal{X}}$, where $\mathcal{P}_{\mathcal{X}}$ is the set containing all possible prior distributions on $\mathcal{X}$. In the latent variable setting, denote $G$, which takes value from $\mathcal{G}$ as the hidden secret that is correlated with $X$. Denote $\theta_{XG}\in\mathcal{P}_{\mathcal{X}\mathcal{G}}$ as the joint distribution of $X$ and $G$. Denote $\mathcal{Y}=\text{Range}(\mathcal{M})$ as the domain of $Y$.

%Other related symbols are listed in Table \ref{tbl1}

The context-free LDP definition states that any two inputs from the data domain $\mathcal{X}$ result in the same output with similar probabilities.

\begin{defi}($\epsilon$-Local Differential Privacy (LDP))\cite{Extreme_ldp}\label{def:LDP}
 $\mathcal{M}$  satisfies  $\epsilon$-LDP for some $\epsilon\in{\mathbf{R}^+}$, if $\forall{x, x'\in{\mathcal{X}}}$ and $\forall{y\in{\mathcal{Y}}}$: 
\begin{equation}
    \frac{Pr(Y=y|X=x)}{Pr(Y=y|X=x')}\le{e^{\epsilon}}.
\end{equation}
\end{defi}
LDP provides strong context-free privacy protection, since it provides indistinguishability of input's data-value regardless of the data prior distribution. Context-free notions typically suffer poor utility-privacy tradeoff. We next introduce context-aware privacy definitions. %However, such strong (worst-case) privacy definition also leads to a significant utility loss.

Maximal Information Leakage captures the adversary's ability without assuming a particular accessible prior.

\begin{defi}($\epsilon$-Maximal Information Leakage (MIL))\cite{7460507} The maximal information leakage of $\mathcal{M}$ is defined as: 
\begin{align}
 \mathcal{L}(X\to{Y})=\log \sum_{y\in{\mathcal{Y}}}\max_{x\in{\mathcal{X}}}Pr(Y=y|X=x),
\end{align}
and $\mathcal{M}$ satisfies Maximal Information Leakage privacy if for some $\epsilon\in{\mathbf{R}^+}$: $\mathcal{L}(X\to{Y})\le{\epsilon}$.
\end{defi}
MIL captures the average likelihood probability over all possible $y\in{\mathcal{Y}}$ given the corresponding value of $x$ that maximizes this probability. However, MIL does not provide pairwise protection over all possible values of $x$ and $y$ and hence is relatively weak. 

%When some context of $X$ is available (e.g., prior distribution), one can incorporate context (prior) information into the privacy definition to explicitly model the adversaries' knowledge.
%One such definition is pufferfish privacy, where the adversary's knowledge is defined by a subset of all possible prior distributions. 

Mutual information privacy measures the average information leakage of $X$ contained in $Y$: 

\begin{defi}($\epsilon$-Mutual Information Privacy (MIP))\cite{7498650}
 $\mathcal{M}$  satisfies $\epsilon$-MIP for some $\epsilon\in{\mathbf{R}^+}$, if the mutual information between $X$ and $Y$ satisfies $I(X;Y)\le{\epsilon}$, where $I(X;Y)$ is: 
\begin{small}
\begin{equation}\label{mutualinfo}
    \sum_{x\in{\mathcal{X}},y\in{\mathcal{Y}}}Pr(X=x, Y=y)\log\frac{Pr(X=x,Y=y)}{Pr(X=x)Pr(Y=y)}.
\end{equation}
\end{small}
\end{defi}

\noindent Although MIP is context-aware, it provides relatively weak privacy protection since it only bounds the average information leakage over all possible $x$ and $y$ in the domain. %Some $(x,y)$ pairs may exist, making the ratio of the joint probability to the product of marginal probability large (while the joint probability is small). 

Another context-aware privacy notion that provides pairwise protection over each possible values of $x$ and $y$ is differential identifiability.

\begin{defi}($\epsilon$-Differential Identifiability (DI)) \cite{Lee:2012:DI:2339530.2339695}
$\mathcal{M}$ satisfies  $\epsilon$-DI for some $\epsilon\in{\mathbf{R}^+}$, if $\forall{x, x'\in{\mathcal{X}}}$ and $\forall{y\in{\mathcal{Y}}}$: 
\begin{equation}
    \frac{Pr(X=x|Y=y)}{Pr(X=x'|Y=y)}\le{e^{\epsilon}}.
\end{equation}
\end{defi}
\noindent The operational meaning of DI is, given the output $y$, the adversary cannot tell whether the original data(set) is $x$ or $x'$. DI can be directly adapted in the local setting, and is context-aware due to the dependence on the data prior:
\begin{equation*}
    \frac{Pr(Y=y|X=x)Pr(X=x)}{Pr(Y=y|X=x')Pr(X=x')}\le{e^{\epsilon}}.
\end{equation*}
%Observe that, the likelihood of $y$ given $x$ (the perturbation parameter) depends on the prior of $x$. 
One major drawback of DI is the difficulty of designing practical mechanisms, as DI measures the ratio of posteriors, which means the likelihood ratio (perturbation parameters) of any two different inputs is dependent on the prior ratio. For example, if $\frac{Pr(X=x)}{Pr(X=x')}$ is small, DI requires $\frac{Pr(Y=y|X=x')}{Pr(Y=y|X=x)}$ to be large for all $y\in{\mathcal{Y}}$. However, we know that $\sum_{y\in{\mathcal{Y}}}Pr(Y=y|X=x')=\sum_{y\in{\mathcal{Y}}}Pr(Y=y|X=x)=1$.

Pufferfish privacy is originally proposed in the central setting\cite{kifer2012rigorous}, and here we adapt it into the local setting where $X$ and $Y$ stand for user's input and output data, respectively.
\begin{defi}[Local Pufferfish Privacy]

Given a set of potential secrets $\mathcal{G}$, a set of discriminative pairs $\mathcal{G}_{pairs}$, a set of data evolution scenarios $\mathcal{P}_{\mathcal{X}\mathcal{G}}$,  $\mathcal{M}$ satisfies $\epsilon$-Pufferfish ($\mathcal{G}$, $\mathcal{G}_{pairs}$, $\mathcal{P}_{\mathcal{X}\mathcal{G}}$) privacy, for some $\epsilon\in{\mathbf{R}^+}$ if
\begin{itemize}
    \item for all possible inputs $x\in{\mathcal{X}}$, $y \in \mathcal{Y}$,
    \item for all pairs $(g_i,g_j)\in\mathcal{G}_{pairs}$ of potential secrets,
    \item for all distributions $\theta_{XG}\in \mathcal{P}_{\mathcal{X}\mathcal{G}}$ that $Pr(g_i|\theta_{XG})\neq{0}$ and $Pr(g_j|\theta_{XG})\neq{0}$,
\end{itemize}
the following holds:
\begin{equation}
    e^{-\epsilon}\le\frac{Pr(\mathcal{M}(x)=y|\theta_{XG},g_i)}{Pr(\mathcal{M}(x)=y|\theta_{XG},g_j)}\le{e^{\epsilon}}.
\end{equation}
\end{defi}
\noindent Note that, when the set $\mathcal{P}_{\mathcal{X}\mathcal{G}}$ spans all possible joint distributions including the case when $X=G$. Then for such a special case, Local Pufferfish becomes equivalent to LDP.

Motivated by central information privacy\cite{Centrlized_IP}, to provide a pairwise constraint on the information leakage of secret $G$ through $Y$ in the local setting, we consider a bound on the ratio between the prior and posterior, which leads to the notion of local information privacy. Denote $\theta_X$ as the prior of input data $X$, $T_{GX}$ as the conditional probability of $Pr(X=x|G=g)$. Denote $\theta_{XG}$ as a fixed data evolution scenario: $\theta_{XG}=\{\theta_X, T_{GX}\}$. The definition of Local Information Privacy is defined as:
\begin{figure*}[t]
\centering 
\subfigure[A summary of different privacy notions]
{ \includegraphics[width=0.38\textwidth]{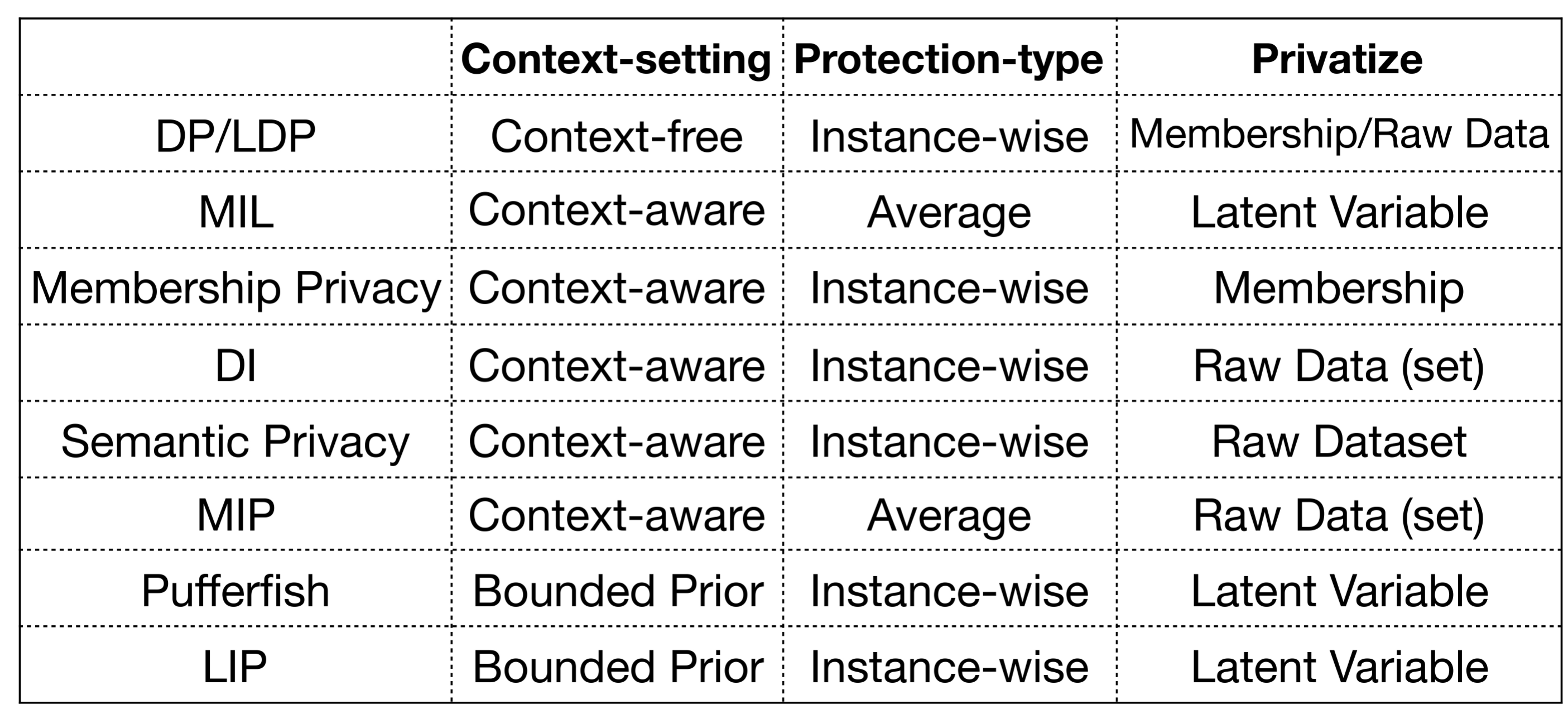}
\label{notions} }
\subfigure[{Relationship between LIP and other privacy notions.}] 
{ \includegraphics[width=0.58\textwidth]{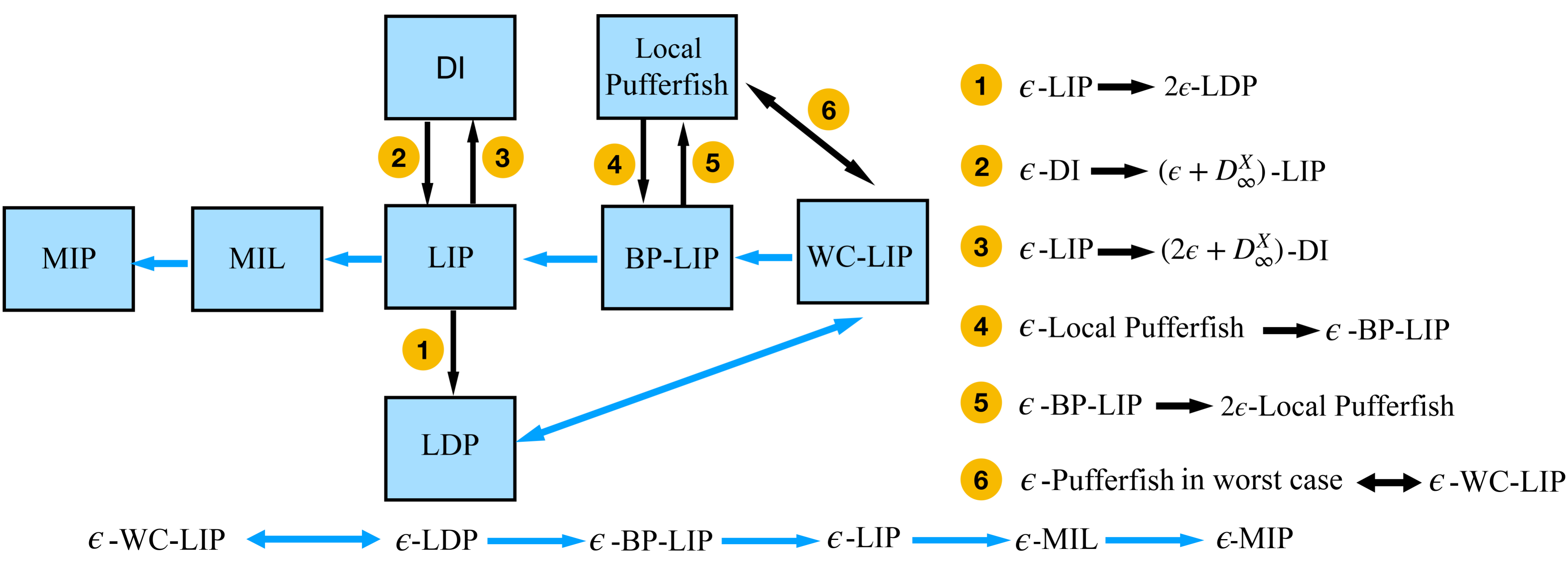}
\label{fig:relationship} }
\caption{A summary of different privacy notions and the relationships among them.}
\vspace{-10pt}
\label{fig:notions}
\end{figure*}

% \begin{figure*}[t]
%\centering
%\includegraphics[width=12cm]{relations.pdf}
%\setlength{\abovecaptionskip}{-3pt}
%\setlength{\belowcaptionskip}{-3pt}
%\caption{Relationship between LIP and other privacy notions.}
%\vspace{-10pt}
%\centering
%\label{fig:relationship}
%\end{figure*}

\begin{defi}($\epsilon$-Local Information Privacy (LIP))\label{def:LIP}
Given a set of potential secrets $\mathcal{G}$, given a set of data evolution scenarios $\mathcal{P}_{\mathcal{X}\mathcal{G}}$, $\mathcal{M}$ satisfies $\epsilon$-LIP for some $\epsilon\in{\mathbf{R}^+}$, if $\forall{g\in{\mathcal{G}}}$, $\forall{\theta_{XG}\in\mathcal{P}_{\mathcal{X}\mathcal{G}}}$ and $\forall{y\in{\mathcal{Y}}}$:
\begin{equation}\label{eq1}
    e^{-\epsilon}\le{\frac{Pr(G=g|\theta_{XG})}{Pr(G=g|Y=y,\theta_{XG})}}\le{e^{\epsilon}}.
\end{equation}
There are three cases regarding the range of $\mathcal{P}_{\mathcal{X}\mathcal{G}}$:
\begin{itemize}
    \item When $\mathcal{P}_{\mathcal{X}\mathcal{G}}$ includes one given prior distribution, LIP becomes LIP for fixed prior $\theta_{{X}{G}}$;
    \item When $\mathcal{P}_{\mathcal{X}\mathcal{G}}$ includes all possible priors, LIP becomes Worst-Case-LIP (WC-LIP);
    \item When $\mathcal{P}_{\mathcal{X}\mathcal{G}}$ includes a subset of all possible priors, LIP becomes Bounded-Prior-LIP (BP-LIP).
\end{itemize}
\end{defi}

The operational meaning of LIP is: By observing any output $y$, the change of the belief about the latent variable taking any specific value compared with the prior distribution is not increased or decreased too much. Note that, when $\epsilon$ is small, this ratio is bounded close to 1, which means the output $Y$ is independent of the latent secret $G$.

Note that, LIP assumes the adversary is accessible to the statistic of the input data and the conditional probability of $Pr(X=x|G=g)$, but may not be accessible to the prior of the secret $G$. Such assumption also helps avoid modeling the adversary's ability explicitly. Moreover, it also enables LIP to protect either discrete or continuous-valued secret $G$.

LIP also guarantees that any post-processing on the output cannot further increase privacy leakage.
\begin{lem}
When $G\to X\to{Y}\to{Z}$ forms a Markov chain, if for any $y\in\mathcal{Y}$ and $g\in\mathcal{G}$, $\mathcal{M}$ satisfies $\epsilon$-LIP, then $\mathcal{M}$ also guarantees $\epsilon$-LIP for any $z\in\mathcal{Z}$ and $g\in\mathcal{G}$.
\end{lem}
\begin{proof}
As $Pr(G=g|Z=z)=\sum_{y\in\mathcal{Y}}Pr(G=g|Y=y)Pr(Y=y|Z=z)$, which is bounded between ${\min_{y\in\mathcal{Y}}}Pr(G=g|Y=y)$ and ${\max_{y\in\mathcal{Y}}}Pr(G=g|Y=y)$. Since the ratio of $Pr(G=g)/Pr(G=g|Y=y)$ is bounded by $[e^{-\epsilon},e^{\epsilon}]$ for all $g\in\mathcal{G},y\in\mathcal{Y}$, the ratio of $Pr(G=g)/Pr(G=g|Z=z)$ is also bounded by $[e^{-\epsilon},e^{\epsilon}]$
\end{proof}
Such property enables the data curator to do further data mining, without increasing the privacy leakage. Compared to other context-aware definitions, LIP (including BP-LIP and WC-LIP) models the prior attainability comprehensively, including the scenarios where the prior is uncertain, (WC-LIP can be viewed as context-free). %Similar to LDP, WC-LIP also provides context-free privacy protection against the worst-case adversary. In the following Section, we theoretically show that WC-LIP is equivalent to LDP. On the other hand, BP-LIP provides a connection between context-free and context-aware notions, as we can always define the size of the measurable set $\mathcal{P}_X$ to include all plausible prior distributions. If the set contains all possible priors, BP-LIP is equivalent to WC-LIP. On the other hand, if the set contains only one possible prior, BP-LIP is equivalent to LIP. %In general, BP-LIP based mechanisms provide robustness when the designer of the mechanism has incomplete measurement over the adversary's ability. 
%In \cite{8761176}, BP-LIP and WC-LIP based mechanisms are presented and discussed.

% Definition (\ref{def:LIP}) can be viewed as the localized version of Information Privacy in the central setting \cite{Centrlized_IP}, and the main differences are in the definitions of input and output. Again, here $X$ and $Y$ stand for each user's input and output variables, respectively. 
 
%Denote the prior of $X$ as $\theta_X$, then 
 
%  The first one is worst-case-LIP:
% \begin{defi}($\epsilon$-Worst-Case Local Information Privacy (WC-LIP))\label{def:WC-LIP} 
%Denote $\mathcal{P}_x$ as the space of all possible prior distributions of $X$, then, a mechanism $\mathcal{M}$ which takes input $X$ and outputs $Y$ satisfies $\epsilon$-WC-LIP for some $\epsilon\in{\mathbf{R}^+}$, if $\forall{x\in{\mathcal{X}},y\in{\mathcal{Y}}},\forall{\theta_X\in{\mathcal{P}_x}}$, \eqref{eq1} is satisfied.
%\end{defi}
 
%Other than these two extreme cases of LIP and WC-LIP, we next present another LIP definition that defenses against adversaries with prior knowledge from a bounded measurable set.

%\begin{defi}($\epsilon$-Bounded Prior Local Information Privacy (BP-LIP))\label{def:BP-LIP} Denote $\mathcal{P}^{bp}_x$ as a subset of $\mathcal{P}_x$, then
%a mechanism $\mathcal{M}$ which takes input $X$ and outputs $Y$ satisfies $\epsilon$-BP-LIP for some $\epsilon\in{\mathbf{R}^+}$, if $\forall{x\in{\mathcal{X}},y\in{\mathcal{Y}}},\forall{\theta_X\in{\mathcal{P}^{bp}_x}}$, \eqref{eq1} is satisfied.
%\end{defi}

\subsection{Relationships with Existing Definitions}

\subsubsection{LIP v.s. LDP}
Since LDP does not assume a latent variable, to make a fair comparison between LIP and LDP, we assume the input $X$ is private, i.e., $G=X$. Then, the following relationship holds between fixed-prior LIP and LDP:  $\epsilon$-LIP implies
 $2\epsilon$-LDP and $\epsilon$-LDP implies $\epsilon$-LIP (proof is shown in\cite{Jian1805:Context}).  This implies that  $\epsilon$-LIP is a more relaxed privacy notion than $\epsilon$-LDP. However, it is stronger than $2\epsilon$-LDP. %Intuitively, LIP relaxes LDP as LDP defenses against adversaries with arbitrary data prior v.s. a fixed prior of LIP. %In terms of mechanism functionality, LDP based mechanisms release $y$ for every input $x$ with the same probabilities (except for $y=x$), no matter what the prior is\cite{Tianhao}. On the other hand, for inputs with different priors, LIP perturbs with different noisy distributions. 
 
 When comparing the relationship between $\epsilon$-WC-LIP and $\epsilon$-LDP, we have $\epsilon$-WC-LIP is equivalent to $\epsilon$-LDP (proof is shown in \cite{8761176}). Intuitively, these two definitions are equivalent because both of them assume worst-case (context-free) priors. Then the relationship between LDP and BP-LIP is straightforward: $\epsilon$-BP-LIP is sandwiched between $\epsilon$-LDP and $\epsilon$-LIP. As a result, LIP, BP-LIP, and WC-LIP can be viewed as context-aware versions of LDP with different assumptions on the data priors. 
 We further compare the utility privacy tradeoff between these two definitions in terms of optimal mechanism design in Sec. \ref{sec:LDP}. %For example, when a user with $Pr(X=1)=0.99$, if he holds $X=1$, which means his real data is consistent with the prior knowledge. As LIP bounds on the ratio between prior and posterior, only a small amount of noise is required to make the posterior probability similar to its prior; if he holds $X=0$, which has a small prior to happen, he also has a large probability to output 1 to make the posterior probability small.

\subsubsection{LIP v.s. Local Pufferfish}
%Since in the definition of Local Pufferfish, $X$ is correlated with latent variable $G$, to make a fair comparison with LIP, we consider the same situation, where $Y$ is a noisy version of $X$, which is correlated with a private latent variable $G$. Denote $\theta_{XG}\in{\mathcal{P}_{XG}}$ as the joint distribution of $X$ and $G$, where $\mathcal{P}_{XG}$ is the set of possible joint distributions of $X,G$. Then, the definition of $\epsilon$-LIP with latent variable for some $\epsilon\in\mathbf{R}^+$ can be written as: 

 %$\forall\theta_{XG}\in{\mathcal{P}_{XG}}$, $\forall{g\in\mathcal{G}}$ and $\forall{y\in\mathcal{Y}}$:
%\begin{equation}\label{eq:latent}
%    e^{-\epsilon}\le{\frac{Pr(G=g|\theta_{XG})}{Pr(G=g|Y=y,\theta_{XG})}}\le{e^{\epsilon}}.
%\end{equation}

%In WC-LIP and BP-LIP, the universe of joint distributions of $X,G$ is denoted as $\mathcal{P}_{XG}$ and $\mathcal{P}^{bp}_{xg}$ respectively,
We next compare LIP (BP-LIP, WC-LIP) with Local Pufferfish privacy according to different scenarios of $\mathcal{P}_{\mathcal{X}\mathcal{G}}$. The results in the next lemma follow from the proof of the relationship between LIP and LDP.
\begin{lem} The relationship between $\epsilon$-LIP and $\epsilon$-Local Pufferfish can be described as follow:
\begin{itemize} 
    \item $\epsilon$-WC-LIP is equivalent to $\epsilon$-Local Pufferfish when $\mathcal{P}_{\mathcal{X}\mathcal{G}}$ includes all possible $\theta_{XG}$;
    \item $\epsilon$-Local Pufferfish implies $\epsilon$-BP-LIP, and $\epsilon$-BP-LIP implies $2\epsilon$-Local Pufferfish when $\mathcal{P}_{\mathcal{X}\mathcal{G}}$ includes a subset of all possible prior distributions of $\theta_{XG}$.
\end{itemize}
\end{lem}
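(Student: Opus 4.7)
The plan is to adapt the Bayes-rule argument underlying the $\epsilon$-LIP versus $\epsilon$-LDP relationship to the latent-variable setting, replacing the input $x$ by the latent value $g$ and the direct mechanism likelihood $Pr(Y=y|X=x)$ by the marginalized conditional $Pr(Y=y|G=g,\theta_{XG})=\sum_{x} Pr(X=x|G=g,\theta_{XG})\,Pr(Y=y|X=x)$, which is exactly the quantity $Pr(\mathcal{M}(X)=y|\theta,g)$ appearing in Local Pufferfish. The first step is to apply Bayes' rule to (\ref{eq:latent}) to rewrite $\epsilon$-LIP in the equivalent likelihood-ratio form
\begin{equation*}
e^{-\epsilon}\le \frac{Pr(Y=y|\theta_{XG})}{Pr(Y=y|G=g,\theta_{XG})}\le e^{\epsilon},
\end{equation*}
which is directly comparable with the pairwise likelihood ratio in Local Pufferfish.

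For the forward direction ``$\epsilon$-Local Pufferfish $\Rightarrow$ $\epsilon$-BP-LIP,'' I would expand the marginal as the convex combination $Pr(Y=y|\theta_{XG})=\sum_{g'} Pr(G=g'|\theta_{XG})\,Pr(Y=y|G=g',\theta_{XG})$ and bound each ratio $Pr(Y=y|g',\theta)/Pr(Y=y|g,\theta)$ using Pufferfish; since the weights $Pr(G=g'|\theta_{XG})$ sum to one, the LIP ratio falls between $e^{-\epsilon}$ and $e^{\epsilon}$. For the reverse ``$\epsilon$-BP-LIP $\Rightarrow$ $2\epsilon$-Local Pufferfish,'' I would split the Pufferfish ratio as
\begin{equation*}
\frac{Pr(Y=y|g_i,\theta)}{Pr(Y=y|g_j,\theta)}=\frac{Pr(Y=y|g_i,\theta)/Pr(Y=y|\theta)}{Pr(Y=y|g_j,\theta)/Pr(Y=y|\theta)},
\end{equation*}
and apply LIP to the numerator and denominator separately, producing the factor-of-two loss $e^{2\epsilon}$.

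For the equivalence ``$\epsilon$-WC-LIP $\Leftrightarrow$ $\epsilon$-Local Pufferfish'' under the ``all priors'' assumption on $\mathcal{P}_{XG}$, one direction is a direct specialization of the argument above with the same $\epsilon$. The other direction requires a tightening argument: for any pair $(g_i,g_j)$, I would construct a family of priors $\theta^{(\delta)}\in\mathcal{P}_{XG}$ that place mass $1-\delta$ on $g_j$ and $\delta$ on $g_i$, while holding the conditionals $Pr(X|G=g_i)$ and $Pr(X|G=g_j)$ fixed across the family. Then $Pr(Y=y|\theta^{(\delta)})$ is a $\delta$-weighted average of $Pr(Y=y|g_j,\theta)$ and $Pr(Y=y|g_i,\theta)$, and applying WC-LIP at $\theta^{(\delta)}$ followed by $\delta\to 0$ yields $Pr(Y=y|g_j,\theta)/Pr(Y=y|g_i,\theta)\le e^{\epsilon}$; swapping the roles of $g_i$ and $g_j$ gives the symmetric bound, which is Pufferfish. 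The main obstacle I anticipate is precisely this tightening step: I need each $\theta^{(\delta)}$ to remain in $\mathcal{P}_{XG}$ with $\delta>0$ so that conditioning on $g_i$ is well defined, I need $Pr(Y=y|g_k,\theta^{(\delta)})$ to be independent of $\delta$ (which is why holding the conditionals $Pr(X|G)$ fixed is crucial), and I need the limiting bound to collapse cleanly to $e^{\epsilon}$ rather than to a larger constant. Once these details are set up carefully, the remaining algebra is the same Bayes-rule manipulation as in the $\epsilon$-LIP versus $\epsilon$-LDP proof referenced in the excerpt, and should go through without further complication.
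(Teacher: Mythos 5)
Your proposal is correct and follows essentially the same route the paper intends: the paper does not spell out a proof but states that the lemma ``follows from the proof of the relationship between LIP and LDP,'' i.e.\ the Bayes-rule rewriting of the posterior-to-prior ratio as $Pr(Y=y|\theta)/Pr(Y=y|G=g,\theta)$, the convex-combination bound for the forward direction, and the ratio-splitting step that incurs the factor $e^{2\epsilon}$. The only cosmetic difference is in the equivalence case: the paper routes it through the chain $\epsilon$-WC-LIP $\Leftrightarrow$ $\epsilon$-LDP $\Leftrightarrow$ $\epsilon$-Pufferfish-over-all-priors (using the degenerate case $X=G$), whereas you prove tightness directly with a limiting family of near-point-mass priors on $G$ holding the conditionals $Pr(X|G)$ fixed --- the same underlying worst-case-prior trick.
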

\noindent When $\mathcal{P}_{\mathcal{X}\mathcal{G}}$ includes all possible prior distributions of $X$ and $G$, $\epsilon$-Local Pufferfish considers $X=G$ (where the leakage is maximized), which is equivalent to $\epsilon$-LDP.
%When $\mathcal{P}$ includes a subset of all possible prior distributions of $X$ and $G$, the proof of the relationship between BP-LIP and Local Pufferfish follows the process of deriving the relationship between LIP and LDP.

In summary, Local Pufferfish relaxes LDP by defining a bounded set of possible prior distributions. Since the structure of the ratio of two likelihoods in the definition of LDP does not allow for the incorporation of prior knowledge, Pufferfish further extends it by a correlated latent variable. This definition is more general in terms of operational meaning than only protecting the input. However, it also comes with difficulties in mechanism design compared to LDP, as the values of $Pr(Y=y|G=g)$ averages over all the likelihood probabilities of $Pr(Y=y|X=x)$, which are the perturbation parameters.% On the other hand, LIP relaxes LDP without the ``differential structure" in the definition, thus the prior knowledge can be directly leveraged in the mechanism. %We further study the model with a latent variable in Sec. \ref{sec:hidden}.

\subsubsection{LIP v.s. Other Privacy Notions}

We next compare the relationship between LIP and MIP, MIL and DI. Since these definitions do not assume latent variable or bounded prior set, we simplify the definition of LIP by: given the prior $\theta_X$, a mechanism $\mathcal{M}$ satisfies $\epsilon$-LIP for some $\epsilon\in{\mathbf{R}^+}$ if $\forall{x\in\mathcal{X}}$, $y\in\mathcal{Y}$:
\begin{equation}
    e^{-\epsilon}\le{\frac{Pr(Y=y)}{Pr(Y=y|X=x)}}\le{e^{\epsilon}}.
\end{equation}

\noindent Then, $\epsilon$-LIP provides stronger privacy guarantee than $\epsilon$-MIP, since $\frac{Pr(X=x,Y=y)}{Pr(X=x)Pr(Y=y)}=\frac{Pr(X=x|Y=y)}{Pr(X=x)}\le{e^{\epsilon}}$. $\epsilon$-LIP also implies $\epsilon$-MIL, as $\max_{x\in{\mathcal{X}}}Pr(Y=y|X=x)\le{Pr(Y=y)e^{\epsilon}}$. Intuitively, among LIP, MIP and MIL, only LIP provides pairwise protection over each possible realization of $x$ and $y$. 
 To compare the relationship between LIP and DI, we first define the maximal ratio of two prior probabilities of $X$ as $D^{{X}}_{\infty}=\max_{x,x'\in{\mathcal{X}}}\log{\frac{Pr(X=x)}{Pr(X=x')}}$, then, the relationship between LIP and DI follows the next lemma with proof provided in Appendix A of the supplementary document.
 \begin{lem}
 The relationship between LIP and DI is:  $\epsilon$-LIP implies $(2\epsilon+D^{{X}}_{\infty})$-DI and $\epsilon$-DI implies $(\epsilon+D^{{X}}_{\infty})$-LIP.
 \end{lem}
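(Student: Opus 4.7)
The plan is to prove each direction separately by writing both LIP and DI in a common form using Bayes' rule, then exploiting that the ratio of two priors is bounded by $e^{D^X_\infty}$ by definition.

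For the direction $\epsilon$-LIP $\Rightarrow (2\epsilon + D^X_\infty)$-DI, I would start by unpacking the LIP bounds into the two-sided inequality $e^{-\epsilon} Pr(X=x) \le Pr(X=x\mid Y=y) \le e^{\epsilon} Pr(X=x)$, apply the upper bound to the numerator and the lower bound to the denominator of the DI ratio, and obtain $\frac{Pr(X=x\mid Y=y)}{Pr(X=x'\mid Y=y)} \le e^{2\epsilon}\frac{Pr(X=x)}{Pr(X=x')}$. Then I would invoke $\max_{x,x'}\log\frac{Pr(X=x)}{Pr(X=x')}=D^X_\infty$ to conclude the bound $e^{2\epsilon+D^X_\infty}$.

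For the reverse direction $\epsilon$-DI $\Rightarrow (\epsilon + D^X_\infty)$-LIP, the key identity to use is $\frac{Pr(X=x)}{Pr(X=x\mid Y=y)} = \frac{Pr(Y=y)}{Pr(Y=y\mid X=x)}$, which reduces LIP to bounding a ratio of marginals to likelihoods. I would first translate DI into a bound on likelihoods via Bayes' rule, namely $\frac{Pr(Y=y\mid X=x')}{Pr(Y=y\mid X=x)} \le e^{\epsilon}\frac{Pr(X=x)}{Pr(X=x')}$. Then I would expand $Pr(Y=y) = \sum_{x'} Pr(Y=y\mid X=x')Pr(X=x')$ and use the elementary fact that a convex combination lies between its minimum and maximum entries, to obtain
\begin{equation*}
\min_{x'}\frac{Pr(Y=y\mid X=x')}{Pr(Y=y\mid X=x)} \le \frac{Pr(Y=y)}{Pr(Y=y\mid X=x)} \le \max_{x'}\frac{Pr(Y=y\mid X=x')}{Pr(Y=y\mid X=x)}.
\end{equation*}
Substituting the DI-derived likelihood bound and simplifying via $\max_{x'}\frac{Pr(X=x)}{Pr(X=x')}\le e^{D^X_\infty}$ and $\min_{x'}\frac{Pr(X=x)}{Pr(X=x')}\ge e^{-D^X_\infty}$ should produce the desired two-sided bound $e^{-(\epsilon+D^X_\infty)} \le \frac{Pr(X=x)}{Pr(X=x\mid Y=y)} \le e^{\epsilon+D^X_\infty}$.

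The main obstacle is the DI $\Rightarrow$ LIP direction: DI controls pairwise posterior ratios, but LIP requires controlling a posterior-to-prior ratio, whose denominator $Pr(Y=y)$ aggregates over \emph{all} inputs. The crucial observation that makes this work is that this aggregate is a probability-weighted convex combination of conditionals, so the weighted average is trapped between the minimum and maximum of the ratios, allowing DI's pairwise guarantee to propagate to the marginal. After that step, the extra factor $e^{D^X_\infty}$ appears naturally and the computation becomes routine.
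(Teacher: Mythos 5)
Your proposal is correct and follows essentially the same route as the paper's proof in Appendix A: for the forward direction both arguments apply the two-sided LIP bound to the numerator and denominator of the DI ratio and absorb the prior ratio into $e^{D^X_\infty}$, and for the reverse direction both convert DI into a likelihood-ratio bound of $e^{\epsilon+D^X_\infty}$ and then bound $Pr(Y=y)=\sum_{x'}Pr(Y=y\mid X=x')Pr(X=x')$ by that same factor times $Pr(Y=y\mid X=x)$ (your ``convex combination trapped between min and max'' phrasing is just a cleaner statement of the paper's term-by-term substitution into the sum). The only cosmetic difference is that you state the forward direction directly on posteriors while the paper first rewrites them as likelihoods via Bayes' rule; the content is identical.
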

The characteristics, relationships, and order among different privacy notions are summarized in Fig. \ref{fig:notions}.
So far, if a mechanism satisfies $\epsilon$-LIP, it implies $\epsilon$-MIP, $\epsilon$-MIL, $2\epsilon$-LDP, $2\epsilon
$-Pufferfish and $(2\epsilon+D^{{X}}_{\infty})$-DI. The main reasons that we choose to study LIP instead of other notions are listed as follows: (1) LIP is more amenable to incorporate prior knowledge to design mechanisms than other context-aware notions. (2) Compared to context-free notions, LIP based mechanisms achieve much higher utility. 

In the following sections, we address how to design LIP based mechanisms according to the prior knowledge, and how LIP based mechanisms improve the utility-privacy tradeoff for different types of applications.

\section{Models  and Problem Formulation}\label{sec:model}
\subsection{System and Threat Models}
%Before describing the model setup, we briefly introduce the notations used throughout this paper. 
\begin{table}[t]\label{tbl1}
\caption{List of symbols}
\scriptsize
\centering
\begin{tabular}{c{l} c{l} c{l} c{l}}
\hline
$\mathcal{R}$ & The universe of raw data values & $R$ & Raw data\\
$\theta$ & Prior distribution & $G$ & Private latent variable\\
$\mathcal{X}$ & The universe of input values & $X$ & Input random variable\\
%$X$ & Input random variable\\
%$\theta_X$ & Prior distribution on input & $T_{GR}$ & Correlation between $R$ and $G$\\
%$G$ & Private latent variable\\
$T$ & Correlation with latent variable& $\bar{X}$ &  Set of input data\\
%$\bar{X}$ &  Set of input data\\
$Y$ & Output random variable &  $\bar{Y}$ & Set of output data  \\
%$i$ & Index User ($i\in{\{1,2,3,...,N\}}$)\\
$N$ & Total number of users & $\mathcal{M}$ & Privatizing mechanism\\
%$\mathcal{M}$ & Privacy preserving mechanism\\
$\mathbf{q}$ & Set of perturbation parameters&$f(\cdot)$ & Aggregation function\\
%$f(\cdot)$ & Aggregation function\\
$\hat{X}$ & Estimator at the curator& $\hat{\mathbf{S}}$ & Aggregated result\\
%$\hat{\mathbf{S}}$ & Minimized MSE estimation\\ 
$\epsilon$ & Privacy budget & $U$ & Utility measurement\\
%$U$ & Utility measurement\\
$\mathcal{E}$ & Mean square error function & $\mathcal{T}$ & Feasible region of $\mathbf{q}$\\
%$\mathcal{T}$ & Feasible region of $\mathbf{q}$ \\
\hline
\end{tabular}
\vspace{-10pt}
\end{table}

Consider a data aggregation system with  $N$ users and a data curator. Each user possesses discrete-valued data $R_i\in\mathcal{R}$, with the prior distribution of $\theta^i_R$, which can be specified by $P^i_r=Pr(R_i=r)$, where $i\in\{1,2,..,N\}$ is the user index. It is assumed that $R_i$s  are independent of each other (and may have different distributions). Note that, each $R_i$ may be different from but correlated with some private hidden secret $G_i\in\mathcal{G}$. Denote $T^i_{GR}$ as the conditional probability of $R_i$ given $G_i$, and $\theta^i_{RG}=\{\theta^i_R,T^i_{GR}\}$. Denote $\mathcal{P}^i_{\mathcal{R}\mathcal{G}}$ as the bounded set including all possible $\theta^i_{RG}$. {To answer some query, each user locally generates data $X_i$ from $R_i$ by a query-dependent function $f_i$, i.e., $X_i=f_i(R_i)$.  It is assumed that $f_i$ is surjective, i.e., for any $x\in\mathcal{X}$, there is at least one $r\in\mathcal{R}$, s.t., $f_i(r)=x$. Then, the prior distribution of $\theta^i_X$ can be calculated by the prior of $\theta^i_R$ according to the local function $f_i$, and can be specified by $P^i_x=Pr(X_i=x)$. Similarly, the correlation between $X_i$ and $G_i$, $T^i_{GX}$ can be obtained by $T^i_{GR}$ and $f_i$. Denote $\theta^i_{XG}=\{\theta^i_X,T^i_{GX}\}$, and $\mathcal{P}^i_{\mathcal{X}\mathcal{G}}$ as the bounded set including all possible $\theta^i_{XG}$.
To avoid potential privacy leakage, before publishing $X_i$, each user locally perturbs it by a privacy-preserving mechanism $\mathcal{M}_i$. The output is denoted as  $Y_i\in\mathcal{Y}$.} The mechanism maps each possible input to each possible output with a certain probability (perturbation parameter). After receiving each perturbed data, the curator is allowed to further estimate and compute a statistical function of the collected data.  The system model is depicted in Fig. \ref{fig:Model}.

The curator is considered red{honest but curious} due to both internal  and external threats. On one hand, users' private data is profitable,  and companies can be interested in user tracking or selling their data. On the other hand, data breaches happen from time to time due to hacking activities.   The curator aims at
performing accurate estimations using all the information above, but is also interested in inferring each user’s hidden secret $G_i$. Denote the true aggregated result by $S=f(\bar{R})$, where $\bar{R}=\{R_1,R_2,...,R_N\}$. { Later {we} discuss the relationship between $f$ operated at the curator and $f_i$ conducted by each user.} For different applications of data aggregation, the definition of $f(\cdot)$s varies. In this paper, two applications are considered: 
\begin{itemize}
\item Weighted summation: the curator is interested in finding the summation over users' data:  ${S}=\sum^N_{i=1}{(c_iR_i+b_i)}$. When each $c_i=1$ and $b_i=0$, the application is equivalent to a direct summation, which is useful to find the average value;
\item  Histogram estimation: the curator is interested in estimating how many people possess each of the data category in $\mathcal{R}$, or classifying according to users' data value. ${S}$ in histogram is a set of ``categorized" data: $\{S_{1},S_{2},...,S_{|\mathcal{X}|}\}$, such that, $\forall{{k}\in\mathcal{R}}$, $S_{k}=\sum^N_{i=1}\mathbbm{1}_{\{R_i=k\}}$, where $\mathbbm{1}_{\{a=b\}}$ is an indicator function, which is 1 if $a=b$; 0 if $a\neq{b}$. %When there are only two categories, such application is also known as frequency estimation\cite{Tianhao}.
\end{itemize}
The curator (adversary) observes all the users' outputs $\bar{Y}=\{Y_1,Y_2,...,Y_N\}$ and tries to obtain an estimation of ${S}$ using estimator $\hat{S}$. 

 \begin{figure}[t]
\centering
\includegraphics[width=9cm]{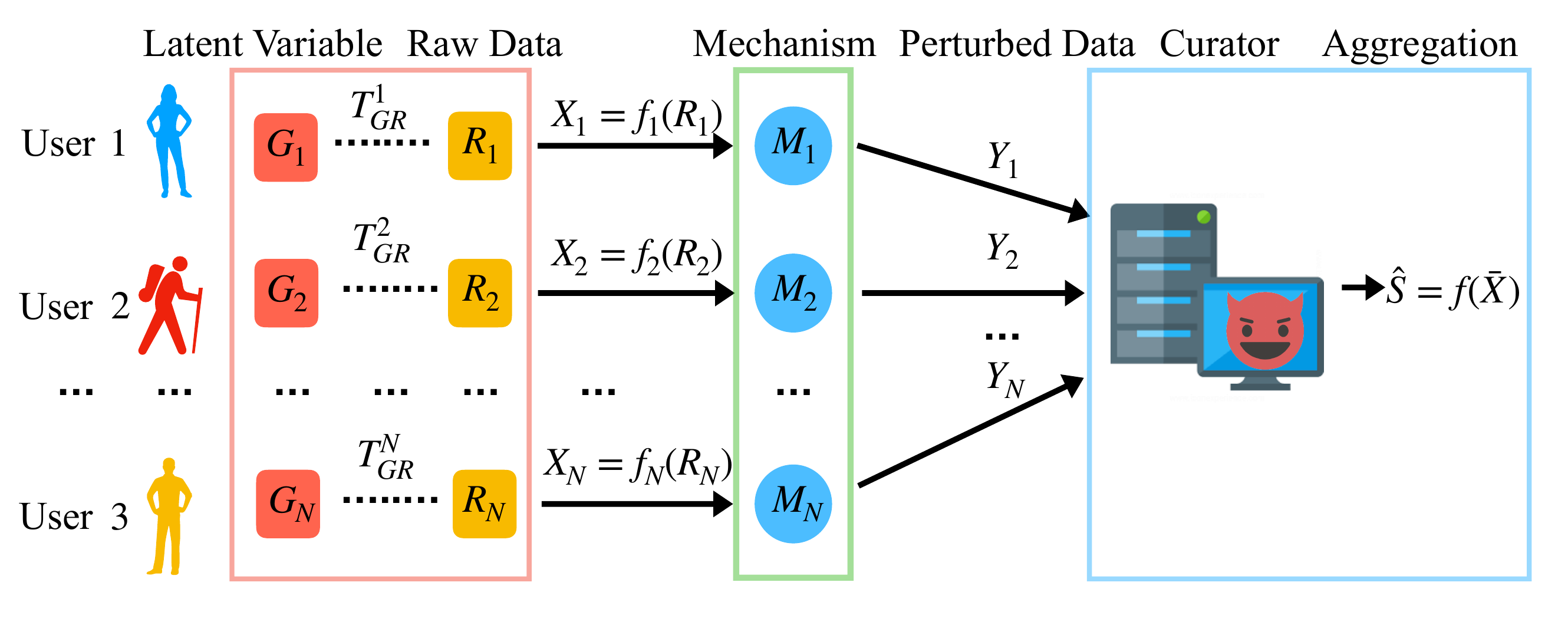}
\setlength{\abovecaptionskip}{-3pt}
\setlength{\belowcaptionskip}{-3pt}
\caption{System Model of Privacy-Preserving Data Aggregation.}
\vspace{-10pt}
\centering
\label{fig:Model}
\end{figure}

 In terms of prior availability, multiple scenarios could arise in practice. For example, both the user and the curator know $\theta^i_R$ exactly, or one party is uncertain about $\theta^i_R$, or they possess different prior knowledge from each other, and one or both of them can be inaccurate.  Within this paper's scope, we assume that the curator always knows the exact $\theta^i_R$ ($\theta^i_{RG}$), and the algorithms/perturbation mechanisms that users deployed to publish their data. %Note that the real utility only depends on $\theta_X$, not $\theta_{XG}$. Therefore, the curator can derive the utility without uncertainty. 
 In the basic setting, we assume each user also possesses the exact prior (same as the curator). Later we relax it and consider uncertain prior at the user.
All the related symbols are listed in Table 1.
%Using the MMSE estimator $\hat{\mathbf{S}}=E[\mathbf{S}|\bar{Y}]$. 

\subsection{General Privacy and Utility Definitions}

%\textbf{Definition of privacy}:
The \textit{privacy}   of each user's latent secret is guaranteed by LIP and is parameterized by the privacy budget ($\epsilon$) in Definition (\ref{def:LIP}). The smaller $\epsilon$ is, the stronger privacy guarantee the mechanism provides.  For simplicity, we consider $\epsilon$ to be the same for all the users. However, it is straightforward to extend our model and results to the scenarios where different users are provided by different $\epsilon$s. When the exact prior $\theta^i_{XG}$ is not available for each user, he/she defines $\mathcal{P}^i_{\mathcal{X}\mathcal{G}}$ to be the set of plausible priors including $\theta^i_{XG}$ (users are always allowed to enlarge the size of the $\mathcal{P}^i_{\mathcal{X}\mathcal{G}}$ to include $\theta^i_{XG}$). %For simplicity, we remove the conditional term in the probability metric, i.e., $Pr(X_i=x|\theta_X)=Pr(X_i=x)$, thus, the privacy constraints can be formulated as: $\forall i\in{\{1,2,...,N\}}$, $x,y\in\mathcal{X}$, ${\theta^i_X\in\mathcal{P}_X^i}$ there is 
Under LIP, the privacy constraints can be formulated as: $\forall i\in{\{1,...,N\}}$, $\forall g\in{\mathcal{G}}$, $\forall{\theta^i_{XG}\in\mathcal{P}^i_{\mathcal{X}\mathcal{G}}}$ and $y\in\mathcal{Y}$, there is 

\begin{equation}\label{eqlip2}
\setlength{\abovedisplayskip}{3pt}
\setlength{\belowdisplayskip}{3pt}
e^{-\epsilon}\le\frac{Pr(G_i=g|Y_i=y,\theta^i_{XG})}{Pr(G_i=g|\theta^i_{XG})}\le{e^{\epsilon}}.
\end{equation}
Denote ${q}_{xy}^i\triangleq\{Pr(Y_i=y|X_i=x)\}$, and $t^i_{gx}\triangleq Pr(X_i=x|G_i=g)$, $\forall x\in\mathcal{X}, g\in\mathcal{G}$, $y\in\mathcal{Y}$. By Bayes rule, the privacy constraints in \eqref{eqlip2} can be expressed as:
\begin{equation}\label{eqlipp2}
\setlength{\abovedisplayskip}{3pt}
\setlength{\belowdisplayskip}{3pt}
e^{-\epsilon}\le\frac{\sum_{x\in\mathcal{X}}q^i_{xy}t^i_{gx}}{\sum_{x\in\mathcal{X}}q^i_{xy}P^i_{x}}\le{e^{\epsilon}}.
\end{equation}
Let $\mathbf{q}^i$ be the set of perturbation probabilities in $\mathcal{M}_i$. Then,
 when $\epsilon$ and each $\mathcal{P}^i_{\mathcal{X}\mathcal{G}}$ are given, the set of inequalities in Eq. (\ref{eqlip2}) forms a feasible region $\mathcal{T}_i$ for $\mathbf{q}^i$,  $\forall{i\in{1,2,..N}}$. %\textcolor{red}{Specifically, if the mechanism takes as input the raw data $R_i$, then the feasible region of $\mathbf{q}^i$ can be represented as:
%\begin{equation}\label{eqlip3}
%\setlength{\abovedisplayskip}{3pt}
%\setlength{\belowdisplayskip}{3pt}
%e^{-\epsilon}\le\frac{\sum_{r\in\mathcal{R}}q^i_{ry}t^i_{gr}}{\sum_{r\in\mathcal{R}}q^i_{ry}P^i_{r}}\le{e^{\epsilon}},
%\end{equation}
%where ${q}_{ry}^i\triangleq\{Pr(Y_i=y|R_i=r)\}$, and $t^i_{gr}$ denotes the conditional probability of $Pr(R_i=r|G_i=g)$, $\forall{r}\in\mathcal{R}, g\in\mathcal{G}$, $y\in\mathcal{Y}$.
%}

 The definition of \textit{utility} depends on application scenarios. For example, in statistical aggregation, the estimation accuracy is often measured by absolute error or mean square error \cite{MMSE}\cite{heavyhitter}; in location tracking, it is typically measured by Euclidean distance \cite{Geo}; under information theoretical framework, distortion is typically applied \cite{7498650}. In this paper, we denote $U(S,\hat{S})$ as the utility.
 
In general, there is a tradeoff between utility and privacy. We can   formulate  the following   optimization problem  to find the optimal mechanism that yields the optimal tradeoff:  
\begin{equation}\label{u-p-tradeoff}
\begin{aligned}
&\quad\quad\quad\max U(S,\hat{S}),\\
 &\text{s.t.} ~~~\mathbf{q}^i\in{\mathcal{T}_i},~~ \forall{i\in{1,2,...,N}}. 
 \end{aligned}
\end{equation}
 
\subsection{Problem Formulation}\label{sec:problem}
Focusing on the two applications discussed above, we define utility as the inverse of the Mean Square Error (MSE), which is also adopted in many other works on frequency/histogram estimation \cite{Tianhao,heavyhitter,tianhao_itemset}: 
$U({S},\hat{{S}})=-\mathcal{E}({S},\hat{{S}})$, where $\mathcal{E}({S},\hat{{S}})=E[({S}-\hat{{S}})^2]$. Note that, for weighted summation, the utility is data alphabet dependent while for histogram estimation, it is data alphabet independent, we show how MSE addresses these two different utilities in Sec.~\ref{sec:application}. 
Note that the adversary can use the prior distribution of each user's input data for post-processing. 
From \cite{papoulis2002probability}, it is well-known that the optimal estimator that results in the minimized mean square error (MMSE) is $\hat{{S}}=g(\bar{Y})=E[{S}|\bar{Y}]$. Since $E[E[{S}|\bar{Y}]]=E[{S}]$, $\hat{{S}}$ is an unbiased estimator.  We next formulate the problem under two cases, one is for a fixed prior, the other is for an uncertain prior. %In the following sections, we study in-depth each of the problems accordingly.
\subsubsection{Problem formulation for a fixed prior}

Notice that, given each user's prior $\theta^i_{RG}$, the MSE $\mathcal{E}({S},\hat{{S}})$ depends only on each user's perturbation parameters: $\{\mathbf{q}^i\}_{i=1}^N$, as any estimation $\hat{{S}}$ depends on the output $\bar{Y}$ whose distribution is a function of $\{\mathbf{q}^i\}_{i=1}^N$. Thus, maximizing  the utility  is equivalent to finding optimal parameters to minimize the MSE. As a result,  \eqref{u-p-tradeoff} becomes:
\begin{equation}\label{utility}
\begin{aligned}
&\quad\quad\quad\min \mathcal{E}(\mathbf{q}^1,..., \mathbf{q}^N),\\
 &\text{s.t.} ~~~\mathbf{q}^i\in{\mathcal{T}^f_i},~~ \forall{i\in{1,2,..N}},
 \end{aligned}
\end{equation}
\noindent where $\mathcal{T}^f_i$ denotes the feasible region of $\mathbf{q}^i$ for a fixed prior.  

\textbf{Problem Decomposition:} Next, we show the problem defined in Eq. (\ref{utility}) can be decomposed into local optimization problems for each user. 
Since  we assume that each user's input is independent of each other, all the $f(\cdot)$ functions above can be decomposed into local functions $f_i(\cdot)$ of each $R_i$. % Denote the local function for the $i$-th user as $f_i$. Then for weighted summation, $f_i(R_i)=c_iR_i+b_i$; for histogram $f_i$s are indicator functions (or vector): $\{f_i^k(R_i)=\mathbbm{1}_{\{R_i=k\}}\}_{k=1}^{|\mathcal{R}|}$. 
Then, each of them results in an MSE in aggregation, which is denoted by $\mathcal{E}_i=E[(f_i(R_i)-E[f_i(R_i)|Y_i])^2]$ (for the application of histogram, denote $\mathcal{E}^k_i=E[(f^k_i(R_i)-E[f^k_i(R_i)|Y_i])^2]$ as the MSE of aggregating the $k$-th data with $R_i$). The utility defined in \eqref{utility} satisfies decomposition theorem with proof provided in Appendix B of the supplementary document:

\begin{thm}\label{thm:decompose}
The global optimization problem defined in \eqref{utility} can be decomposed into $N$ local optimization problems:
\begin{equation}
\setlength{\abovedisplayskip}{3pt}
\setlength{\belowdisplayskip}{3pt}
\min_{\{(\mathbf{q}^i)\in{\mathcal{T}_i}\}^N_{i=1}}\mathcal{E}(\mathbf{q}^1,..., \mathbf{q}^N)=\sum_{i=1}^N\min_{(\mathbf{q}^i)\in{\mathcal{T}_i}}\mathcal{E}_i(\mathbf{q}^i).
\end{equation}
\end{thm}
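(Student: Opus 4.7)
The plan is to establish the theorem in three steps: first decompose the target statistic, then decompose the MMSE, and finally decompose the feasible region. I would start by using the structural observation already noted before the theorem: in both applications under consideration, the aggregate can be written as a sum of per-user local contributions, $S=\sum_{i=1}^N f_i(X_i)$ for weighted summation, and componentwise $S_k=\sum_{i=1}^N f_i^k(X_i)$ for histogram. For the histogram case I would argue once per component $k$ and then sum over $k$, which reduces the vector-valued case to the scalar template.

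Next, I would invoke the fact that the MMSE estimator is the conditional expectation, so $\hat{S}=E[S\mid \bar Y]=\sum_{i=1}^N E[f_i(X_i)\mid \bar Y]$. The critical reduction is $E[f_i(X_i)\mid \bar Y]=E[f_i(X_i)\mid Y_i]$. This follows from the two independence assumptions built into the model: the inputs $X_1,\dots,X_N$ are mutually independent, and each $Y_i$ is produced from $X_i$ alone through the local mechanism $\mathcal{M}_i$. Together these imply that the pair $(X_i,Y_i)$ is independent of $\{Y_j\}_{j\neq i}$, so conditioning on the full vector $\bar Y$ collapses to conditioning only on $Y_i$.

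With this in hand, I would write
\begin{equation*}
S-\hat S=\sum_{i=1}^N \bigl(f_i(X_i)-E[f_i(X_i)\mid Y_i]\bigr)\triangleq \sum_{i=1}^N Z_i,
\end{equation*}
and expand $\mathcal{E}=E\bigl[(\sum_i Z_i)^2\bigr]=\sum_i E[Z_i^2]+\sum_{i\neq j}E[Z_iZ_j]$. The diagonal terms are exactly $\mathcal{E}_i(\mathbf q^i)$. For the cross terms, the independence of $(X_i,Y_i)$ from $(X_j,Y_j)$ gives $E[Z_iZ_j]=E[Z_i]\,E[Z_j]$, and the tower property yields $E[Z_i]=E[f_i(X_i)]-E[E[f_i(X_i)\mid Y_i]]=0$. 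Hence all cross terms vanish and $\mathcal{E}(\mathbf q^1,\dots,\mathbf q^N)=\sum_{i=1}^N \mathcal{E}_i(\mathbf q^i)$.

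Finally, since the privacy constraint in Eq. \eqref{eq:lipp2} couples only the entries of a single user's $\mathbf q^i$ with that same user's prior $P^i_x$, the feasible set is a product set $\mathcal{T}_1\times\cdots\times\mathcal{T}_N$. A sum of functions each depending on a disjoint block of variables, minimized over a product constraint, separates into per-block minimizations, which gives the claimed equality. I expect the only subtle step to be the justification of $E[f_i(X_i)\mid\bar Y]=E[f_i(X_i)\mid Y_i]$; I would state this carefully using the factorization of the joint law of $(\bar X,\bar Y)$ induced by input independence and the locality of each $\mathcal{M}_i$, and then the rest of the argument is routine second-moment bookkeeping.
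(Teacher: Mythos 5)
Your proposal is correct and follows essentially the same route as the paper's proof in Appendix B: reduce $E[f_i(X_i)\mid\bar Y]$ to $E[f_i(X_i)\mid Y_i]$ via input independence and locality of the mechanisms, expand the squared error and kill the cross terms using independence plus unbiasedness of the MMSE estimator, and then observe that the separable objective over the product feasible set $\mathcal{T}_1\times\cdots\times\mathcal{T}_N$ splits into per-user minimizations. Your justification of the conditioning-collapse step is in fact slightly more careful than the paper's, but the argument is the same.
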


\noindent By Theorem~\ref{thm:decompose}, when each local mechanism is optimized, the global MSE of the system achieves its minimum. In addition, each user can perform its local optimization independent of each other, which well suits the local setting.
Now, each local optimization problem incurs an MSE of:
 \begin{equation}\label{largerlaw}
 \begin{aligned}
 \mathcal{E}_i(\mathbf{q}^i)&=E[(f_i(R_i)-E[f_i(R_i)|Y_i])^2]\\
 &=E\{E[(X_i-E[X_i|Y_i])^2|Y_i]\}\\
   &=E[\text{Var}(X_i|Y_i)]\\
     &\overset{(a)}{=}\text{Var}[X_i]-\text{Var}[E(X_i|Y_i)],\\
 \end{aligned}
\end{equation}
where $(a)$ follows the law of total variance.

\textbf{Utility Gain by Observing $\bar{Y}$:} We next compare with the case where no observation of $\bar{Y}$ is available, the goal is to show the utility gain by observing $\bar{Y}$. Since the curator possesses each $\theta^i_X$, to minimize MSE, his optimal local estimator becomes $E[X_i]$. Then, each local MSE becomes:
\begin{equation}
\begin{aligned}
    &\text{Var}[X_i]-\text{Var}[E(X_i)]\\
    =&\text{Var}[X_i]-E[E^2(X_i)]+E^2[E(X_i)]\\
    =&\text{Var}[X_i].
\end{aligned}
\end{equation}
Compared with \eqref{largerlaw}, the utility gain of observing $\bar{Y}$ is due to the term of $\text{Var}[E(X_i|Y_i)]$. Which means when some observations on $Y_i$ are available, the non-negative term of $\text{Var}[E(X_i|Y_i)]$ helps increase data utility.

For the data utility, the MSE of the estimation is a function of the variance of each user's estimator. Define $\hat{X}_i=E[X_i|Y_i]$ as the local estimator for the $i$-th user, and we have $\hat{S}=\sum_{i=1}^N\hat{X}_i$, which follows the user independence assumption. {As each $\text{Var}[X_i]$ is a constant,}
each local optimization problem can be reformulated as:
\begin{equation}\label{probelm}
\begin{aligned}
&\min\mathcal{E}_i(\mathbf{q}^i)\equiv\max \text{Var}(\hat{X}_i),\\
 &~\text{s.t.}~~\text{\eqref{eqlipp2}}. ~~~~~~~~~~\text{s.t.}~\mathbf{q}^i\in{\mathcal{T}^f_i},~~ \forall{i\in{1,2,..N}}.
\end{aligned}
\end{equation}
Which means, the optimal solutions are at the maximum of the variance of the estimator, subject to the LIP constraints. 

\subsubsection{Problem formulation for uncertain prior}

Next, we consider the case where each user has uncertainty on $\theta^i_{RG}$/$\theta^i_{XG}$. Note that under the context-aware setting, it is assumed that the curator/adversary possesses the exact prior distribution. Such scenarios exist when users possess less information about the data and secrets. For example, the curator has recorded a full history of users' previously released data in the server such that the curator can infer each user's prior. Another example is the curator can estimate a global prior for all the users by observing each user's released data. The third example might be, the user is highly correlated with someone (such as family members or close friends) whose data has been collected or compromised. The user's prior then can be inferred by the curator via the correlations.

In the uncertain prior model, the exact prior $\theta^i_{X}$ is not available for each user, so the prior-dependent utility function defined in \eqref{largerlaw} can not be calculated either. In such case, for each user, the local MSE function is determined by his/her perturbation parameters as well as the exact prior distribution, i.e., $\mathcal{E}_i(\mathbf{q}^i)$ in \eqref{largerlaw} becomes $\mathcal{E}_i(\theta^i_X,\mathbf{q}^i)$. 
A feasible minimax strategy for each user is to find the maximized $\mathcal{E}_i(\tilde{\theta}^i_X,\mathbf{q}^i)$ achieved by a prior of $\tilde{\theta}^i_X\in\mathcal{P}^i_{\mathcal{X}\mathcal{G}}$ and find $\mathbf{q}^{i*}$ which minimizes $\mathcal{E}_i(\mathbf{q}^i|\tilde{\theta}^i_X)$. Thus the problem for the $i$-th user becomes:

\begin{equation}\label{eq:uncertain}
\begin{aligned}
&\min_{\mathbf{q}^{i}}\max_{\tilde{\theta}^i_{XG}\in\mathcal{P}^i_{\mathcal{X}\mathcal{G}}} \mathcal{E}_i(\tilde{\theta}^i_{XG},\mathbf{q}^i),\\
 &\text{s.t.} ~~~\mathbf{q}^i\in{\mathcal{T}^u_i}. 
 \end{aligned}
\end{equation}

Note that the feasible region $\mathcal{T}^u_i$ in \eqref{eq:uncertain} is different from $\mathcal{T}^f_i$. It uses BP-LIP's definition, i.e., LIP must be satisfied for a family of priors.
The utility function in Eq.\eqref{eq:uncertain}: $\mathcal{E}_i(\tilde{\theta}^i_X,\mathbf{q}^i)=\text{Var}(X_i)-\text{Var}(\hat{X}^{bp}_i)$, where $\hat{X}^{bp}_i$ is the optimal estimator at the curator. %which depends on both $\theta^i_X$ and $\mathbf{q}^i$. 
As $\text{Var}(X_i)$ depends only on the exact prior of $\theta^i_X$, the goal of each user is still to maximize $\text{Var}(\hat{X}^{bp}_i)$. Thus Eq.\eqref{eq:uncertain} can be further expressed as:
\begin{equation}\label{u_bounded}
\begin{aligned}
   &\max_{\mathbf{q}^i\in{\mathcal{T}_i^u}}\min_{\tilde{\theta}^i_{XG}\in{\mathcal{P}^i_{\mathcal{X}\mathcal{G}}}}\text{Var}[\hat{X}^{bp}_i(\tilde{\theta}^i_{XG},\mathbf{q}^i)].\\
\end{aligned}
\end{equation}

%We next derive the utility-privacy tradeoff under different scenarios on prior availability and data correlation. 
%\subsection{Utility-Privacy Formulations with Latent Variable}

%The model is depicted in Fig. \ref{fig:hidden}. For each user, the input data is denoted as $X_i$ which is correlated with a latent variable $G_i$ with some joint distribution. Denote $\mathcal{G}$ as the universe of all the latent variables,  

%Under LIP, the privacy constraints can be formulated as: $\forall i\in{\{1,2,...,N\}}$, $\forall g\in{\mathcal{G}}$ and $y\in\mathcal{X}$, there is 

%\begin{equation}\label{eqlip2}
%\setlength{\abovedisplayskip}{3pt}
%\setlength{\belowdisplayskip}{3pt}
%e^{-\epsilon}\le\frac{Pr(G_i=g|Y_i=y)}{Pr(G_i=g)}\le{e^{\epsilon}}.
%\end{equation}
%By Bayes rules, given the values of $t^i_{gx}$s for all $x\in\mathcal{X}$ and $g\in\mathcal{G}$, the metric in \eqref{eqlip2} can be further expressed as a function of $q^i_{xy}$:

%The constrained optimization problem in \eqref{probelm} under the model with latent variables can be formulated as, $\forall i\in{\{1,2,...,N\}}, g\in{\mathcal{G}},y\in\mathcal{X}$:
% \begin{equation}\label{hidden}
%\setlength{\abovedisplayskip}{3pt}
%\setlength{\belowdisplayskip}{3pt}
%\begin{aligned}
%            &\max \text{Var}(\hat{X}^m_i),\\
% &~\text{s.t.}  \text{ Eq.  (\ref{eqlipp2})}.
% \end{aligned}
%\end{equation}
\section{Mechanism Design and Utility-Privacy Tradeoff}\label{tradeoff}

In this Section, we study the utility-privacy tradeoffs under LIP framework. We start with the generalized RR mechanism for the model with a fixed prior. Then, we extend to the model with uncertain prior. After that, we study mechanisms with local hash and unary encoding, followed by a comparison to LDP based mechanisms. Finally, we show how LIP based mechanisms can be applied in real-world applications.% It is worth noting that we use $X_i$ to denote each local answer to the query $f_i(X_i)$ which is application dependent. Therefore, the mechanisms described in this Section are application-independent unless specific declaration.

\subsection{Optimal Mechanism for Fixed Prior}
In general, the closed-form optimal solution for the constrained optimization problem of \eqref{probelm} cannot be directly derived. As the number of linear constraints is quadratically proportional to the dimensions of $X_i$ and $G_i$. Also, the valid constraints depend on the concrete prior and correlation.  We numerically present the results and show the properties of the general model in Sec. \ref{sec:sim}.
We next study some useful properties of the problem in \eqref{probelm}. %The utility function can be viewed as the MSE between a function (application dependent) of the input data and the same function of the estimation. 
For  the privacy constraints, note that:
\begin{equation}
\begin{small}
    \begin{aligned}
        e^{-\epsilon}\le{{\frac{\min_{x\in\mathcal{X}}q^i_{xy}}{\sum_{x\in\mathcal{X}}q^i_{xy}P^i_{x}}}}\le\frac{\sum_{x\in\mathcal{X}}q^i_{xy}t^i_{gx}}{\sum_{x\in\mathcal{X}}q^i_{xy}P^i_{x}}
        \le{\frac{\max_{x\in\mathcal{X}}q^i_{xy}}{\sum_{x\in\mathcal{X}}q^i_{xy}P^i_{x}}}\le{e^{\epsilon}},
    \end{aligned}
    \end{small}
\end{equation}
which means when $Y_i$ is released satisfying $\epsilon$-LIP with respect to $X_i$, the privacy metric in \eqref{eqlipp2} is satisfied automatically. As a result, protecting the privacy of a latent variable rather than the input data enlarges the feasible region of the perturbation parameters, and hence, an increased utility can be achieved.
We next show that, under some conditions, the privacy requirements are met without introducing noise.

\begin{figure*}[t]
\centering 
\subfigure[Mechanism for binary model with latent variable] 
{ \includegraphics[width=3.7cm]{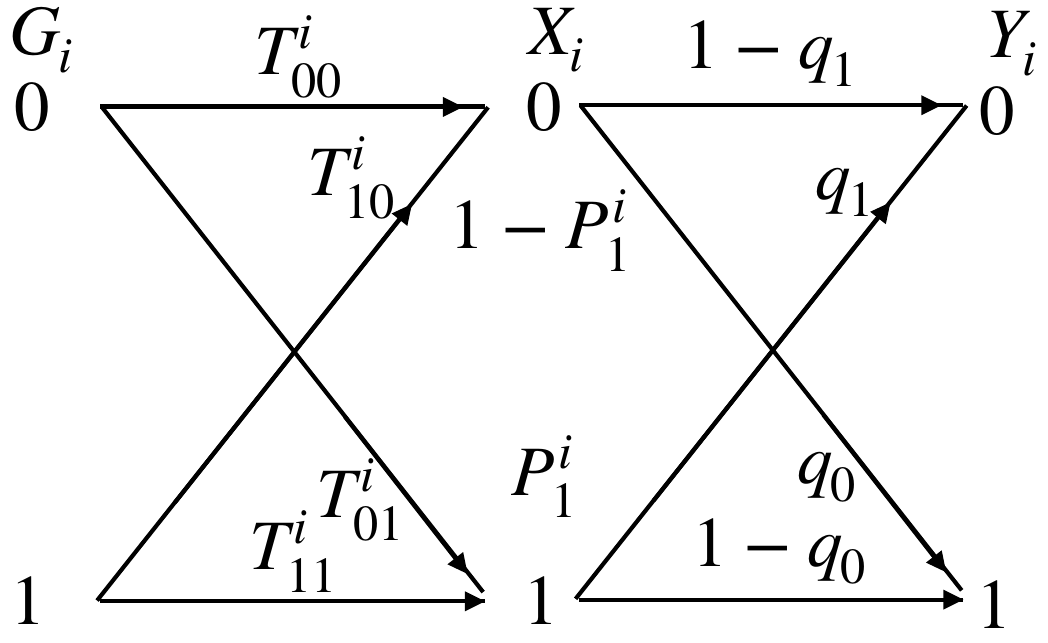} 
\label{fig:General_Model} }
\subfigure[RR Mechanism for M-ary model]
{ \includegraphics[width=3.5cm]{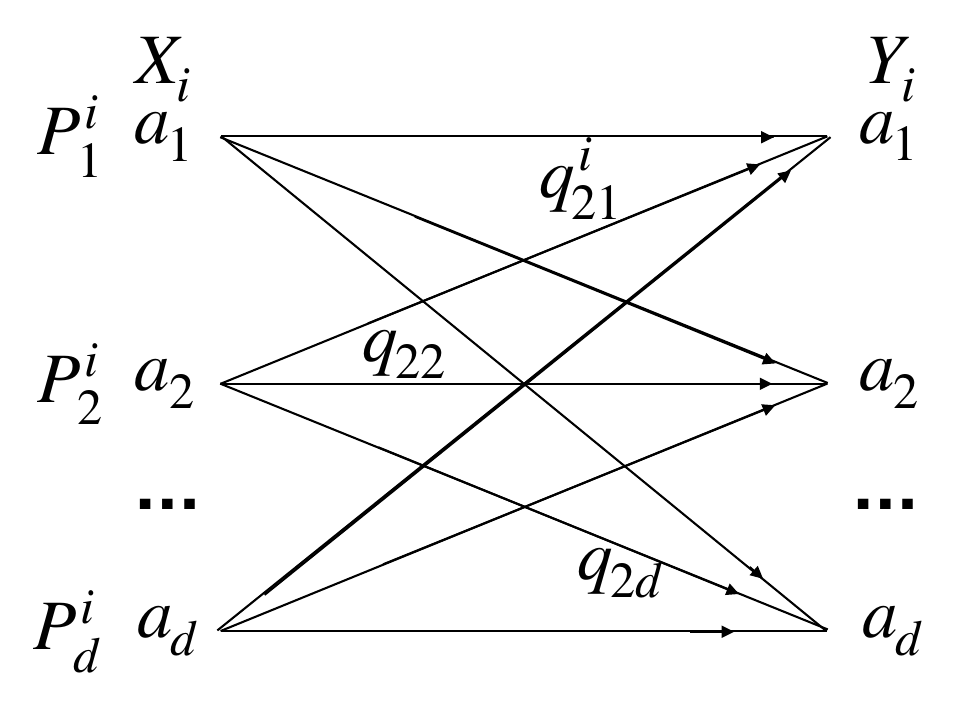} 
\label{fig:MIMO_Model} } 
\subfigure[Mechanism of LH-LIP]
{ \includegraphics[width=4.3cm]{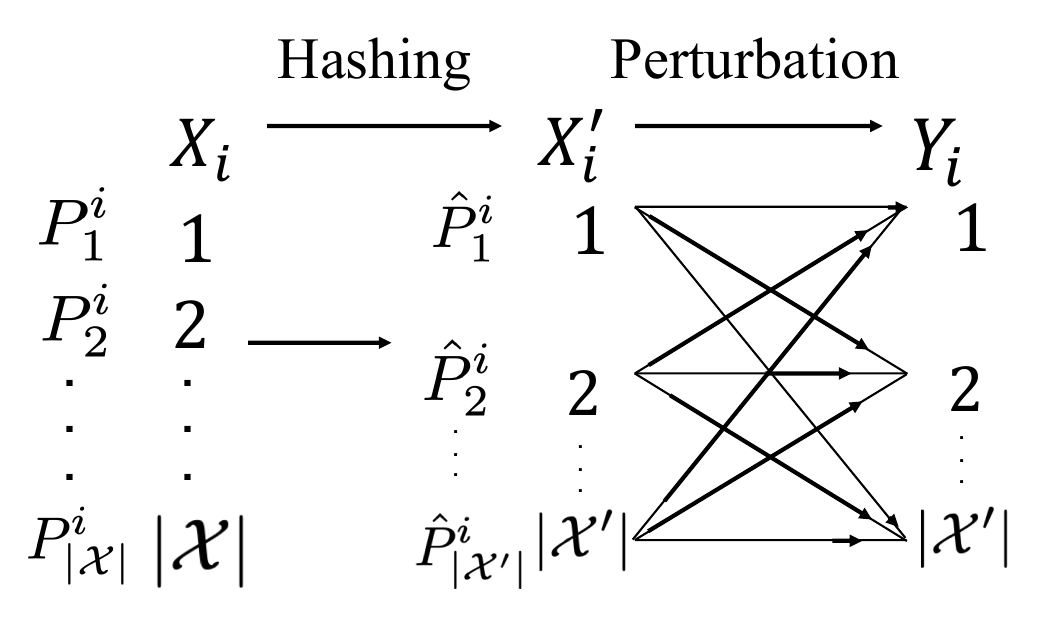}
\label{fig:LH-LIP} }
\subfigure[Mechanism of UE-LIP]
{ \includegraphics[width=5.1cm]{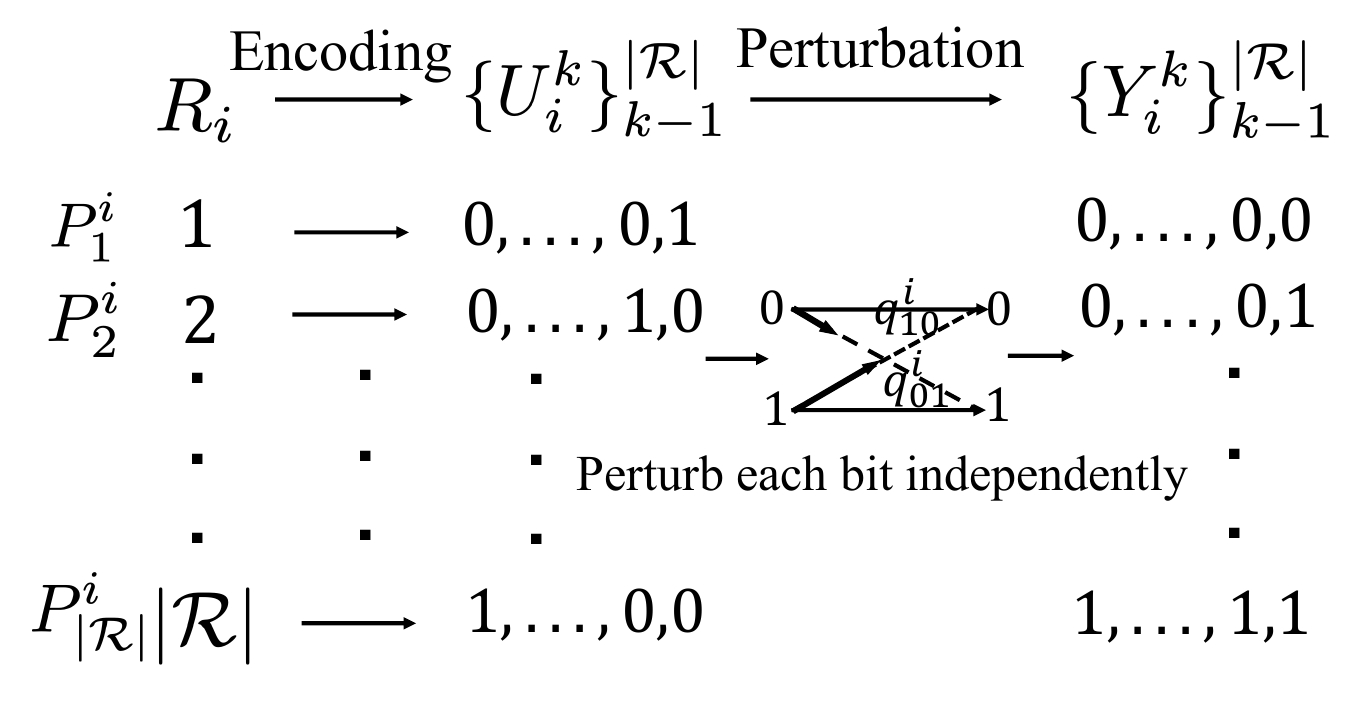}
\label{fig:encoding} }
\caption{Different perturbation mechanisms considered in this paper.} 
\label{Mechanisms} 
\vspace{-17pt}
\end{figure*}

\begin{prop}
For the constrained optimization problem defined in \eqref{probelm}, if for some $a\in{\mathcal{X}}$,
\begin{equation}\label{hiddenbound}
    \max\left\{\frac{\max_{g\in{\mathcal{G}}}t^i_{ga}}{P_a^i}, \frac{P_a^i}{\min_{g\in{\mathcal{G}}}t^i_{ga}}\right\}\le{e^{\epsilon}},
\end{equation}
the optimal $q^{i*}_{ma}=0$ and $q^{i*}_{aa}=1$, $\forall{m\neq{a}}$.
\end{prop}
\begin{proof}
Suppose for some $a\in\mathcal{X}$, $q^i_{aa}=1$ and $q^{i}_{ma}=0$, based on Eq.\eqref{eqlipp2}, $\forall{g\in{\mathcal{G}}}$, there is $e^{-\epsilon}\le\frac{t^i_{ga}}{P_a^i}\le{e^{\epsilon}}$. On the contrary, if this condition is satisfied, to maximize utility, the mechanism decreases $q^i_{ma}$ while increases $q^{i}_{mm}$. In an extreme case, $q^{i*}_{ma}=0$ and $q^i_{aa}=1$.
\end{proof}
Which means if $X_i=a$, the mechanism directly releases $Y_i=X_i$; It is straightforward to extend the result in proposition 1 to: if $\forall{x}\in\mathcal{X}$, $\max\left\{\frac{\max_{g\in{\mathcal{G}}}t^i_{gx}}{P_x^i}, \frac{P^i_x}{\min_{g\in{\mathcal{G}}}t^i_{gx}}\right\}\le{e^{\epsilon}}$, then the mechanism directly releases $Y_i=X_i$.
Notice that, $\forall{x}\in{\mathcal{X}}$ and $\forall{g}\in\mathcal{G}$, the bounded ratio in \eqref{hiddenbound}
equals to $1$ when $X_i$ and $G_i$ are independent, which means directly releasing $X_i$ leaks no information about $G_i$. If the ratio is bounded close to $1$, and the closeness is bounded by [$e^{-\epsilon},e^{\epsilon}$], directly releasing $X_i$ also does not violate LIP. %Such property of the model with latent variables can be better illustrated in Fig.\ref{fig:hidden2}.

%\begin{figure}[t]
%\centering
%\includegraphics[width=7cm]{hidden3.pdf}
%\setlength{\abovecaptionskip}{-3pt}
%\setlength{\belowcaptionskip}{-3pt}
%\caption{Relation between correlation and required amount of noise.}
%\vspace{-10pt}
%\centering
%\label{fig:hidden2}
%\end{figure}

%\begin{figure}[t]
%\centering
%\includegraphics[width=5.5cm]{BIBO.pdf}
%\caption{Perturbation channel of the model with latent variable (for binary input and binary output).}
%\vspace{-10pt}
%%\centering
%\label{fig:General_Model}
%\end{figure}

%\begin{figure}[t]
%\centering
%\includegraphics[width=7cm]{Hidden_Channel.pdf}
%\setlength{\abovecaptionskip}{-3pt}
%\setlength{\belowcaptionskip}{-3pt}
%\caption{Illustration of the optimal perturbation channel with latent variable.}
%\vspace{-10pt}
%\centering
%\label{fig:hidden_channel}
%\end{figure}

\subsubsection{Optimal RR Mechanism under Binary Model}

Next, we derive closed-form optimal solutions for the model with binary input/output. The input is arbitrarily correlated with a binary latent variable. Denote $\mathcal{B}$ as the binary domain of $\{0,1\}$. The binary model is widely used for survey, where each individual's data is first mapped to one bit, then randomly perturbed before publishing to the curator.

{In the binary model, $\mathcal{G}=\mathcal{R}=\mathcal{X}=\mathcal{B}$ (shown in Fig. \ref{fig:General_Model}, we omit $R_i$ for simplicity as $\theta^i_{XG}$ can be calculated given $\theta^i_{R}$, $f_i$ and $T^i_{GR}$)}. $\text{Var}(X_i)$ in \eqref{largerlaw} becomes $P_1^{i}(1-P_1^{i})$.
Denote the perturbation parameters as: $Pr(Y_i=1|X_i=0)=q^{i}_0$, $Pr(Y_i=0|X_i=1)=q^{i}_1$.
Thus, the local MMSE estimator $\hat{X}^b_i$ ($b$ denotes the binary model) becomes:
\begin{equation}
\begin{aligned}
    \hat{X}^b_i=E[X_i|Y_i]
    =&P^{i}_1\left[\frac{q^i_1}{\lambda^i_0}(1-Y_i)+\frac{1-q^i_1}{\lambda^i_1}Y_i\right],
\end{aligned}
\end{equation}
where $\lambda^i_0=Pr(Y_i=0)$ and $\lambda^i_1=Pr(Y_i=1)$.
%Then,
%\begin{small}
%\begin{equation}\label{eq9}
%\begin{aligned}
%    \text{Var}(\hat{X}^b_i)=&\text{Var}\left\{P^{i}_1\left[\frac{q^i_1}{\lambda^i_0}(1-Y_i)+\frac{1-q^i_1}{\lambda^i_1}Y_i\right]\right\}
%    =P^{i}_1\frac{(\lambda^i_0-q_1^{i})^2}{\lambda^i_0\lambda^i_1}.
%\end{aligned}
%\end{equation}
%\end{small}
%For the privacy constraints, by \eqref{eqlip2}, when the perturbation mechanism satisfies $\epsilon$-LIP, the feasible region $\mathcal{T}^f_i$ is formed by:
%\begin{equation}\label{eq13}
%\setlength{\abovedisplayskip}{3pt}
%\setlength{\belowdisplayskip}{3pt}
%  e^{-\epsilon}\le{\{F^i_1,F^i_2\}}\le{e^{\epsilon}}, ~~~ \forall{i=1,2...N}.
%\end{equation}
%where $F^i_1,F^i_2$ are directly derived from Eq.  \eqref{eqlipp2}:
%\begin{equation*}
%\begin{aligned}
%    F^i_1(q_0^{i},q_1^{i},g)=&\frac{(1-q^i_0)T^i_{g0}+q^i_1t^{i}_{g1}}{q^i_1P^i_1+(1-q^i_0)(1-P^i_1)}\\
%    F^i_2(q_0^{i},q_1^{i},g)=&\frac{q^i_0T^i_{g0}+(1-q^i_1)t^{i}_{g1}}{(1-q^i_1)P^i_1+q^i_0(1-P^i_1)}.
%\end{aligned}
%\end{equation*}
Then, the utility-privacy tradeoff can be formulated as:
\begin{equation}\label{binary-opt}
\begin{aligned}
            &\max_{(q^i_0,q^i_1)\in{\mathcal{T}^f_i}}{\text{Var}(\hat{X}^b_i)}.\\
 \end{aligned}
\end{equation}
Define $t^{iu}_{g1}=\max_{g\in{\mathcal{G}}}t^{i}_{g1}$; $t^{il}_{g1}=\min_{g\in{\mathcal{G}}}t^{i}_{g1}$. Then the optimal $q^i_1$ and $q^i_0$ correspond to the following Theorem, with proof provided in Appendix C of the supplementary document.

\begin{thm}\label{the}
The optimal $q^i_0$ and $q^i_1$ of the problem defined in \eqref{binary-opt} are:
\begin{equation*}
    \begin{aligned}
    &q^{i*}_0=\max\left\{0,\frac{t^{iu}_{g1}-P^i_1e^{\epsilon}}{(e^{\epsilon}+1)(t^{iu}_{g1}-P^i_1)},\frac{P^i_1-t^{il}_{g1}e^{\epsilon}}{(e^{\epsilon}+1)(P^i_1-t^{il}_{g1})}\right\}\\
    &q_1^{i*}=\max\left\{0,\frac{1+t^{iu}_{g1}e^{\epsilon}-e^{\epsilon}-P^i_1}{(e^{\epsilon}+1)(t^{iu}_{g1}-P^i_1)}, \frac{1+P^i_1e^{\epsilon}-e^{\epsilon}-t^{il}_{g1}}{(e^{\epsilon}+1)(P^i_1-t^{il}_{g1})}\right\}.
    \end{aligned}
\end{equation*}
\end{thm}

Key insight from the binary model with latent variables is, when $X$ is highly correlated with $G$ ($t^{iu}_{g1}$ is large and $t^{il}_{g1}$ is small), $X$ should be privatized with more noise in order to protect $G$; When $X$ is almost independent of $G$ ($t^{iu}_{g1}$ and $t^{il}_{g1}$ are close to $P_1^i$), $X$ can be released with slight perturbation.

%\section{Optimal Mechanism Design under LIP}\label{tradeoff}
%In this Section, we study the utility-privacy tradeoffs according to different scenarios of $\mathcal{P}_X^i$, including fixed prior, bounded priors, then, we compare the utility-privacy tradeoff of LIP to that of LDP (worst-case prior). Finally, we discuss how to realize each of the two applications with proposed frameworks.

%we discuss a case where $X$ itself is non-private but correlated to some sensitive latent variable $G_i$. Then, we 
\subsubsection{Optimal RR Mechanism under M-ary model when each $R_i=G_i$}\label{sec:tradeoff}
 
%Beyond the general model, we next discuss two special cases and derive the closed-form optimal solutions.

%In the cases where each $X_i$ and $G_i$ are not highly correlated, small $q^i$s are optimal. 

Next, we derive the closed-form optimal solutions for the LIP based RR mechanism under M-ary model when each $R_i=G_i$, i.e., the raw data $R_i$ is private. We use M-ary to denote that the input $X_i$ can take multiple possible value. We start from the case where $f_i$ is a bijective or identity function, i.e., $\forall{x\in\mathcal{X}}$ there is only one $r\in\mathcal{R}$, s.t. $f_i(r)=x$. We then extend the optimal solutions to the case where $f_i$ is surjective. {Note that when $f_i$ is bijective, there exists permutation in the mapping from $R_i$ to $X_i$, then the prior of $P^i_x$ equals to the prior of $P^i_r$, where $r=f_i^{-1}(x)$. }Under an RR perturbation mechanism, the perturbation channel and corresponding parameters are shown in Fig. \ref{fig:MIMO_Model}. Denote $\mathcal{X}=\{a_1, a_2,...,a_d\}$,  $Pr(X_i=a_m)=P^i_m$ as the prior distribution of $X_i$,  $Pr(Y_i=a_k)=\lambda^i_k$ as the marginal distribution of $Y_i$. {When $f_i$ is bijective,} the privacy constraints of \eqref{eqlipp2} become, $\forall{m,k\in\{1,2,...,d\}}$:

%\begin{figure}[t]
%\centering
%\includegraphics[width=3.6cm]{C_Model.pdf}
%\caption{General RR perturbation channel.}
%\centering
%\label{fig:MIMO_Model}
%\vspace{-10pt}
%\end{figure}

%\begin{small}
%\begin{equation}\label{multi-privacy}
%    e^{-\epsilon}\le\frac{Pr(X_i=a_m)}{Pr(X_i=a_m|Y_i=a_k)}\le{e^{\epsilon}}. 
%\end{equation}
%\end{small}
%By Bayes rules, 

\begin{equation}\label{simple_cons}
\setlength{\abovedisplayskip}{3pt}
\setlength{\belowdisplayskip}{3pt}
    e^{-\epsilon}\le\frac{q^i_{mk}}{\lambda^i_k}\le{e^{\epsilon}}.
\end{equation}
In the utility function of \eqref{largerlaw}, $\text{Var}[X_i]=\sum^d_{m=1}a_m^2P^i_m-(\sum^d_{m=1}a_mP^i_m)^2$, and the local estimator becomes%: $\hat{X}^m_i$ (where $m$ denotes the M-ary model):
\begin{equation}\label{Multiconstraint}
\setlength{\abovedisplayskip}{3pt}
\setlength{\belowdisplayskip}{3pt}
\begin{aligned}
   \hat{X}^m_i=&E[X_i|Y_i]
   =\sum^d_{m=1}a_mPr(X_i=a_m|Y_i)\\
   =&\sum^d_{m=1}\sum^d_{k=1}a_mPr(X_i=a_m|Y_i=a_k)\mathbbm{1}^i_{k},
\end{aligned}
\end{equation}
\noindent where the superscript $m$ denotes the M-ary model, and $\mathbbm{1}^i_{k}$ is the indicator function of $\mathbbm{1}^i_{\{Y_i=a_k\}}$. Then, $\mathbbm{1}^i_{k}$ can be regarded as a binary random variable with the distribution of: $Pr(\mathbbm{1}^i_{k}=1)=\lambda^i_{k}$ and $Pr(\mathbbm{1}^i_{k}=0)=1-\lambda^i_{k}$. As a result: $\text{Var}[\mathbbm{1}^i_k]=\lambda^i_k(1-\lambda^i_k)$ and $\text{Cov}[\mathbbm{1}^i_k,\mathbbm{1}^i_l]=-\lambda^i_k\lambda^i_l$. Taking values in \eqref{Multiconstraint}:
\begin{equation}\label{MIMO_ERROR}
\begin{aligned}
   &\text{Var}(\hat{X}^m_i)
   =\sum^d_{m=1}\sum^d_{n=1}\sum^d_{k=1}a_ma_nq^i_{mk}q^i_{nk}\text{Var}[\mathbbm{1}^i_{k}]\\
 +&\sum^d_{m=1}\sum^d_{n=1}\sum^d_{k=1}\sum^d_{l=1;l\neq{k}}a_ma_nq^i_{mk}q^i_{nl}\text{Cov}[\mathbbm{1}^i_{k},\mathbbm{1}^i_{l}]\\
   %=&\sum^d_{m=1}\sum^d_{n=1}\sum^d_{k=1}a_ma_nP^i_mP^i_n\frac{q^i_{mk}q^i_{nk}(1-\lambda^i_k)}{\lambda^i_k}\\-&\sum^d_{m=1}\sum^d_{n=1}\sum^d_{k=1}\sum^d_{l=1;l\neq{k}}a_ma_nP^i_mP^i_nq^i_{mk}q^i_{nl}\\
   %=&\sum^d_{m=1}\sum^d_{n=1}\sum^d_{k=1}a_ma_nP^i_mP^i_nq^i_{mk}\left(\frac{q^i_{nk}(1-\lambda^i_{k})}{\lambda^i_k}-\sum^d_{l=1;l\neq{k}}q^i_{nl}\right)\\
   %=&\sum^d_{m=1}\sum^d_{n=1}\sum^d_{k=1}a_ma_nP^i_mP^i_nq^i_{mk}\left(\frac{q^i_{nk}}{\lambda^i_k}-1\right)\\
   =&\sum^d_{m=1}\sum^d_{n=1}a_ma_nP^i_mP^i_n\left(\sum^d_{k=1}\frac{q^i_{mk}q^i_{nk}}{\lambda^i_k}-1\right).
\end{aligned}
\end{equation}
So far, Eq.\eqref{probelm} can be further expressed as $\forall{m,k\in{1,2,...,d}}$:
\begin{equation}\label{multi-opt}
\begin{aligned}
            &\max \text{Var}(\hat{X}^m_i),\\
 &~\text{s.t.} ~ e^{-\epsilon}\le\frac{\lambda^i_k}{q^i_{mk}}\le{e^{\epsilon}}.
 \end{aligned}
\end{equation}

The global optimal solutions follow the next Theorem, with detailed proof provided in Appendix D of the supplementary document.

\begin{thm}[Optimal RR-LIP mechanism under M-ary model]\label{thm3}
For the constrained optimization problem defined in \eqref{multi-opt}, the optimal solutions for the $i$-th user are: $q^{i*}_{mm}=1-(1-P^i_m)/e^{\epsilon}$, $q^{i*}_{mk}=P^i_k/e^{\epsilon}$,  $\forall m,k\in\{1,2,...,d\}$, $m\neq{k}$.
\end{thm}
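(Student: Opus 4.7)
\textbf{Proof proposal for Theorem~\ref{thm3}.}

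The plan is to recast Eq.~\eqref{multi-opt} in terms of the posterior-to-prior ratio, argue that the off-diagonal LIP constraints are pinned to one side of the box at the maximum, and then back out $q^{i*}_{mk}$ and $\lambda^{i*}_k$ from a small linear system. Introduce $r^i_{mk}\triangleq q^i_{mk}/\lambda^i_k$, which by Bayes equals $\Pr(X_i{=}a_m\mid Y_i{=}a_k)/P^i_m$. The LIP condition in Eq.~\eqref{simple_cons} becomes the clean box constraint $r^i_{mk}\in[e^{-\epsilon},e^{\epsilon}]$, and for any valid joint we must have $\sum_m P^i_m r^i_{mk}=1$ for each $k$. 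Writing $\mu^i_k=\sum_m a_m P^i_m r^i_{mk}=E[X_i\mid Y_i=a_k]$, the objective in Eq.~\eqref{MIMO_ERROR} rewrites as $\text{Var}(E[X_i\mid Y_i])=\sum_k \lambda^i_k(\mu^i_k)^2-(E[X_i])^2$, i.e., maximizing the variance of the posterior mean.

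Next, I would argue that at the optimum every off-diagonal ratio saturates the lower bound: $r^i_{mk}=e^{-\epsilon}$ for all $m\neq k$. Heuristically, concentrating the posterior on a single atom $a_k$ (starving the others to their smallest allowed value) produces the most peaked posterior given $Y_i=a_k$ and therefore the posterior mean with the widest spread. I would formalize this via an exchange argument on the free variables $r^i_{mk}$: if some off-diagonal $r^i_{mk}>e^{-\epsilon}$ with slack in $r^i_{kk}<e^{\epsilon}$, then moving $r^i_{mk}$ down and compensating by raising $r^i_{kk}$ (so as to keep $\sum_m P^i_m r^i_{mk}=1$) pushes $\mu^i_k$ farther from $E[X_i]$ and strictly increases $\text{Var}(E[X_i\mid Y_i])$. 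With the off-diagonals pinned, the marginalization constraint determines $r^i_{kk}=[1-(1-P^i_k)e^{-\epsilon}]/P^i_k$. Finally, combining $q^i_{mk}=\lambda^i_k r^i_{mk}$ with the row-sum constraint $\sum_k q^i_{mk}=1$ gives a linear system whose unique solution is $\lambda^{i*}_k=P^i_k$, from which $q^{i*}_{mk}=P^i_k/e^{\epsilon}$ for $m\neq k$ and $q^{i*}_{mm}=1-(1-P^i_m)/e^{\epsilon}$ follow.

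The main obstacle is the exchange step: the objective $\sum_k (c^i_k)^2/\lambda^i_k$ is a sum of quadratic-over-linear terms in $q^i$ (hence convex), so the maximum must lie at an extreme point of the feasible polytope, but one still has to pick out \emph{which} extreme point. To pin down that the off-diagonal lower bounds (rather than the upper bounds) are the active ones, I would differentiate the objective with respect to a single $r^i_{mk}$ while using the marginalization constraint to express the dependent variable $r^i_{kk}$, and observe that the sign of the derivative is controlled by $(a_m-a_k)$-type factors that vanish only when the mass is as concentrated on $a_k$ as the LIP box allows. As a side check, one must verify that the resulting $r^i_{kk}$ does not pierce the upper bound $e^{\epsilon}$; a short computation shows this is guaranteed as long as $P^i_k\ge 1/(1+e^{\epsilon})$, a mild floor on the prior that presumably plays the role of an implicit regularity assumption in the statement.
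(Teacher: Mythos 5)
Your proposal follows essentially the same route as the paper's proof in Appendix C: both arguments reduce to showing that the optimum sits at a vertex of the feasible polytope, that the active constraints are the lower bounds on the off-diagonal entries, and that back-substitution then forces $\lambda^{i*}_k=P^i_k$ and the stated $q^{i*}$. Your reparametrization by the posterior-to-prior ratios $r^i_{mk}$ is a cleaner piece of bookkeeping than the paper's direct differentiation of $\text{Var}(\hat{X}^m_i)$ with respect to $q^i_{lk}$, and your convexity observation (sum of quadratic-over-linear terms, hence maximum at an extreme point of a polytope) is a tidier justification of the ``optimum on the boundary'' step than the paper's monotonicity computation.

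Two remarks. First, your exchange argument has a gap: within column $k$ you trade $r^i_{mk}$ against $r^i_{kk}$ subject only to $\sum_m P^i_m r^i_{mk}=1$, but the row-stochasticity constraints $\sum_k \lambda^i_k r^i_{mk}=1$ couple the columns, so the perturbation you describe is not feasible without simultaneously adjusting entries in other columns or the marginal $\lambda^i$. You would need either to carry out that compensation explicitly or to enumerate the vertex structure of the polytope directly; the paper's proof has the same soft spot, so this does not distinguish your route from theirs, but it is the step that still needs real work. Second, your closing ``side check'' should not be waved away as an implicit regularity assumption: the condition $P^i_k\ge 1/(1+e^{\epsilon})$ is genuinely necessary for the stated solution to be feasible, since with $\lambda^i_k=P^i_k$ the posterior $\Pr(X_i=a_k\mid Y_i=a_k)=q^{i*}_{kk}$ must not exceed $P^i_k e^{\epsilon}$, and this fails for skewed priors and small $\epsilon$ (e.g., the paper's own example with $P^i_1=0.1$ and $\epsilon=0.5$). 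The paper's feasibility check asserts $\frac{e^{\epsilon}P^i_k}{e^{\epsilon}+P^i_k-1}\ge e^{-\epsilon}$ unconditionally, which is exactly equivalent to $P^i_k\ge 1/(1+e^{\epsilon})$ and hence not always true. Your computation therefore uncovers a restriction that both the theorem statement and the paper's proof omit; state it as an explicit hypothesis (or handle the small-prior regime separately, where the diagonal upper bound becomes the active constraint) rather than presuming it holds.
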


The constrained optimization problem defined in \eqref{multi-opt} can be visualized in Fig. \ref{fig:contour} (taking a binary example). The curves stand for the contour of $\text{Var}(\hat{X}_i^m)$.  The shaded area stands for the feasible region of $\mathcal{T}^f_i$ for a fixed prior and $\epsilon$. The optimal solutions are found at the boundary of the feasible region, which are intersections of linear equations. 

\begin{figure}[t]
\centering
\includegraphics[width=6cm]{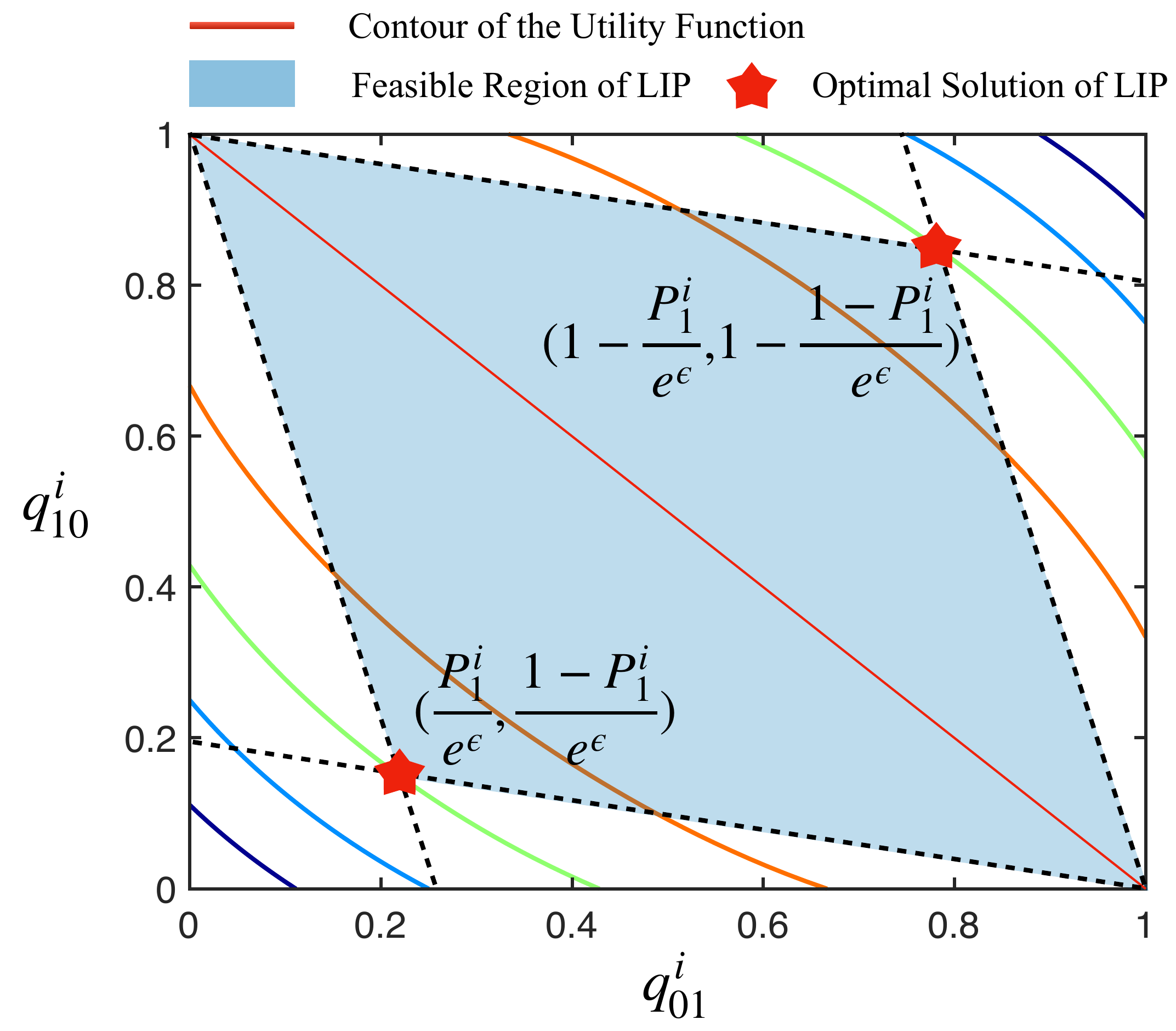}
\setlength{\abovecaptionskip}{-3pt}
\setlength{\belowcaptionskip}{-3pt}
\caption{Illustration of the optimal solutions to the binary LIP model.}
\vspace{-10pt}
\centering
\label{fig:contour}
\end{figure}

From Theorem \ref{thm3}, when $\epsilon$ increases, $\forall{m\in\{1,2,...,d\}}$, all the $q^i_{mm}$s are increasing while all the $q^i_{mk}$s are decreasing ($m\neq{k}$). The value of $q^i_{mk}$s are proportional to $P^i_k$s, i.e., the optimal mechanism is more likely to output the values with larger priors. Note that, the optimal solutions in Theorem 3 are similar to but different from a staircase mechanism\cite{Extreme_ldp} defined for LDP, wherein the likelihood ratio of $\frac{Pr(Y=y|X=x)}{Pr(Y=y|X=x')}$ evaluated at any $x,x'\in\mathcal{X}$, $y\in\mathcal{Y}$ takes value from the set of $\{e^{\epsilon},1,e^{-\epsilon}\}$. The similarity lies in that the maximized utility can be achieved with parameters that just meet privacy constraints. The difference is the solutions in Theorem 3 make most constraints achieve $e^{\epsilon}$, but some of them take values between $[e^{-\epsilon},1]$. We further illustrate the structure of the optimal mechanism through the following example. 

\begin{figure}[t]
\centering
\includegraphics[width=8.9cm]{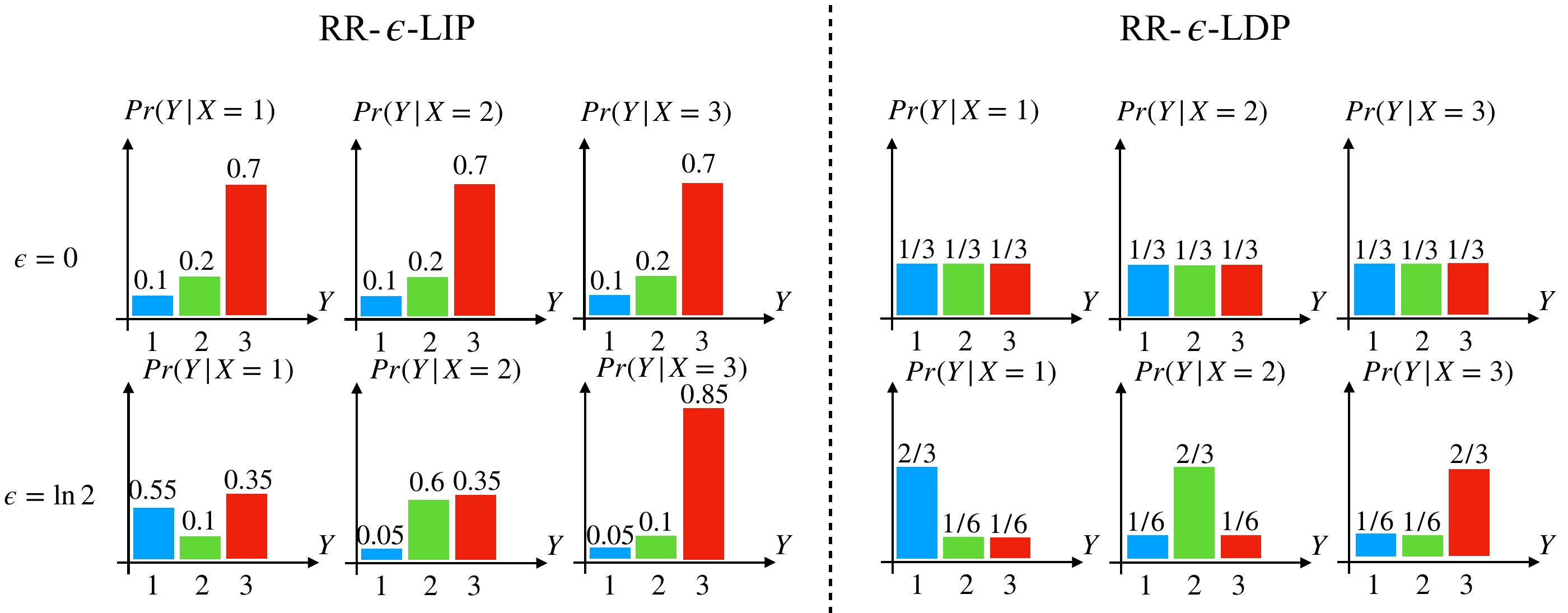}
\setlength{\abovecaptionskip}{-3pt}
\setlength{\belowcaptionskip}{-3pt}
\caption{Illustration of the perturbation parameters under two different $\epsilon$s and a fixed prior.}
\vspace{-13pt}
\centering
\label{fig:perturbation}
\end{figure}

Suppose $\mathcal{X}=\{1,2,3\}$, for the $i$-th user: $P_1=0.1$, $P_2=0.2$, $P_3=0.7$. By Theorem \ref{thm3}, $q^*_{11}=1-0.9/e^{\epsilon}$, $q^*_{22}=1-0.8/e^{\epsilon}$, $q^*_{33}=1-0.3/e^{\epsilon}$, $q^*_{21}=q^*_{31}=0.1/e^{\epsilon}$, $q^*_{12}=q^*_{32}=0.2/e^{\epsilon}$, $q^*_{13}=q^*_{23}=0.7/e^{\epsilon}$. When $\epsilon$ grows, $q^*_{11}$, $q^*_{22}$ and $q^*_{33}$ also increase, which means $X_i$ is more likely to be directly published ($Y_i=X_i$). When $\epsilon$ is small, as ``3" has a larger prior than ``1" and ``2", when $X_i=1$ or $X_i=2$, the mechanism is more likely to output $Y_i=3$ to satisfy the LIP constraints by increasing the posterior of $Pr(X_i=3|Y_i)$. The perturbation parameters are illustrated in Fig. \ref{fig:perturbation}.

{
We next relax the assumption that $f_i$ is bijective and extend to the case where $f_i$ is surjective. The optimal solution is provided in the following Corollary.
{
\begin{cor}
The form of the optimal perturbation parameter of $q^i_{xy}$ when $f_i$ is surjective is identical to that shown in Theorem \ref{thm3}. The difference lies in the prior of $X$: in Theorem \ref{thm3}, $P_m^i =P^i_r$ where $f_i(r)=a_m$; when $f_i$ is surjective, $P^i_m=\sum_{r:f_i(r)=a_m}P^i_r$.  
\end{cor}
\begin{proof}
When $f_i$ is surjective, the privacy metric in \eqref{eqlipp2} can be expressed as:
\begin{equation}\label{eq26}
    \frac{\sum_{x\in\mathcal{X}}Pr(X_i=x|R_i=r)q^i_{xk}}{\lambda^i_k}=\frac{q^i_{mk}}{\lambda^i_k},
\end{equation}
where $f_i(r)=a_m$. For any surjective function $f_i$, $\forall{r\in\mathcal{R}}$, there exists only one $a_m\in\mathcal{X}$ s.t.,  $f_i(r)=a_m$. Therefore, $\forall{r\in\mathcal{R}}$, the ratio of ${\sum_{x\in\mathcal{X}}Pr(X_i=x|R_i=r)q^i_{xk}}/{\lambda^i_k}$ is bounded by $[e^{-\epsilon},e^{\epsilon}]$ is equivalent to $\forall{x\in\mathcal{X}}$, the ratio of $q^i_{mk}/\lambda^i_k$ is bounded by $[e^{-\epsilon},e^{\epsilon}]$. Since the utility definition and the privacy constraints are identical to the optimization problem defined in \eqref{multi-opt}, their optimal solutions are in the same form. 
\end{proof}
}
}

\textbf{Optimal Output Range:} Next, we discuss the optimal output domain of the RR mechanism. Denote $\mathcal{X}=\{a_1,a_2,...,a_d\}$, and $\mathcal{Y}=\{a_1,a_2,...,a_f\}$. The following lemma shows when $d$ is fixed, the optimal $f^*=d$.
\begin{lem}\label{rmk2}
For the RR mechanism under LIP, to minimize the MSE between each $X_i$ and $\hat{X}_i$, when the input range of $d$ is fixed, the optimal output range $f^*$ is $f^*=d$.
\end{lem}
\noindent Detailed proof is shown in Appendix E of the supplementary document. By Lemma \ref{rmk2} we know that enlarge or narrow the output range cannot improve the utility under RR mechanism. %To this end, hashing methods which maps input data into another with smaller domain size would not further increase utility.

%Unary encoding first maps each $X_i=a_k$ into a binary vector with value $1$ at the $k$-th bit, and $0$ at the other bits, then privatize each bit accordingly. For LDP mechanism, it is provable that perturbing each bit independently under $\epsilon/2$-LDP would satisfy $\epsilon$-LDP for the whole vector. However, such property does not hold for LIP, as the prior of each bit are correlated. Suppose the vector is perturbed dependently, the mechanism would allow $1$ to appear at multiple locations in the output vector. It is equivalent to enlarge the output domain of $Y$ while the input domain remains the same. From Lemma \ref{rmk2}, we know that unary encoding cannot further improve the utility.

%Now suppose each $X_i$ is hashed first then perturbed before releasing, and each local hash function $f_i$ is shared with the server (adversary).  Note that, to reduce the communication cost as well as information loss at perturbation, a natural idea is to first hash the input value into a domain $\mathbb{H}'$ of size
%$h<d$. Equivalently, LIP with local hashing maps each $X_i$ with domain size $d$ to $Y$ with domain size $h<d$. From Lemma \ref{rmk2}, we know that LIP with hashing would not increase utility in terms of MSE.

\subsection{Utility-Privacy Tradeoff for Bounded Priors when each $R_i=G_i$}
The optimal mechanism for each user with bounded priors under M-ary model depends on the concrete $\mathcal{P}^i_{\mathcal{X}\mathcal{G}}$ and therefore, can only be derived numerically. The comparison result is shown in Sec. \ref{sec:sim}.

{Similar to the setting of Sec. \ref{sec:tradeoff}, we next assume that each $R_i=G_i$, which takes value from binary domain $\mathcal{B}$, while the prior comes from a bounded set. It is also straightforward to assume that $f_i$ is bijective for binary model.}
Define the prior uncertainty as $P^i_1=Pr(X_i=1)\in[a,b]$, where $0\le{a}\le{b}\le{1}$. The optimal solutions to the problem defined in Eq.\eqref{u_bounded} correspond to the following proposition:
\begin{prop}\label{BP-LIP}
For the constrained optimization problem defined in \eqref{u_bounded} with binary input/output, the optimal solutions for the $i$-th user are: $q^{i*}_{01}=\frac{b_i}{b_i-a_i+e^{\epsilon}}$ and $q^{i*}_{10}=\frac{1-a_i}{b_i-a_i+e^{\epsilon}}$.
\end{prop}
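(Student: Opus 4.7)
The plan is to exploit the two-dimensional structure of the binary problem, in which $\mathbf{q}^i$ reduces to $(q^i_{01},q^i_{10})$, collapse the infinite family of BP-LIP constraints into a finite polytope, and then exhibit the claimed pair as the vertex of that polytope where two specific \emph{lower}-bound LIP constraints are active.

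First I would reduce BP-LIP feasibility to a finite polytope. Writing $\lambda^i_1(p)=q^i_{01}+p(1-q^i_{01}-q^i_{10})$ (which is affine in $p$) and $\lambda^i_0(p)=1-\lambda^i_1(p)$, each LIP inequality $e^{-\epsilon}\lambda^i_k(p)\le q^i_{mk}\le e^{\epsilon}\lambda^i_k(p)$ from Eq.~\eqref{simple_cons} is linear in $p$ for any fixed $(q^i_{01},q^i_{10})$. Hence requiring it for all $p\in[a_i,b_i]$ is equivalent to requiring it only at the two endpoints $p=a_i$ and $p=b_i$, and the BP-LIP feasible set collapses to a polytope $\mathcal{T}^{bp}_i\subset[0,1]^2$ carved out by finitely many linear inequalities in $(q^i_{01},q^i_{10})$.

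Next I would locate the candidate optimum. Guided by Theorem~\ref{thm3} (where every lower LIP bound is active at the fixed-prior optimum) and the contour picture of Fig.~\ref{fig:contour}, I would guess that the optimum of Eq.~\eqref{u_bounded} sits at a vertex where the two lower LIP bounds on $q^i_{01}$ and $q^i_{10}$ are tight, each evaluated at its \emph{worst-case} prior in $[a_i,b_i]$. For any $\mathbf{q}^i$ with $1-q^i_{01}-q^i_{10}>0$, $\lambda^i_1(p)$ is increasing in $p$, so the lower bound $q^i_{01}/\lambda^i_1(p)\ge e^{-\epsilon}$ is tightest at $p=b_i$, while the symmetric bound $q^i_{10}/\lambda^i_0(p)\ge e^{-\epsilon}$ is tightest at $p=a_i$. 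Imposing both with equality yields a $2\times 2$ linear system whose unique solution is exactly $q_{01}^{i*}=b_i/(b_i-a_i+e^\epsilon)$ and $q_{10}^{i*}=(1-a_i)/(b_i-a_i+e^\epsilon)$. A direct substitution then confirms (i) $1-q^{i*}_{01}-q^{i*}_{10}=(e^{\epsilon}-1)/(b_i-a_i+e^{\epsilon})>0$, consistent with the working hypothesis, and (ii) all remaining LIP inequalities at $p\in\{a_i,b_i\}$ are slack, so the candidate lies inside $\mathcal{T}^{bp}_i$. As a sanity check, the limit $a_i=b_i=P^i_1$ collapses to the Theorem~\ref{thm3} formulas $q^{i*}_{01}=P^i_1/e^\epsilon$ and $q^{i*}_{10}=(1-P^i_1)/e^\epsilon$.

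Finally, I would close the loop on the minimax by verifying that no feasible perturbation of $\mathbf{q}^{i*}$ strictly increases $\min_{\tilde p\in[a_i,b_i]}\mathrm{Var}[\hat X^{bp}_i]$. Since $\mathrm{Var}[\hat X^{bp}_i]$ from Eq.~\eqref{MIMO_ERROR} is a rational function of $\tilde p$ for fixed $\mathbf{q}^{i*}$, a KKT/envelope analysis modeled on the Lagrangian proof of Theorem~\ref{thm3} should show that the inner worst-case prior is attained at one of $\{a_i,b_i\}$ and that stepping along either edge of $\mathcal{T}^{bp}_i$ emanating from the identified vertex strictly decreases the worst-case utility. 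The main obstacle I expect is precisely this minimax step: because $\mathrm{Var}[\hat X^{bp}_i]$ depends on $\tilde p$ both through the prior-dependent variance of $X_i$ and through the marginals $\lambda^i_k(\tilde p)$, in principle the inner worst-case $\tilde p$ could lie in the interior of $[a_i,b_i]$ for some competing vertex of $\mathcal{T}^{bp}_i$; ruling out those alternative vertices, and thereby confirming that the two-constraint vertex above dominates the whole polytope, is the delicate step that cannot be settled by symmetry alone and requires the first-order saddle-point argument just sketched.
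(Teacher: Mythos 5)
Your proposal is correct and follows essentially the same route as the paper's own (very terse) proof: use the monotonicity of $\mathrm{Var}(\hat X_i)$ in $q^i_{01},q^i_{10}$ from the proof of Theorem~\ref{thm3} to argue the optimum sits where the constraints $\lambda^i_1(p)/q^i_{01}\le e^{\epsilon}$ and $\lambda^i_0(p)/q^i_{10}\le e^{\epsilon}$ are tight at their respective worst-case priors $p=b_i$ and $p=a_i$, then solve the resulting $2\times 2$ linear system. The minimax subtlety you flag at the end is genuine but is dispatched in the paper simply by noting that this monotonicity holds uniformly for every fixed prior in $[a_i,b_i]$, so the feasible point minimizing the perturbation probabilities subject to the pooled constraints simultaneously maximizes the variance of the estimator for the inner worst-case prior as well.
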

\begin{proof}
From the proof of Theorem \ref{thm3}, for any $\theta^i_X$, the maximized $\text{Var}(\hat{X}^{bp}_i)$ is achieved at the minimum values of $q^i_{01}$ and $q^i_{10}$, which are found at the boundary of the privacy constraints. They are achieved when $\max_{P^i_1\in[a,b]}\frac{Pr(Y_i=1)}{q^i_{01}}=e^{\epsilon}$ and $\max_{P^i_1\in[a,b]}\frac{Pr(Y_i=0)}{q^i_{10}}=e^{\epsilon}$.
\end{proof}
Observe the expression of $q^{i*}_{01}$ and $q^{i*}_{10}$, when $a_i=b_i=P^i_1$, which means the prior knowledge is certain and fixed, in this case $q^{i*}_{01}=\frac{P^i_1}{e^{\epsilon}}$ and $q^{i*}_{10}=\frac{1-P^i_1}{e^{\epsilon}}$ which are identical to the optimal solutions of Theorem \ref{thm3}; When $a_i=0$, $b_i=1$, we have  the optimal solutions for the WC-LIP: $q^{i*}_{01}=q^{i*}_{10}=\frac{1}{1+e^{\epsilon}}$, which is independent of prior. This result shows that the BP-LIP provides a bridge between the notions of LIP, WC-LIP (LDP) by adjusting prior uncertainty.

\subsection{LIP-based Mechanisms with Encoding}
We next consider other variations of LIP-based mechanisms to mitigate the impact of large input domain on data utility. 
\subsubsection{LIP-based Mechanism with Local Hashing}

 The first method is LIP with Local Hashing (LH-LIP), which can be described as follows:
Denote $\mathcal{H}$ as a universal hash function family such that
each $h\in\mathcal{H}$, maps an input data $X_i$ to $X_i'\in\mathcal{X}'$. Each user randomly selects a hash function $h_i$ from $\mathcal{H}$. %Then, for any data value $x\in\mathcal{X}$, there is:
%\begin{equation*}
%    E[ \#~ \text{collisions with} ~x ]\le{\frac{|\mathcal{X}|}{|\mathcal{X}'|}}
%\end{equation*}
Then the prior distribution of $X_i'$ can be calculated by combining the input priors according to the hash function:
\begin{equation*}
    Pr(X'=x')=\sum_{x\in\mathcal{X}}Pr(X=x)\mathbbm{1}_{\{h_i(x)=x'\}}
\end{equation*}
Each user then perturbs $X'_i$ by the RR mechanism and outputs $Y_i$, then releases $<Y_i,h_i>$ to the curator. The system model is depicted in Fig. \ref{fig:LH-LIP}.
The curator, after collecting each user's $<Y_i,h_i>$, tries to estimate each local $X_i$.
The privacy metric of LH-LIP when {$R_i=G_i$ and $f_i$ is a bijective function} becomes:
\begin{equation}\label{privacy:LH}
\begin{aligned}
    &\frac{Pr(X_i=x)}{\sum_{x'\in\mathcal{X}'}Pr(X_i=x|X'_i=x')Pr(X'_i=x'|Y_i=y)}\\
    =&\frac{Pr(X_i=x)}{\frac{Pr(X_i=x)}{Pr(X_i'=h_i(x))}Pr(X'_i=h(x)|Y_i=y)}\\
    =&\frac{Pr(X'_i=h_i(x))}{Pr(X'_i=h_i(x)|Y_i=y)},
\end{aligned}
\end{equation}
Notice that, for all $x\in\mathcal{X}$, $y\in\mathcal{Y}$, the ratio in \eqref{privacy:LH} must be bounded by [$e^{-\epsilon},e^{\epsilon}$], which is equivalent to: $\forall{x'\in\mathcal{X}'}$, $y\in\mathcal{Y}$, the ratio in \eqref{privacy:LH} is bounded by [$e^{-\epsilon},e^{\epsilon}$].
The MMSE estimator at the curator becomes (after observing $Y_i=y$):
\begin{small}
\begin{equation}\label{OLH}
\begin{aligned}
&E[X_i|Y_i=y,h_i]\\
%=&\sum_{x\in{\mathcal{X}}}xPr(X_i=x|Y_i=y_i,f_i)\\
=&\sum_{x\in{\mathcal{X}}}x\sum_{x'\in\mathcal{X}'}Pr(X_i=x|X'_i=x',h_i)Pr(X'_i=x'|Y_i=y)\\
=&\sum_{x\in{\mathcal{X}}}x\sum_{x'\in\mathcal{X}'}\mathbbm{1}_{\{h_i(x)=x'\}}\frac{Pr(X_i=x)}{Pr(X'_i=x')}\frac{Pr(X'_i=x')q^i_{x'y}}{\lambda^i_{y}}\\
=&\sum_{x\in{\mathcal{X}}}x\sum_{x'\in\mathcal{X}'}\mathbbm{1}_{\{h_i(x)=x'\}}\frac{Pr(X_i=x)q^i_{x'y}}{\lambda^i_{y}}.\\
\end{aligned}
\end{equation}
\end{small}
Then, the optimization problem for the $i$-the user under LH-LIP  can be formulated as:
{
\begin{equation}\label{problem:LH}
\begin{aligned}
    &\max\text{Var}\left\{E[X_i|Y_i,h_i]\right\}\\
    \text{s.t.}~~ &e^{-\epsilon}\le{\frac{Pr(X'_i=h_i(x))}{Pr(X'_i=h_i(x)|Y_i=y)}}\le{e^{\epsilon}}.
\end{aligned}
\end{equation}}
\begin{prop}[Optimal mechanism for LH-LIP]\label{propLH}
For the constrained optimization problem defined in \eqref{problem:LH}, the optimal solutions are: $q^{i*}_{x'x'}=1-(1-\hat{P}^i_{x'})/e^{\epsilon}$, $q^{i*}_{x'y}=\hat{P}^i_{y}/e^{\epsilon}$, where $\hat{P}^i_{x'}$ denotes the prior distribution of $\sum_{x:h_i(x)=x'}P^i_x$.
\end{prop}
{The results presented in Proposition \ref{propLH} can be directly extended to the case when $f_i$ is surjective by similar derivations to \eqref{eq26}}. Observe that the hashing phase is followed by the RR-LIP, but with a smaller input domain. {It is worth noting that, although the hash function leads to collisions, but due to the fact that each hash function is deterministic, it cannot enhance the privacy measured by LIP.} %If for some pairs of $x$ and $x'$, $Pr(X=x|X'=x')=0$, then hashing does not provide additional randomness to $x$.

For data utility, collision due to hashing will cause information loss and therefore impact utility. Typically, for small $\epsilon$, the information loss due to the mechanism's perturbation is dominant. Whereas, when $|\mathcal{X}' |$ is small, the collision in hashing dominates the information loss. In \cite{Tianhao}, authors propose Optimal Local Hashing for LDP, which finds the optimal $|\mathcal{X}'|$ under different $\epsilon$s and $|\mathcal{X}|$. However, in context-aware mechanisms, the perturbation parameters depend on the prior. $|\mathcal{X}'|$ only has indirect impact on $q^i_{x'y_i}$ in \eqref{OLH}. As a result, there is no closed-form optimal solution for $|\mathcal{X}' |$. We simulate to study the optimal $|\mathcal{X}'|$ in Section. 6.1.3.

\subsubsection{LIP-based Mechanism with Unary Encoding}\label{sec:uelip}
We next consider another variant of LIP mechanism based on Unary Encoding (UE). From \cite{Tianhao}, we know that, for histogram estimation, incorporating UE in mechanism design could improve the utility-privacy tradeoff of LDP. The intuition behind this improvement is, UE maps a high-dimensional data into a binary vector, the input domain is reduced. On the other hand, in the output vector, multiple locations can be 1, therefore, the input sensitivity is also relaxed. Next, we study LIP based UE mechanism, which can be described as follows.
The UE maps each user's raw data $R_i=r$ into a $|\mathcal{R}|$-bit binary vector with the $r$-th bit equals $1$ and others are zeros. {Note that such local operation on the raw data can be viewed as a local function. To discriminate with other functions, denote $\phi(\cdot)$ as the local function of unary encoding}, and $\{U^k_i\}_{k=1}^{|\mathcal{R}|}=\phi_i(R_i)$ as the encoded vector (input data), with $u_1^{|\mathcal{R}|}\in{\mathcal{B}^{|\mathcal{R}|}}$ as a vector instance. It is worth noting that $|\mathcal{B}^{|\mathcal{R}|}|=|\mathcal{R}|$. Specifically, $U_i^k=u_k\in\{0,1\}$ denotes the $k$-th bit of $U_i$. The mechanism then perturbs each bit independently through a binary RR perturbation channel and releases $\{Y^k_i\}_{k=1}^{|\mathcal{R}|}$. We denote $q^i_{01}$ as the likelihood $Pr(Y_i^k=1|U_i^k=0)$ and $q^i_{10}$ as $Pr(Y_i^k=0|U_i^k=1)$. Note that this may not be optimal, since we assume different bits are perturbed by the same channel. Then the utility function becomes: 
\begin{equation}\label{utility-Unary}
\begin{aligned}
    &E\left[\left(\{U_i^k\}_{k=1}^{|\mathcal{R}|}-E\left[\{U_i^k\}_{k=1}^{|\mathcal{R}|}|\{Y_i^k\}_{k=1}^{|\mathcal{R}|}\right]\right)^2\right]\\
   % =&\sum_{k=1}^{|\mathcal{R}|}E[(U_i^k-E[U_i^k|Y_i^k])^2]\\
    =&\sum_{k=1}^{|\mathcal{R}|}\left\{\text{Var}[U_i^k]-\text{Var}\left[E[U_i^k|Y_i^k]\right]\right\},
    \end{aligned}
\end{equation}
where $\text{Var}[U_i^k]=P^i_k(1-P^i_k)$ is a constant, and $E[U_i^k|Y_i^k]$ can be expressed as:
\begin{equation}
    \begin{aligned}
    \frac{Pr(R_i=k)(1-q^i_{10})}{Pr(Y_i^k=1)}\mathbbm{1}_{\{Y_i^k=1\}}+\frac{Pr(R_i=k)q^i_{10}}{Pr(Y_i^k=0)}\mathbbm{1}_{\{Y_i^k=0\}}.
    \end{aligned}
\end{equation}
%From the proof of Theorem \ref{thm3}, we know that the optimal values of $q^i_{01}$ and $q^i_{10}$ to maximize $\text{Var}[E[U_i^k|Y_i^k]]$ are at the boundary of the feasible region of the privacy constraints. 
The metric of the privacy constraints can be expressed as:
\begin{equation}\label{eq:unary1}
\begin{aligned}
&{\frac{Pr\left(\{Y^k_i\}_{k=1}^{|\mathcal{R}|}=y_1^{|\mathcal{R}|}\right)}{Pr\left(\{Y^k_i\}_{k=1}^{|\mathcal{R}|}=y_1^{|\mathcal{R}|}|\{U^k_i\}_{k=1}^{|\mathcal{R}|}=u_1^{|\mathcal{R}|}\right)}}.\\
\end{aligned}
\end{equation}
Then, the optimization problem for the $i$-th user under LIP with unary encoding can be formulated as:
\begin{equation}\label{problem_unary}
\begin{aligned}
    &\max_{q^i_{01},q^i_{10}}\sum_{k=1}^{|\mathcal{R}|}\text{Var}[E[U^k_i|Y_i^k]]\\
    &\text{s.t.}~~ e^{-\epsilon}\le{\text{Eq.} ~\eqref{eq:unary1}}\le{e^{\epsilon}}.
\end{aligned}
\end{equation}
The optimal parameters $q^{i*}_{01}$ and $q^{i*}_{10}$ for the above problem are stated in the following Theorem.
The proof is provided in Appendix F of the supplementary document.
\begin{thm}[Optimal mechanism for UE-LIP]\label{thm4}
For the constrained optimization problem defined in \eqref{problem_unary}, the optimal solutions for the $i$-th user are: $q^{i*}_{01}=\frac{1-P^i_{\min}}{e^{\epsilon}-2P^i_{\min}+1}$, and $q^{i*}_{10}=\frac{1}{2}$, where $P^i_{\min}=\min_{r\in\mathcal{R}}P^i_r$.
\end{thm}

\noindent Observe that, each local optimal parameter $q^{i*}_{01}$ depends on the data prior and is a monotonically decreasing function of $P^i_{\min}\in[0,1/|\mathcal{R}|]$. This implies that if the user's prior is uniformly distributed, i.e., $P^i_{\min}=1/|\mathcal{R}|$, the parameter $q^{i*}_{01}$ achieves its minimum, and the utility can be enhanced.

We next extend Theorem \ref{thm4} to consider prior uncertainty. 
When there exist uncertainty on $\theta_X^i$, the local optimization problem for the $i$-th user becomes:
\begin{equation}\label{problem_unary_uncertain}
\begin{aligned}
    &\max_{q^i_{01},q^i_{10}}\min_{\theta^i_R\in\mathcal{P}^i_{\mathcal{R}}}\sum_{k=1}^{|\mathcal{R}|}\text{Var}[E[U^k_i|Y_i^k,\theta_R^i]]\\
    &\text{s.t.}~~ e^{-\epsilon}\le{\text{Eq.} ~\eqref{eq:unary1}}\le{e^{\epsilon}}, \forall{\theta^i_R\in\mathcal{P}^i_{\mathcal{R}}}.
\end{aligned}
\end{equation}

\noindent The optimal $q^{i*}_{01}, q^{i*}_{10}$ are stated in the following corollary:

\begin{cor}
For the constrained optimization problem defined in \eqref{problem_unary_uncertain}, the optimal solutions for the $i$-th user are: $q^{i*}_{01}=\frac{1-\min P^i_{\min}}{e^{\epsilon}-2\min P^i_{\min}+1}$, and $q^{i*}_{10}=\frac{1}{2}$, where $\min P^i_{\min}=\min_{\theta_R^i\in\mathcal{P}^i_{\mathcal{R}}}\min_{r\in\mathcal{R}}P^i_r$.
\end{cor}
%Intuitively, $q^{i*}_{01}$ in the model with prior uncertainty is larger than that with a fixed prior, as there are 

\subsection{Comparison with LDP based Mechanism} \label{sec:LDP}
Firstly, we would like to compare the number of privacy constraints in LIP and LDP. The results are summarized in the following remark:
\begin{rmk} \textbf{(Complexity of LDP vs LIP)}.
LDP involves $|\mathcal{Y}||\mathcal{X}| (|\mathcal{X}| - 1)$ linear constraints, while LIP involves $2 |\mathcal{Y}||\mathcal{X}|$ linear constraints. Therefore, when $|\mathcal{X}|>2$, LDP incurs more privacy constraints than LIP.
\end{rmk}

Next, we compare the achievable utilities by the optimal mechanisms based on LDP and LIP.
It is readily seen that the optimal mechanisms proposed in \cite{Tianhao} also apply for the utility functions defined in this paper. The optimal parameters are at the boundary of the privacy constraints. In particular, for RR mechanism, the optimal parameters for LDP are: $\bar{q}^{i*}_{mm}=\frac{e^{\epsilon}}{e^{\epsilon}+|\mathcal{X}|-1}$, $\bar{q}^{i*}_{mk}=\frac{1}{e^{\epsilon}+|\mathcal{X}|-1}$, $\forall{m,k\in{1,2,...,d}}$, $m\neq{k}$. For LDP with Local Hash (LH-LDP), $|\mathcal{X}|$ is changed to $|\mathcal{X'}|$.
For LDP with Optimal Unary Encoding (OUE-LDP), $\bar{q}^{i*}_{10}=1/2$ and $\bar{q}^{i*}_{01}=\frac{1}{e^{\epsilon}+1}$.
 Denote $\mathcal{E}_i^{LIP*}$ as the local MSE from collecting the $i$-th user's data under LIP constraints and $\mathcal{E}_i^{LDP*}$ as that under LDP constraints. Comparing $\mathcal{E}_i^{LIP*}$ with $\mathcal{E}_i^{LDP*}$, we have the following proposition:

\begin{prop}\label{prop:LIPLDP}
Given an arbitrary but fixed prior distribution, $\forall{\epsilon}\in{\mathbf{R}^+}$, there is $\mathcal{E}_i^{LIP*}\le{\mathcal{E}_i^{LDP*}}$.
\end{prop}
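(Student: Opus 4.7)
The plan is to exploit the set-inclusion relationship between the feasible regions of the two optimization problems, rather than to substitute the closed-form optima of Theorem~\ref{thm3} and Proposition~\ref{optsolLDP} into the MSE expression \eqref{MIMO_ERROR} and grind through the algebra. Since both Opt-LIP and Opt-LDP minimize the same objective $\mathcal{E}_i(\mathbf{q}^i)$ subject only to different privacy constraints, it suffices to show that the LDP feasible region is contained in the LIP feasible region for the same budget $\epsilon$; monotonicity of $\min$ over nested sets then yields the claim immediately.

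The first step will be to state the containment carefully: denote by $\mathcal{T}_i^{LDP}$ and $\mathcal{T}_i^{LIP}$ the sets of stochastic matrices $\mathbf{q}^i$ satisfying the LDP constraints in \eqref{probelm_ldp} and the LIP constraints in \eqref{simple_cons}, respectively, for the same $\epsilon$ and fixed prior $\{P_m^i\}$. I will then verify that $\mathcal{T}_i^{LDP}\subseteq\mathcal{T}_i^{LIP}$ by a direct computation. Assuming $q^i_{mk}/q^i_{nk}\le e^{\epsilon}$ for every $m,n,k$, write
\begin{equation*}
\frac{\lambda^i_k}{q^i_{mk}} \;=\; \sum_{n=1}^{d} \frac{q^i_{nk}}{q^i_{mk}}\, P^i_n,
\end{equation*}
so that bounding each ratio $q^i_{nk}/q^i_{mk}$ by $e^{\epsilon}$ from above (and by $e^{-\epsilon}$ from below, using the symmetric LDP constraint) yields $e^{-\epsilon}\le\lambda^i_k/q^i_{mk}\le e^{\epsilon}$, which is exactly \eqref{simple_cons}. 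This reconfirms, in a self-contained way for the MIMO randomized-response family, the general implication ``$\epsilon$-LDP $\Rightarrow$ $\epsilon$-LIP'' already cited in Section~\ref{sec:privacy_model}.

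Given the containment, the final step is one line: because the objective $\mathcal{E}_i(\mathbf{q}^i)$ in Eq.~\eqref{largerlaw2} is identical in both problems and the infimum of a fixed function over a superset cannot exceed the infimum over any subset, we conclude
\begin{equation*}
\mathcal{E}_i^{LIP*} \;=\; \min_{\mathbf{q}^i\in\mathcal{T}_i^{LIP}} \mathcal{E}_i(\mathbf{q}^i) \;\le\; \min_{\mathbf{q}^i\in\mathcal{T}_i^{LDP}} \mathcal{E}_i(\mathbf{q}^i) \;=\; \mathcal{E}_i^{LDP*}.
\end{equation*}
I do not anticipate any real obstacle here; the only thing to be mindful of is to carry out the containment argument with the same privacy budget on both sides (since the relationship in Section~\ref{sec:privacy_model} also states that $\epsilon$-LIP only implies $2\epsilon$-LDP, the reverse inclusion does not hold and strict improvement is in fact generic). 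If desired, one can additionally note that the inequality is strict whenever the LDP optimum from Proposition~\ref{optsolLDP} lies in the interior of $\mathcal{T}_i^{LIP}$, which occurs as soon as the prior is non-uniform, thereby matching the intuition discussed around Fig.~\ref{fig:example_channel}.
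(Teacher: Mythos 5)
Your proposal is correct and follows essentially the same route as the paper's own proof: both arguments reduce the claim to the containment $\mathcal{T}_i^{LDP}\subseteq\mathcal{T}_i^{LIP}$ for the same budget $\epsilon$ and then invoke monotonicity of the minimum over nested feasible sets. Your explicit verification of the containment via $\lambda^i_k/q^i_{mk}=\sum_{n}(q^i_{nk}/q^i_{mk})P^i_n$ is a correct, self-contained substitute for the paper's appeal to the previously cited implication ``$\epsilon$-LDP $\Rightarrow$ $\epsilon$-LIP.''
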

\begin{proof}
Since $\mathcal{E}_i^{LIP*}$ and $\mathcal{E}_i^{LDP*}$ are results of the objective function evaluated at different optimal solutions satisfying corresponding privacy constraints. It suffices to show that the optimal perturbation parameters of LDP are within the feasible region of LIP. As $\epsilon$-LDP implies $\epsilon$-LIP, $\forall{\epsilon\ge{0}}$, which means all the $\textbf{q}^i$s that satisfying LDP automatically satisfies LIP.    
\end{proof}
Notice that the curator may take advantage of his prior knowledge to make a further estimation. Nevertheless, LDP based mechanisms suffer a decreased utility than those based on LIP because LIP also utilizes the prior knowledge for mechanism design. Also, note that the optimization problems for LIP and LDP only differ in the feasible regions formed by corresponding privacy constraints. While the feasible region of LDP is fixed for all possible priors, the feasible region of LIP reshapes when the prior changes. %Consider a worst-case scenario, and comparing the feasible regions formed by LIP and LDP constraints, we have the following remark: 

In particular, we compare the optimal solutions for mechanisms with UE:
\begin{equation}
    q^{i*}_{01}-\bar{q}^{i*}_{01}=\frac{P^i_{\min}(1-e^{\epsilon})}{(e^{\epsilon}-2P_{\min}+1)(e^{\epsilon}+1)}\le{0}.
\end{equation}
The distance diminishes to $0$ if $P_{\min}=0$ (worst-case). Which means UE-LIP will always achieve better utility than OUE-LDP. 
The relationship also applies to BP-LIP and LDP.

\subsection{Real-world Applications of LIP}\label{sec:application}
Next, we discuss how to apply the LIP based mechanisms described above to the following applications. 

\textbf{(Weighted) Summation:}
% Summation results usually measure an average property of the surveyed individuals, and it is straightforward to extend the direct summation to a weighted summation (plus an offset) for more general applications. %For example, assume that the curator is interested in some particular users more than others, such as employer v.s. employees; adults v.s. children; the professionals v.s. amateurs. These ``important" users' data are assigned with larger coefficients than others. The offset can be used as a correction to the raw data.
%or data summation, the collected data from each user is from a large domain. In this case $f_i(X_i)=X_i$, and the models described in Sec. \ref{tradeoff} are suitable for this application. 
For weighted summation, %$X_i=f_i(R_i)=c_iR_i+b_i$, and 
the aggregated result is $S_{sum}=\sum_{i=1}^N(c_iR_i+b_i)$ with the estimator of $\hat{S}_{sum}=E[S_{sum}|\bar{Y}]$, %where $\hat{R}_i=E[R_i|Y_i]$. 
Given any $c_i$ and $b_i$, the MSE becomes:
\begin{equation}\label{eq:sumation}
    \begin{aligned}
        &E[(S_{sum}-\hat{S}_{sum})^2]\\
        =&E\left[\left(\sum_{i=1}^N(c_iR_i+b_i)-E\left[\sum_{i=1}^N(c_iR_i+b_i)|\bar{Y}\right]\right)^2\right]\\
        =&E\left[\left(\sum_{i=1}^N(c_iR_i+b_i)-\sum_{i=1}^NE\left[(c_iR_i+b_i)|{Y_i}\right]\right)^2\right]\\
        %=&E\left\{(\sum_{i=1}^N[c_i(X_i-\hat{X}^{s}_i)])^2\right\}\\
    %=&E\left\{\sum_{i=1}^N(X_i-\hat{X}^{s}_i)^2+\sum_{i,j=1}^N(X_i-\hat{X}^{s}_i)(X_j-\hat{X}^{s}_j)\right\}\\
    %\overset{(a)}{=}&\sum_{i=1}^NE[(X_i-\hat{X}^s_i)^2],
    \end{aligned}
\end{equation}
Denote $X_i=f_i(R_i)=c_iR_i+b_i$, $\hat{X}^s_i=E[X_i|Y_i]$ ($s$ stands for summation), \eqref{eq:sumation} becomes:
\begin{equation}\label{eq:sumation2}
    \begin{aligned}
        =&E\left[\left(\sum_{i=1}^NX_i-\sum_{i=1}^NE\left[X_i|{Y_i}\right]\right)^2\right]\\
    =&E\left[\sum_{i=1}^N(X_i-\hat{X}^{s}_i)^2+\sum_{i,j=1}^N(X_i-\hat{X}^{s}_i)(X_j-\hat{X}^{s}_j)\right]\\
    \overset{(a)}{=}&\sum_{i=1}^NE[(X_i-\hat{X}^s_i)^2],
    \end{aligned}
\end{equation}
where $(a)$ follows the independent user assumption. 
Note that, when users have uncertain priors, as long as the curator possesses each accurate $\theta_X^i$, he is able to design each local unbiased estimator accordingly, which makes the global utility of $E[(S_{sum}-\hat{S}_{sum})^2]$ decomposable.

So far, the utility function of weighted summation can be  expressed as the form in \eqref{probelm}.
\begin{rmk}
{Each user's local function for (weighted) summation is $f_i(R_i)=c_iR_i+b_i$. For the curator, after observing $Y_i=y$, for RR-LIP mechanism, each optimal local estimator is $\hat{X}^s_i=E[X_i|Y_i=y]=\sum_{x\in\mathcal{X}}q^i_{xy}P^i_x/\lambda^i_y$; for LH-LIP mechanism, $\hat{X}^s_i=E[X_i|Y_i=y,h_i]$ (shown in \eqref{OLH}). }
%Each optimal local estimator in (weighted) summation is $\hat{X}^s_i=c_i\hat{R}_i+b_i$. RR-LIP mechanism and LH-LIP can be directly applied to this application. 
\end{rmk}
{Note that UE-LIP as a binary encoding based method is inherently designed for frequency estimation (data value-independent), not for value related functions. Therefore, UE-LIP is not appropriate for summation query.}
%transfers the raw data into a binary vector, and the data value only determines the position of $1$ in the vector. Also, there could be multiple positions in the vector to be $1$, hence it will be challenging to estimate the data value using the collected vector while remaining the optimality of the framework. Therefore, UE-LIP 

\textbf{Histogram Estimation}\label{model_applications}
Histogram is useful to estimate or compare the popularity or frequency of some categories. %The difference between the application of the histogram and summation is: the estimator of histogram is a random vector rather than a random variable, as each data value stands for a category. 
We can obtain the estimator of the histogram vector, $\hat{{S}}_{hist}=\{\hat{S}_1,\hat{S}_2,...,\hat{S}_{|\mathcal{R}|}\}=\{E[S_1|\bar{Y}],E[S_2|\bar{Y}],...,E[S_{|\mathcal{R}|}|\bar{Y}]\}$, with each entry $E[S_k|\bar{Y}]$:
\begin{equation}
\setlength{\abovedisplayskip}{3pt}
\setlength{\belowdisplayskip}{3pt}
E\left\{\sum^N_{i=1}\mathbbm{1}_{\{R_i=a_k\}}|\bar{Y}\right\}
    =\sum^N_{i=1}Pr(R_i=a_k|Y_i).
\end{equation}
%The second norm distance of $\mathbb{S}$ and $\hat{\mathbb{S}}$ is 
%\begin{equation}
%    \norm{\mathbb{S}-\hat{\mathbb{S}}}^2=\sum^d_{k=1}(S_k-\hat{S}_k)^2.
%\end{equation}
Thus the mean square error of the estimation is
\begin{equation}\label{Error}
\begin{aligned}
&\sum^{|\mathcal{R}|}_{k=1}E\left[\left(\sum^N_{i=1}\{\mathbbm{1}_{\{R_i=a_k\}}-E[\mathbbm{1}_{\{R_i=a_k\}}|Y_i]\}\right)^2\right]\\
    \overset{(a)}{=}&\sum^{|\mathcal{R}|}_{k=1}
    \sum^N_{i=1}E[(\{\mathbbm{1}_{\{R_i=a_k\}}-E[\mathbbm{1}_{\{R_i=a_k\}}|Y_i]\})^2]\\
    =&\sum^{|\mathcal{R}|}_{k=1}\sum^N_{i=1}\{\text{Var}(\mathbbm{1}_{\{R_i=a_k\}})-\text{Var}(E[\mathbbm{1}_{\{R_i=a_k\}}|Y_i])\}.\\
\end{aligned}
\end{equation}
The (a) of \eqref{Error} is because each user's local error is independent, and the expectation of the unbiased estimator is identical to that of the estimated value. For histogram estimation, $X_i=f_i(R_i)=\{\mathbbm{1}_{\{R_i=a_1\}},\mathbbm{1}_{\{R_i=a_2\}},...,\mathbbm{1}_{\{R_i=a_{|\mathcal{R}|}\}}\}$ (the form is identical to that of $\phi$ for unary encoding studied in Sec. \ref{sec:uelip}), and \eqref{Error} can be expressed as:
\begin{equation*}
    \sum^N_{i=1}\{\text{Var}(X_i)-\text{Var}(E[X_i|Y_i])\}.
\end{equation*}
which is identical to the form in \eqref{probelm}.
%Hence, the problem of estimating a histogram can be formulated as, $\forall{m,k\in\{1,2,...,d\}},  {i\in\{1,2...N\}}$:
%\begin{equation}\label{eq:histogram}
%\setlength{\abovedisplayskip}{3pt}
%\setlength{\belowdisplayskip}{3pt}
%\begin{aligned}
%&\min \text{\eqref{Error}},\\
%&\text{s.t.}~ %e^{-\epsilon}\le\frac{\lambda^i_k}{q^i_{mk}}\le{e^{\epsilon}}.
% \end{aligned}
%\end{equation}
%The optimal solution for the above problem are presented in the following corollary: 

%Observe that each local utility function of histogram is identical to that defined in \eqref{largerlaw}, or \eqref{utility-Unary}, which means both RR mechanism and unary encoding mechanism can be applied for histogram estimation.
\begin{rmk}
{Each user's local function for histogram estimation is: $f_i(R_i)=\phi(R_i)=\{\mathbbm{1}_{\{R_i=a_1\}},\mathbbm{1}_{\{R_i=a_2\}},...,\mathbbm{1}_{\{R_i=a_{|\mathcal{R}|}\}}\}$. With RR-LIP or LH-LIP, given $Y_i=y$, each optimal local estimator at the curator is: $\hat{X}^h_i=\{Pr(R_i=a_1|Y_i=y), Pr(R_i=a_2|Y_i=y),..., Pr(R_i=a_{|\mathcal{R}|}|Y_i=y)\}$; With UE-LIP, given $\{Y^k_i\}_{k=1}^{|\mathcal{R}|}=y_1^{|\mathcal{R}|}$, $\hat{X}^{h}_i=\{Pr(U_i^1=1|Y_i^1=y^1),Pr(U_i^2=1|Y_i^2=y^2),...,Pr(U_i^{|\mathcal{R}|}=1|Y_i^{|\mathcal{R}|}=y^{|\mathcal{R}|})\}$.}
\end{rmk}

\section{Evaluation}\label{sec:sim}
In this Section, we simulate with synthetic and real data to validate our analytical results. In the first part, we validate via Monte-Carlo simulation. We examine the impact on the utility-privacy tradeoff from the prior distribution, data correlation, and input domain. We also consider a model where the utility is measured by Hamming distance instead of MSE. In the second part, we evaluate with real-world datasets:  Gowalla (location check-ins) and Census Income (People income survey).
We evaluate utility by the square root average MSE in order to normalize the influence of user count, also to make it comparable to the absolute error. Note that doing so does not affect the optimalities in any of our optimization problems. In addition, since LIP provides a relaxed privacy guarantee than LDP, it is not easy to compare their utilities under the same privacy guarantee. Thus,  we compare their optimal utilities under any given privacy budget of $\epsilon$. {Since for the experiments with synthetic data, each mechanism takes as input $X_i$ not $R_i$, we directly generate $X_i$ in the following experiments. } 
%All the simulations are done in Matlab (R2016a) on a Dell desktop (OptiPlex 7040; CPU: Intel (R) Core (TM) i5-6500 @ 3.2GHz; RAM 8.0 GB; OS: windows 64bit).

\subsection{Simulation Results with Synthetic Data}
To generate synthetic data, we consider $5000$ users in the system. We first randomly generate a local prior distribution $\theta^i_X$ for each user and sample each user' input data $X_i$ from $\theta^i_X$. Then, for the model with prior uncertainty, each $\theta^i_X$ is generated for multiple times as the bounded set containing all priors, and the true prior is randomly chosen from this set. Each $X_i$ takes value from domain $\mathcal{X}$ (the default domain for M-ary model is $\mathcal{X}=\{0,1,2,3,4\}$). Each user possesses secret data $G_i$ which also takes value from $\mathcal{X}$ (it can be directly extended to the case where $G_i$ comes from a different domain than $X_i$). Then randomly generate correlation between $X_i$ and $G_i$ (for multiple times as the bounded set). %We next discuss the impact of correlation, prior uncertainty, domain size and different privacy notions on the Utility in terms of MSE.
\begin{figure}[t]
\centering
\includegraphics[width=6cm]{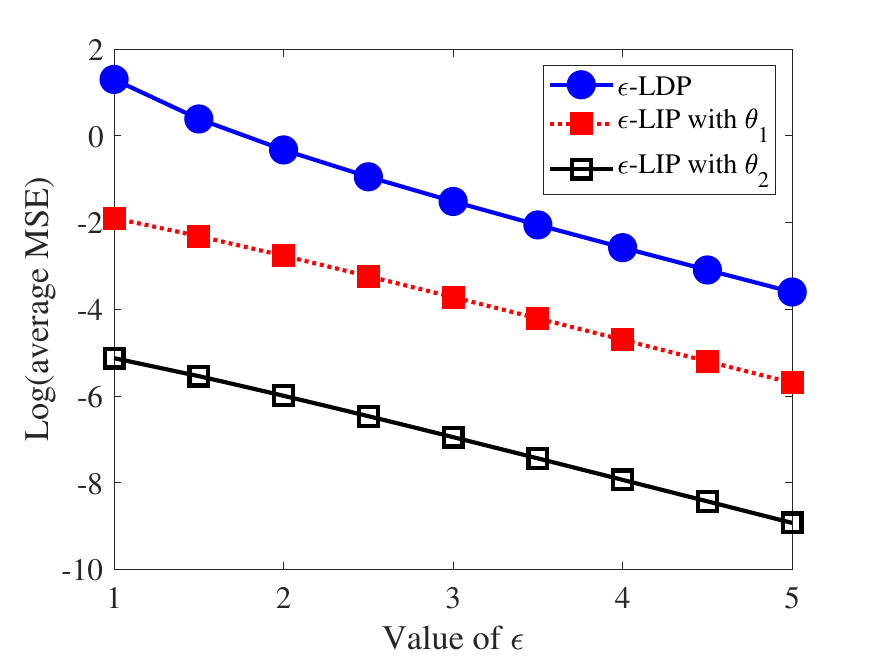}
\setlength{\abovecaptionskip}{-3pt}
\setlength{\belowcaptionskip}{-3pt}
\caption{The utility-privacy tradeoff comparison among prior-aware and prior-free models considering different prior distribution.}
\centering
\label{fig:prior_compare}
\vspace{-10pt}
\end{figure}

\subsubsection{Impact of different prior distributions on utility-privacy tradeoff}\label{sec:sim_1}

Firstly, we would like to demonstrate the impact of different prior distributions on the utility-privacy tradeoff. Let each user share the same prior distribution, and for each user $X_i=G_i$. We consider two sets of priors, one is uniformly distributed $\theta_1=\{0.2, 0.2, 0.2, 0.2, 0.2\}$ and the other is more skewed: ${\theta}_2=\{0.025, 0.025, 0.025, 0.025, 0.9\}$. In addition, we also compare to $\epsilon$-LDP based mechanism with prior-independent estimator $\hat{C}$\cite{Tianhao}. This model treats $X_i$ as instance rather than random variable:
\begin{equation}
\setlength{\abovedisplayskip}{3pt}
\setlength{\belowdisplayskip}{3pt}
\hat{C}=\frac{\sum^N_{i=1}Y_i-N{p_i}}{1-2p_i},
\end{equation}
where $q^i_{mm}=p_i=\frac{e^{\epsilon}}{e^{\epsilon}+|\mathcal{X}|-1}$, is the optimal perturbation parameter, The context-free estimation results in an MSE of:
\begin{equation}
\setlength{\abovedisplayskip}{3pt}
\setlength{\belowdisplayskip}{3pt}
\begin{aligned}
E[(S-\hat{C})^2]=\text{Var}[\hat{C}]
=\frac{N(|\mathcal{X}|-2+e^{\epsilon})}{(e^{\epsilon}-1)^2}.
\end{aligned}
\end{equation}

%we compare the utility-privacy tradeoff of LIP notions with two other LDP notions: for the first LDP notion, 

%The second LDP model is the Opt-LDP model, which is discussed in Section \ref{sec:LDP}.

\noindent The comparison is shown in Fig. \ref{fig:prior_compare}, where $\epsilon$ ranges from $1$ to $5$  with a step of $0.5$.  We can observe that considering the prior in data perturbation and aggregation can largely improve the utility. When each $\theta^i_X=\theta_1$ (prior is uniformly distributed), the utility achieved by $\epsilon$-LIP is decreased than the case when the prior is more skewed, i.e., each $\theta^i_X=\theta_2$. Intuitively, 
 with a skewed prior, users' inputs are highly certain, only considering prior in the estimator can already result in accurate aggregation. As the privacy constraints of LIP with both $\theta_1$ and $\theta_2$ are parameterized by the same $\epsilon$, which means a skewed prior would result in higher utility than a uniformly (or close to uniformly) distributed one under the same privacy guarantee.

 \begin{figure}[t]
\centering
\includegraphics[width=6cm]{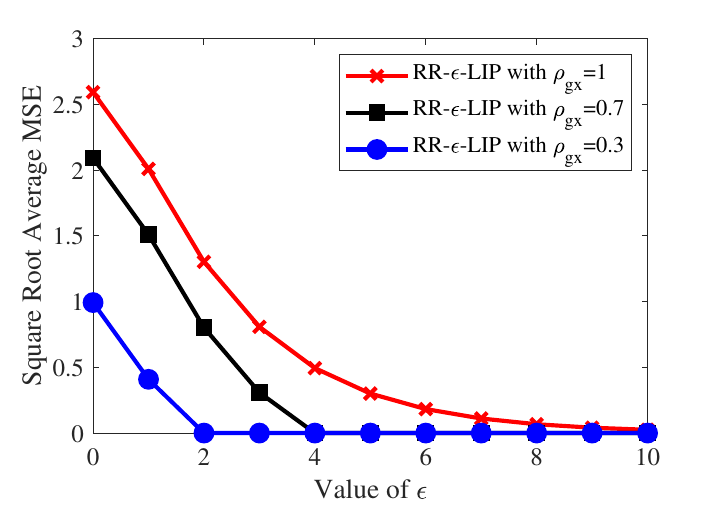}
\setlength{\abovecaptionskip}{-3pt}
\setlength{\belowcaptionskip}{-3pt}
\caption{The impact of correlation between each $X_i$ and $G_i$ to the utility privacy tradeoff provided by LIP.}
\centering
\label{fig:correlation}
\vspace{-10pt}
\end{figure}

 \begin{figure}[t]
\centering
\includegraphics[width=6cm]{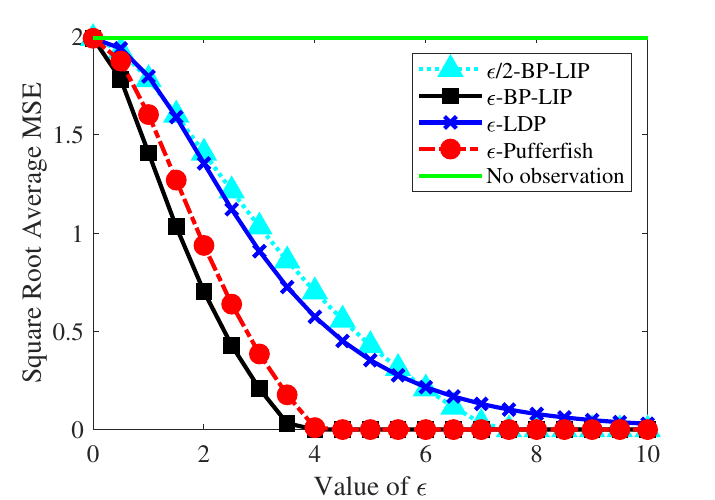}
\setlength{\abovecaptionskip}{-3pt}
\setlength{\belowcaptionskip}{-3pt}
\caption{Utility-Privacy tradeoff comparison with bounded prior among different privacy notions.}
\centering
\label{fig:U_P_latent}
\vspace{-10pt}
\end{figure}

\begin{figure}[t]
\centering 
\subfigure[Utility comparison when $|\mathcal{X}|$ increases from $10$ to $50$, $\epsilon$ is fixed to be $\epsilon=1$] 
{ \includegraphics[width=6cm]{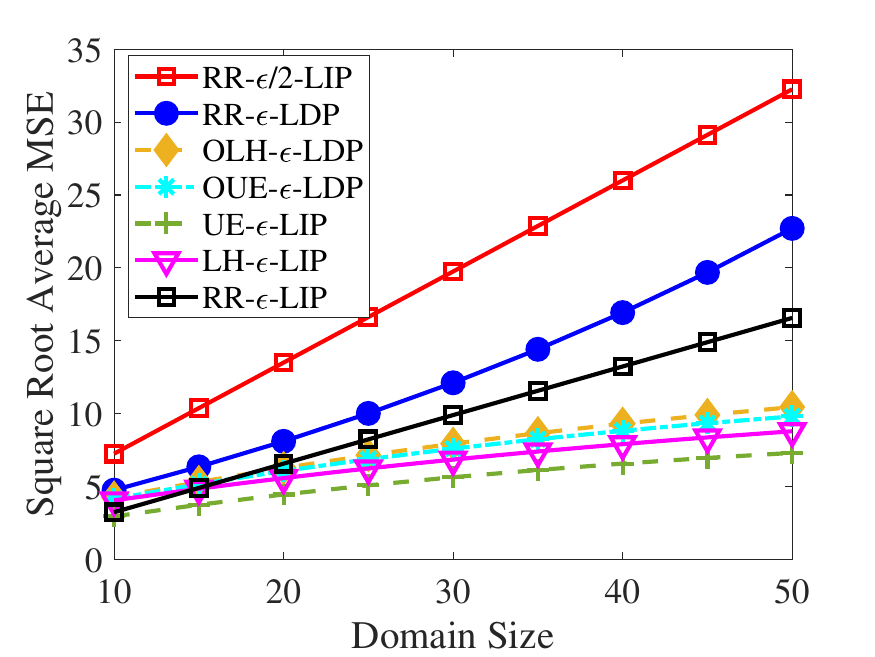} 
\label{fig_domain} }\\
\subfigure[Utility comparison for LH-$\epsilon$-LIP with different hashing sizes given a skewed prior, $\epsilon=1$, $|\mathcal{X}|=20$.] 
{ \includegraphics[width=6cm]{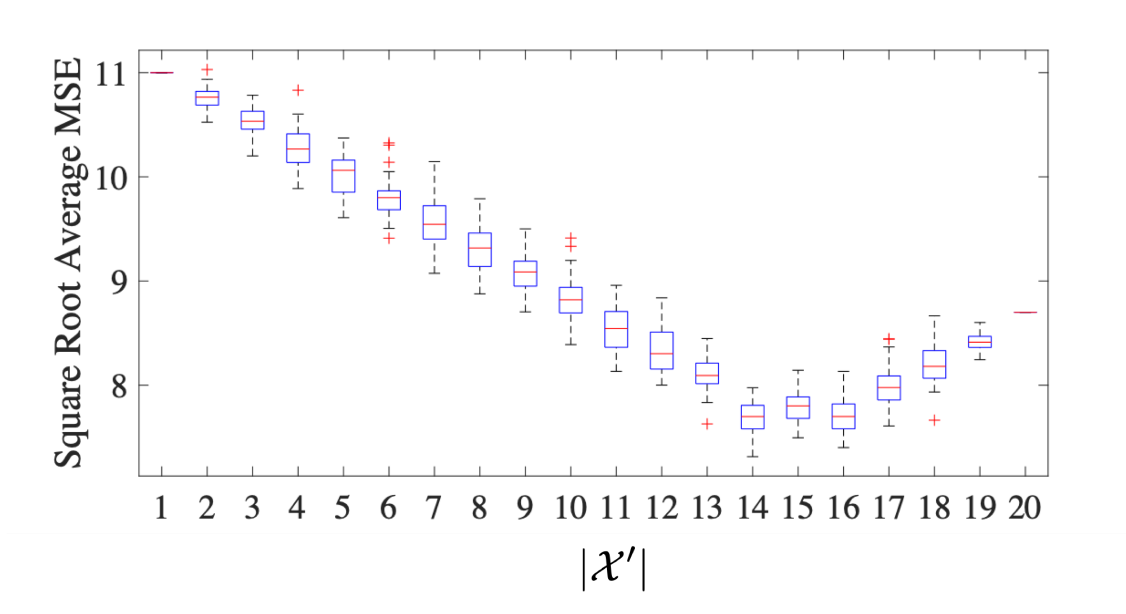} 
\label{fig:domain2} } 
\caption{Impact of domain size on data utility and optimal hashing domain.} 
\label{fig:single_model_compare} 
\vspace{-10pt}
\end{figure}

% \begin{figure}[t]
%\centering
%\includegraphics[width=7cm]{domain2.eps}
%\setlength{\abovecaptionskip}{-3pt}
%\setlength{\belowcaptionskip}{-3pt}
%\caption{Utility comparison when $|\mathcal{X}|$ increases from $10$ to $50$, $\epsilon$ is fixed to be $\epsilon=1$}
%\vspace{-10pt}
%\centering
%\label{fig_domain}
%\end{figure}

% \begin{figure*}[t]
%\centering 
%\subfigure[Utility-privacy tradeoff comparison with $\epsilon$ ranges from $0$ to $10$ (Number of users $N=5000$ and domain size $|\mathcal{X}|=5$ is fixed).]
%{ \includegraphics[width=0.22\textwidth]{MIMO_d_5N_100.eps} 
%\label{dfixed} } \quad
%\subfigure[Utility-privacy tradeoff comparison with $|\mathcal{X}|$ increasing from 2 to 20 (Number of users $N=5000$ and privacy budget $\epsilon=1$ is fixed).] 
%{ \includegraphics[width=0.22\textwidth]{domain.eps}
%\label{fig_domainsize} }\quad
%\subfigure[Monte-Carlo simulation for the MIMO model with latent variable (convergence when $N$ increases).] 
%{ \includegraphics[width=0.22\textwidth]{corr_2.eps} 
%\label{fig:Monta-Carlo} }\quad
%\subfigure[Monte-Carlo simulation for the MIMO model with latent variable (convergence when $N$ increases).] 
%{ \includegraphics[width=0.22\textwidth]{corr_1.eps} 
%\label{fig:Monta-Carlo} }
%\caption{Utility-privacy tradeoff  comparison when $N$ users have different local priors.} 
%\label{MIMO} 
%\vspace{-10pt}
%\end{figure*}

\begin{figure}[t]
\centering 
\subfigure[Comparison between LIP and LDP when utility is measured by Hamming distance with binary data] 
{ \includegraphics[width=6cm]{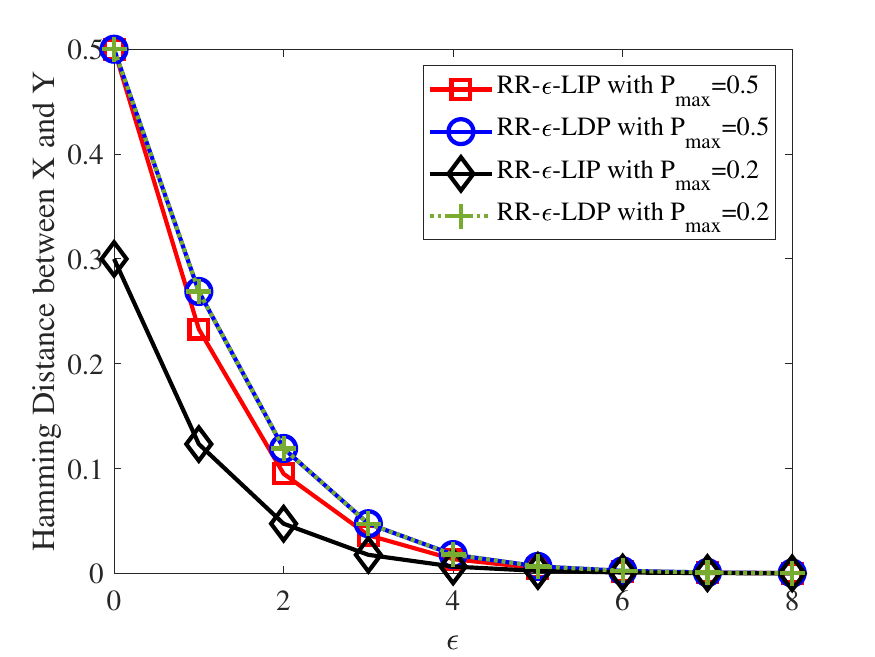} 
\label{hamming1} }\\
\subfigure[Comparison between LIP and LDP when utility is measured by Hamming distance with $|\mathcal{X}|=5$] 
{ \includegraphics[width=6cm]{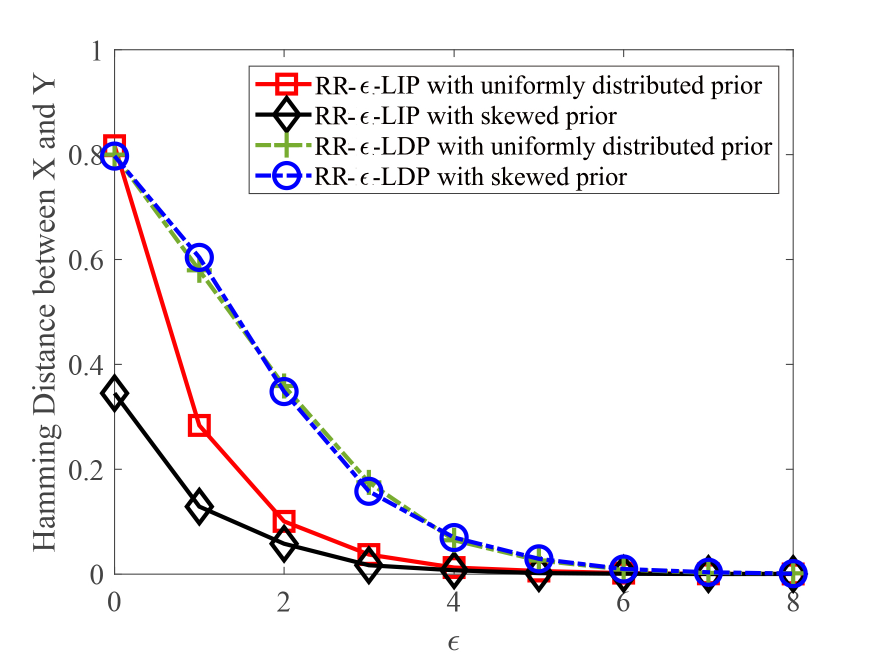} 
\label{hamming2} } 
\caption{Utility-privacy tradeoff comparison from a rate-distortion perspective} 
\label{fig:HD} 
\vspace{-10pt}
\end{figure}

\begin{figure*}[t]
\centering 
\subfigure[Utility-privacy tradeoffs for location histogram estimation (users are i.i.d. and domain size is equivalent to $|\mathcal{X}|=83$).]
{ \includegraphics[width=6cm]{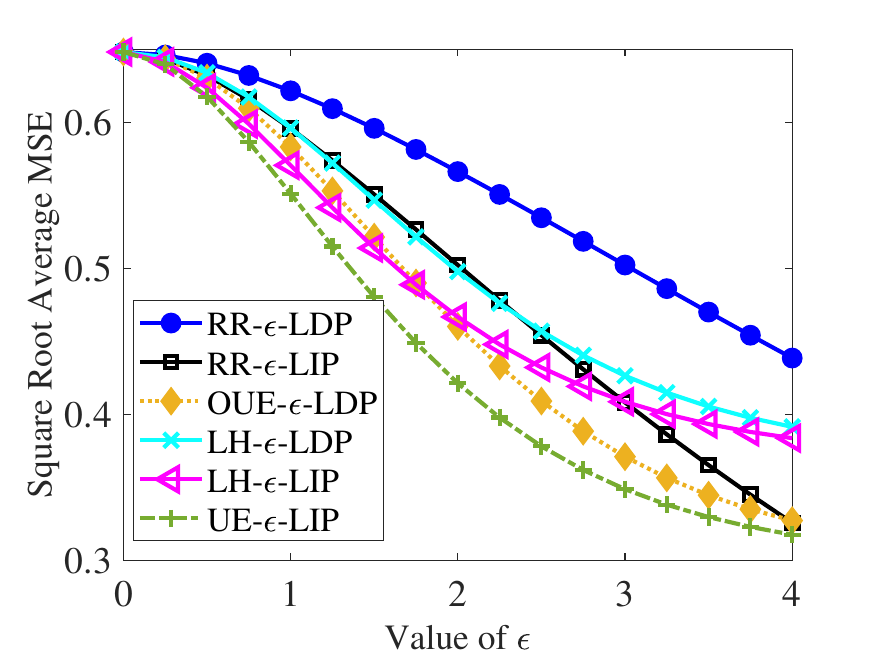} 
\label{fig:Gowalla} }\qquad\qquad
\subfigure[Utility-privacy tradeoffs for work class aggregation while protecting annual income privacy (Model with hidden variable, $|\mathcal{R}|=4$ and $|\mathcal{X}|=4$).] 
{ \includegraphics[width=6cm]{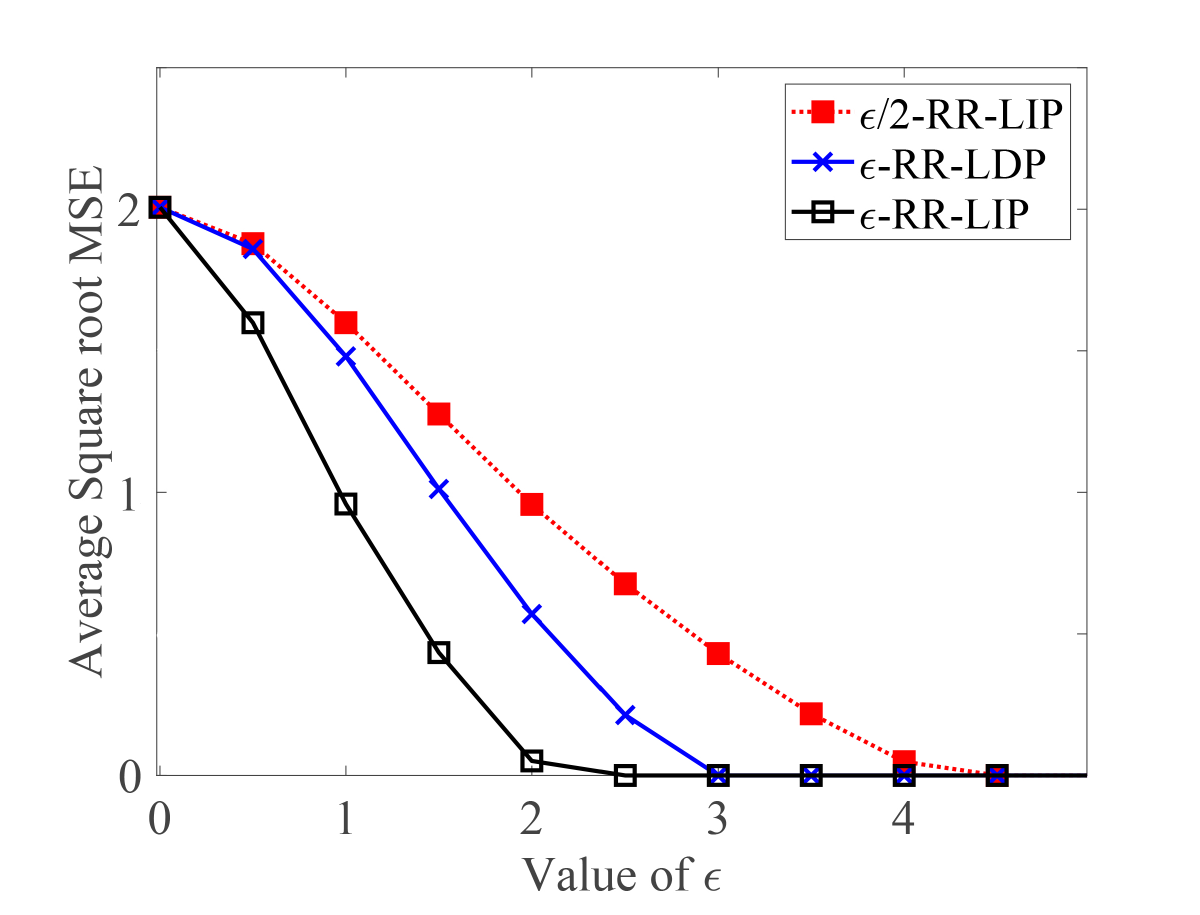} 
\label{fig:adult} } 
\caption{Utility-privacy tradeoff comparisons using real-world data.} 
\label{fig:single_model_compare} 
\vspace{-17pt}
\end{figure*}

\subsubsection{Utility as a function of correlation with latent variable}

We next consider the model with each $G_i\neq{X_i}$. We first examine the utility as a function of the correlation between $X_i$ and $G_i$, and consider a fixed prior of $X_i$. The correlation between $X_i$ and $G_i$ is measured by the correlation coefficient $\rho_{gx}=\frac{\sigma_{xg}}{\sigma_{x}\sigma_g}$, we then find the conditional probability of $T^i_{GX}$ by fixing $\rho_{gx}$ to be $1$, $0.7$ and $0.3$ respectively (when $\rho_{gx}>0$, a larger $\rho_{gx}$ implies stronger correlation between $X_i$ and $G_i$). Under each correlation, we derive the utility-privacy tradeoff provided by RR-$\epsilon$-LIP based mechanism. The result is shown in Fig. \ref{fig:correlation}.
Observer that, stronger correlation results in decreased utility compared to weaker correlation. The reason lies in that when the correlation is strong, more noise is needed to privatize the input data $X_i$. When $\rho_{gx}=1$, $X_i=G_i$, the mechanism cannot achieve zero MSE. When $\rho_{gx}=0.3$, given any $\epsilon\ge{2}$, the MSE is decreased to $0$. Because no noise is added to perturb $X_i$ and the correlation between $X_i$ and $G_i$ makes $G_i$ hard enough to be inferred.

\subsubsection{Comparison among different privacy notions with latent variable and uncertain prior}\label{Sec:6.1.3}

Next, we consider the scenario where each user's input data $X_i$ is correlated to $G_i$ with correlation comes from bounded set $\mathcal{P}_i$. We then compare the utility provided by the following privacy notions under the RR mechanism: (a). $\epsilon$-LIP ($\epsilon/2$-LIP) with bounded prior;  (b). $\epsilon$-Pufferfish privacy; (c). $\epsilon$-LDP;  (d) with no observations on $\bar{Y}$. Note that $\epsilon$-LDP provides privacy protection against the worst-case prior, including $X_i=G_i$. From the impact of correlation between $X_i$ and $G_i$ on data utility, we know that, for LDP based mechanisms, protecting $G_i$ and $X_i$ are equivalent. %For each mechanism described above, we assume that the adversary possesses the exact prior distribution and the correlation between each $X_i$ and $G_i$, but each user has an uncertain prior knowledge. %Also note that $\epsilon$-Semantic Privacy is proposed under centralized setting, where the privacy is measured by the maximum distance between two posterior probabilities resulted by observing the output from two datasets that differ in one user's attribute. Here we adapt it to the local setting as follows: for any $\theta^i_{X}$ and $\theta^i_{x,g}$ within $\mathcal{P}^i$, a mechanism $\mathcal{M}$ satisfies $\epsilon$-semantic privacy if $\forall{y,y'\in\mathcal{X}}$:
%\begin{equation}
%    \max_{x\in\mathcal{X}}|Pr(X=x|Y=y)-Pr(X=x|Y=y')|\le{\epsilon}
%\end{equation}

The utility-privacy tradeoff comparisons are shown in Fig. \ref{fig:U_P_latent}. Observe that different mechanisms share the same start point because the prior distribution of each user's input data $X_i$ is fixed and known to the curator. Even though BP-LIP and Pufferfish privacy have larger feasible regions for perturbation parameters by considering the bounded set of correlations between $X_i$ and $G_i$, as long as the input data is not independent of the latent variable, the mechanism needs to make $X_i$ and $Y_i$ independent in order to achieve zero privacy leakage. The utility provided by LIP increases faster with $\epsilon$ than Pufferfish and LDP, because the feasible regions of pufferfish and LDP are within that of LIP. Another observation is that the utility of $\epsilon$-LDP is not bounded between $\epsilon/2$ and $\epsilon$-LIP, as we have shown in Section 3. This is because LIP further considers the correlation between $X_i$ and $G_i$, while LDP considers the worst-case correlation, which could be $1$, i.e., $X_i=G_i$. Finally, we can observe the utility gain by using outputs from the privacy-preserving mechanism compared to the case when only using prior for estimation. Observe that for different $\epsilon$s, taking no observations results in a constant MES which equals the variance of the data.

\subsubsection{Impact of Domain Size on Models}\label{sec:monte}
%We now evaluate the utility-privacy tradeoffs of the LIP notion.  The number of users and the domain size of each user are fixed as $N=5000$ and $|\mathcal{X}|=5$, respectively. Without loss of generality, we assume $\mathcal{X}=\{0,1,2,3,4\}$, with the prior of each value randomly generated for $5$ times. The set of priors are used as the bounded range (all plausible prior distributions), and one of them is used as the true prior. The utility-privacy tradeoffs are shown in Fig.\ref{dfixed}. The figure shows that the $\epsilon$-LIP provides the most enhanced utility compared with other models; The $\epsilon$-LDP is sandwiched between $\epsilon$-LIP and $2\epsilon$-LIP; Moreover, the $\epsilon$-BP-LIP also provides better utility compared to $\epsilon$-LDP. According to the analysis in Sec. \ref{sec:LDP}, when the bounded prior set includes all possible priors, $\epsilon$-WC-LIP is equivalent to $\epsilon$-LDP.

Next, we compare how the data domain impacts the utility-privacy tradeoff of LIP and LDP: Consider each $X_i$ in the system has a domain size from $|\mathcal{X}|=10$ to $|\mathcal{X}|=50$. We then fix $\epsilon=1$ and show the utilities with different input domain sizes. The goal is to compare the utility provided by RR mechanisms and encoding based mechanisms. To this end, we also compare with other variations of LDP based mechanisms, which improve RR-LDP's performance significantly when $|\mathcal{X}|>3e^{\epsilon}+2$ (\cite{Tianhao}). One is LDP with Optimal Unary Encoding, and the other is LDP with Optimal Local Hashing. From \cite{Tianhao}, the optimal hashing size is $|\mathcal{X}'|^*=e^{\epsilon}+1$. When $\epsilon=1$, we have $|\mathcal{X}'|^*=4$. To make a fair comparison, we consider each $X_i=G_i$, and we compare with LH-$\epsilon$-LIP when $|\mathcal{X}'|=4$. The utility comparison as a factor of the input domain is shown in Fig.\ref{fig_domain}.

From Fig.\ref{fig_domain}, we have the following insights: (1) When the correlation between $X_i$ and $G_i$ is not considered, the utility provided by RR-$\epsilon$-LDP is always sandwiched between RR-$\epsilon$-LIP and RR-$\epsilon/2$-LIP under any domain size. Because they share the same utility function, and the utility depends on the size of the parameters' feasible regions. (2) When $|\mathcal{X}|$ is small ($|\mathcal{X}|<15$), RR-LIP provides better utility than LH-LIP. (3) 
For large $|\mathcal{X}|$, UE-LIP outperforms RR-LIP, and the gap enlarges as $|\mathcal{X}|$ increases. LDP based mechanisms have similar trends. (4) UE-LIP always provides better utility than OUE-LDP, LH-LIP always outperforms OLH-LDP. The reasons are described in Section~\ref{sec:LDP}.

Further, we compare the utility provided by LH-LIP with different hashing sizes. We fix $\epsilon=1$ and consider $|\mathcal{X}|=20$ with a prior of $[0, 0.005,0.001,...,0.095,0.1]$ (the increment is $0.005$). We then range $|\mathcal{X}'|$ from $1$ to $20$. Given different hashing sizes, there could be multiple hash functions. When there exist more than $100$ hash functions, we randomly select $100$ functions and calculate their corresponding utilities. In Fig. \ref{fig:domain2}, we show the utility comparison among LH-LIP with different hashing sizes. Observe that, under each $|\mathcal{X}'|$, utilities varies for different hash functions, because different hash functions imply different prior combinations. Intuitively, when $\mathcal{X}'$ is uniformly distributed, more noise is added in perturbation than when the distribution of $\mathcal{X}'$ is skewed. Also, observe that the optimal hashing size should be around $12$ to $16$. However, when $|\mathcal{X}'|\in[12,16]$, there still exist some hash functions that provide poor utilities. Such observation further confirms that the optimal hashing size cannot be determined under an arbitrary prior.

\subsubsection{Comparison between LIP and LDP for Hamming distance-based utility}
Next, we compare $\epsilon$-LIP to $\epsilon$-LDP when the utility is measured by Hamming distance between each input $X_i$ and output $Y_i$, i.e.,
\begin{equation}
    \text{Utility}=-\sum_{i=1}^N||Y_i-X_i||_h,
\end{equation}
where $||A-B||_h=0$ if $A=B$, $||A-B||_h=1$ if $A\neq{B}$. Hamming distance is usually adopted in a rate-distortion framework, where rate measures the privacy leakage and distortion captures data utility. In \cite{Extreme_ldp}, an optimal mechanism is derived under LDP constraints.

We next compare LIP and LDP under two cases: (1) Binary model with uncertain prior: when each input data $X_i$ is binary and is sampled from $\theta^i_X$. Notice that $\theta^i_X$ can be further specified by $P^i_1$. It is assumed that the exact $P^i_1$ is unknown to each user, but each of them knows that $P^i_1$ is upper bounded by $P_{\max}=\max{P}^i_1$. Then each user's released data $Y_i$ is generated by a RR mechanism satisfying $\epsilon$-BP-LIP described in Section 4.2 or RR-$\epsilon$-LDP. (2) When each $X_i$ takes value from $\mathcal{X}$ ($|\mathcal{X}|=5$) with a fixed prior. The prior is assumed to be known by each user. We consider two scenarios on data prior: when data is uniformly distributed or data has a skewed prior. Then each user's released data $Y_i$ is generated by RR-$\epsilon$-LIP or RR-$\epsilon$-LDP.
The utility comparison is shown in Fig. \ref{fig:HD}. Observe that RR-$\epsilon$-LIP provides better utility than RR-$\epsilon$-LDP under each case, and when the prior is more skewed,  the advantage becomes even enhanced.

 \subsection{Simulation with Real-world Datasets}%In this part, we run models described above with real-world data to compare with their performances.

\subsubsection{Histogram Estimation with Location Check-In Dataset}
In this subsection, we compare the performance of different models with the real-world dataset Gowalla, a social networking application where users share their locations by checking-in. There are 6,442,892 users in this dataset. For each user, a trace of the check-in locations is recorded. {Denote the $i$-th user's location trace as $(R_i^1,R_i^2,...,R_i^k)$, where the superscript denotes different check-ins, and for different users, $k$ can be different.} 
With this dataset, we intend to estimate a histogram of users' latest check-in location. {It is assumed that the past location trace of $(R_i^1,R_i^2,...,R_i^{k-1})$ has already been released, and both the users and the curator can use $\{R_i^1,R_i^2,...,R_i^{k-1}\}_{i=1}^N$ to calculate a global prior of the latest check-in location.} We first divide the area into $36\times36$ districts, then map each user's {latest check-in location, which is denoted as $R_i^k=X_i$} into districts. As we studied in Section~\ref{model_applications}, for each user, the latest check-in location is perturbed according to the LIP (LDP) based mechanisms, and a random vector estimator is used for the curator to estimate the histogram. 

The results are shown in Fig. \ref{fig:Gowalla}. Observe that the utilities provided by different mechanisms increase more slowly than the results in Section~\ref{Sec:6.1.3}. This is because each $X_i$ has a larger domain size in this experiment, and the prior of each district is very small. Hence, increasing $\epsilon$ has less influence on the utility than when each data value has a larger prior. Also, note that RR-LIP provides decreased utility than LH-LIP and UE-LIP when $\epsilon$ is small. But eventually, when $\epsilon$ increases, RR-LIP outperforms UE-LIP and LH-LIP, because $q^{i*}_{10}$ in UE-LIP is fixed to be $1/2$. When $\epsilon$ increases, all $0$s in the vector tend to be directly released, but the $1$ in the vector still has a one-half probability of being perturbed as $0$. Also, in LH-LIP, when $\epsilon$ increases, the information loss at hashing affects the utility more than at perturbation. Finally, UE-LIP provides better utility than OUE-LDP, but the gap diminishes as $\epsilon$ increases. %This is because when $\epsilon$ is large,  their optimal parameters are very close.

%Note that compared with the theoretical results from Fig. 7, the advantage of LIP is even enhanced in Fig. \ref{fig:Gowalla}, that is because the theoretical analysis uses data from a domain with $|\mathcal{X}|=20$. On the other hand, in the dataset of Gowalla, the input data is from a domain with $|\mathcal{X}|=36\times{36}$. Even though many of the districts have $0$ users checking-in, which results in zero priors, based on the LIP perturbation mechanism, for those districts with 0 prior, the system will also never output those districts. As a result, the data domain is equivalent to a much decreased one. Nevertheless, the effective domain size is $|\mathcal{X}|=83$, which is much larger than $20$. When domain size is larger, the advantage of LIP is enhanced.

\subsubsection{Latent Variable Privacy with Dataset of Annual Income}

Next, we testify our analysis of the model with latent variables by simulation on a real-world dataset: ``Census income" (Adult dataset), a census survey dataset in which 48842 users' personal information is listed, including 14 attributes, such as age, work class, marriage, race, gender, education, and annual income, {which are denoted as $\{R_i^1,R_i^2,...,R_i^{14}\}$ respectively}. We assume each user's data is published and collected independently. In the field of machine learning, the Adult dataset is usually used for predicting whether each user's annual income is over 50k dollars by training on all the personal information (taken as features). In this experiment, we want to aggregate users' work classes while protecting annual incomes. {In this dataset, the raw data $R_i^2$, work class, has a domain size of 8}: \{Private, Self-emp-not-inc, Self-emp-inc, Federal-gov, Local-gov, State-gov, Without-pay, Never-worked\}. Each user's annual income, {$G_i=R_i^{14}$}, also has a domain size of 8: \{below 20K, 20k-30k, 30k-40k, 40k-50k, 50k-60k, 60k-70k, 70k-80k, over 80k\}, We use number $0$ to $7$ to stand for each of them and statistically calculate the frequency of each value to be the priors. We then find the correlation between each user's work class and the annual income by deep learning (a built-in network of Tensorflow). {In this experiment, we consider the input data $X_i$ has a smaller domain size than $|\mathcal{R}_i|$, i.e., $f_i$ is surjective but not bijective: let $\mathcal{X}$ be \{Private, Self-employed, Government, Never-worked\}. The prior of $X_i$ and correlation with $G_i$ can be calculated by the mapping rule. }Then each user publishes his/her $X_i$ by the LIP/ LDP based mechanism with perturbation parameters numerically solved by the optimization problem defined in \eqref{probelm}. The comparison is shown in Fig. \ref{fig:adult}.
From Fig. \ref{fig:adult}, we observe that the proposed $\epsilon$-LIP model provides better utility than $\epsilon$-LDP. Compared with Monte-Carlo simulations, with this dataset, each model requires a larger $\epsilon$ to diminish to $0$, because the latent variable $G$ is highly correlated with $X$.% In order to protect the privacy of $G$, it requires larger probabilities to be perturbed than the cases with arbitrary correlations between $X$ and $G$.

From the experimental results, we have the following insights: a)  context-aware privacy notions provide better utility than context-free notions, and when the prior is more skewed, the advantage becomes even enhanced; b) LIP based mechanism achieves better utility than those based on LDP when using the same prior dependent estimator, the utility gain lies in measuring the prior knowledge in the privacy notion. c) When the data domain increases, the utility under each notion decreases. Incorporating encoding in the mechanism improves utility when $\epsilon$ is small. d) Utilities of the models with latent variables are higher than those without because the collected data becomes less sensitive. When the correlation between $ X $ and $ G $ is weak, for some $ \epsilon $, $ X $ can be directly published to achieve zero MSE.
\section{Conclusion}\label{sec:con}
In this paper, the notion of local information privacy is proposed and studied. As a context-aware privacy notion, it provides a relaxed privacy guarantee than LDP by introducing prior knowledge in the privacy definition while achieving increased utility. We implement the proposed LIP notion into the data aggregation framework and derive the utility-privacy tradeoff, which minimizes the MSE between the input data and the estimation while protecting the privacy of the raw data or a private latent variable that is correlated with the input data. We consider different scenarios on the prior availability (uncertainty) and data correlation. We also incorporate the encoding methods into the mechanism to mitigate the influence of a large input data domain. Finally, we use synthetic and real-world data to demonstrate the impact of data prior, correlation, and data domain, and compare the utility provided by proposed mechanisms to those based on LDP. Results show that LIP based mechanisms provide better utility than those based on LDP.

\ifCLASSOPTIONcaptionsoff
  \newpage
\fi

% trigger a \newpage just before the given reference
% number - used to balance the columns on the last page
% adjust value as needed - may need to be readjusted if
% the document is modified later
%\IEEEtriggeratref{8}
% The "triggered" command can be changed if desired:
%\IEEEtriggercmd{\enlargethispage{-5in}}

% references section

% can use a bibliography generated by BibTeX as a .bbl file
% BibTeX documentation can be easily obtained at:
% http://mirror.ctan.org/biblio/bibtex/contrib/doc/
% The IEEEtran BibTeX style support page is at:
% http://www.michaelshell.org/tex/ieeetran/bibtex/
%\bibliographystyle{IEEEtran}
% argument is your BibTeX string definitions and bibliography database(s)
%\bibliography{IEEEabrv,../bib/paper}
%
% <OR> manually copy in the resultant .bbl file
% set second argument of \begin to the number of references
% (used to reserve space for the reference number labels box)
\bibliographystyle{ieeetr}
\bibliography{Ref}

\appendices
\section{Proof of Lemma 3}\label{sec:app1}
\begin{proof}
When $\epsilon$-LIP is satisfied, the privacy metric of DI can be expressed as:
\begin{equation*}
\begin{aligned}
    &\frac{Pr(Y=y|X=x)Pr(X=x)}{Pr(Y=y|X=x')Pr(X=x')}\\
    &\le{\frac{Pr(Y=y)Pr(X=x)e^{\epsilon}}{Pr(Y=y)Pr(X=x')e^{-\epsilon}}}\\
    &\le e^{2\epsilon+D^{{X}}_{\infty}}.
\end{aligned}
\end{equation*}
For the other direction, when $\epsilon$-DI holds, we have:
\begin{equation*}
    \begin{aligned}
    &\frac{Pr(Y=y|X=x)}{Pr(Y=y|X=x')}\le{e^{\epsilon+D^{{X}}_{\infty}}}.\\
    \end{aligned}
\end{equation*}
Then we have:
\begin{equation*}
    \begin{aligned}
    Pr(Y=y)=&\sum_{x\in{\mathcal{X}}}Pr(Y=y|X=x)Pr(X=x)\\
    \le&{\sum_{x\in{\mathcal{X}}}e^{\epsilon+D^{{X}}_{\infty}}Pr(Y=y|X=x')Pr(X=x)}\\
    \le&{e^{\epsilon+D_{\infty}}Pr(Y=y|X=x')}.
    \end{aligned}
\end{equation*}
Similarly, $ Pr(Y=y)\ge{e^{-\epsilon-D^{{X}}_{\infty}}Pr(Y=y|X=x')}$. Thus $(\epsilon+D^{{X}}_{\infty})$-LIP is satisfied.
\end{proof}

\section{Proof of Theorem 1}\label{app_thm1}
\begin{proof}
%Observe that, the $f(\cdot)$s in the two  basic applications above can be expressed as a summation over all the $f_i(X_i)$s, as the semantic of aggregation implies a summation operation. Thus, 
The MMSE estimator $\hat{\mathbf{S}}$ can be expressed as:

\begin{equation}\label{estimator}
\setlength{\abovedisplayskip}{3pt}
\setlength{\belowdisplayskip}{3pt}
\begin{aligned}
&E[{S}|\bar{Y}]=E[f(\bar{R})|\bar{Y}]
=E[f(R_1,R_2,...,R_N)|\bar{Y}]\\
\overset{(a)}{=}&E[f_1(R_1)|\bar{Y}]+E[f_2(R_2)|\bar{Y}],...,+E[f_N(R_N)|\bar{Y}]\}\\
\overset{(b)}{=}&\sum^{N}_{i=1}E[f_i(R_i)|Y_i],
\end{aligned}
\end{equation}

where (a) in Eq. \eqref{estimator} is due to the independence of $R_i$s, and (b) is because $R_i$ is only correlated with $Y_i$ in the output sequence.  Thus, $\mathcal{E}({{S},\hat{{S}}})$  can be derived as:
\begin{small}
\begin{equation}\label{error_1}
\mathcal{E}({{S},\hat{{S}}})=E\left[\left(\sum^{N}_{i=1}\{f_i(R_i)-E[f_i(R_i)|Y_i]\}\right)^2\right].
\end{equation}
\end{small}

Note that, for the application of histogram, the error forms an error vector of $(S_k,\hat{S_k})_{k=1}^d$. By the definition of second order norm. The mean square error of this case is: 
\begin{small}
\begin{equation*}
\mathcal{E}(S_k,\hat{S_k})_{k=1}^d=\sum^d_{k=1}E\left[\left(\sum^{N}_{i=1}\{f^k_i(R_i)-E[f^k_i(R_i)|Y_i]\}\right)^2\right],
\end{equation*}
\end{small}
where $f^k_i(R_i)=\mathbbm{1}_{\{R_i=k\}}$.

We next show that in general, the total MSE can be decomposed into the summation of local MSEs.
\begin{equation*}
\begin{aligned}
&\mathcal{E}({{S},\hat{{S}}})=E\left[\left(\sum^{N}_{i=1}\{f_i(R_i)-E[f_i(R_i)|Y_i]\}\right)^2\right]\\
&=\sum^{N}_{i=1}E\left[f_i(R_i)-E[f_i(R_i)|Y_i]\right]^2\\
&-2\sum^{N}_{j=1,l\neq{j}}E\{(f_j(R_j)-E[f_j(R_j)|Y_j])(f_l(R_l)-E[f_l(R_l)|Y_l])\}.\\
\end{aligned}
\end{equation*}
The cross terms are 0 because $\forall{j,l}\in\{1,...,N\}$ and $j\neq{l}$:
\begin{equation*}
\begin{aligned}
&E\{(f_j(R_j)-E[f_j(R_j)|Y_j])(f_l(R_l)-E[f_l(R_l)|Y_l])\}]\\
=&E[(f_j(R_j)-E[f_j(R_j)|Y_j])]E[(f_l(R_l)-E[f_l(R_l)|Y_l])]\\
=&[E(f_j(R_j))-E\{E[f_j(R_j)|Y_j]\}][E(f_l(R_l))-E\{E[f_l(R_l)|Y_l]\}],
\end{aligned}
\end{equation*}
where $E(f_j(R_j))-E\{E[f_j(R_j)|Y_j]\}$ and $E(f_l(R_l))-E\{E[f_l(R_l)|Y_l]\}$ are 0, because the estimator is unbiased. Thus, $\mathcal{E}({{S},\hat{{S}}})=\sum^N_{i=1}\mathcal{E}_i(\mathbf{q}^i)$. 
%and 
%$\mathcal{E}(S_k,\hat{S_k})^{d}_{k=1}=\sum^{d}_{k=1}\sum^N_{i=1}\mathcal{E}^k_i(\mathbf{q}^i)$.

We next show that the global optimal solutions (perturbation parameters) satisfy each local privacy constraint: 

Assume that for each user, the minimized $\mathcal{E}_i(\mathbf{q}^i)=e_i$ is achieved at $\mathbf{q}^{i*}\in{\mathcal{T}_i}$, then $\mathcal{E}(\mathbf{q}^{1*},...,\mathbf{q}^{N*})=\sum_{i=1}^Ne_i$. If for some user ``k" who takes parameters $\mathbf{q}^{k}\in{\mathcal{T}_k}$, by assumption, we know that $\mathcal{E}_k(\mathbf{q}^k)\ge{e_k}$. Thus,
%\begin{small}
\begin{equation*}
\setlength{\abovedisplayskip}{3pt}
\setlength{\belowdisplayskip}{3pt}
\sum_{i=1}^k\mathcal{E}_i(\mathbf{q}^{i*})+\mathcal{E}_k(\mathbf{q}^k)+\sum_{i=k+1}^N\mathcal{E}_i(\mathbf{q}^{i*})\ge{\sum_{i=1}^Ne_i}.
\end{equation*}
%\end{small}
That means the minimal value of $\mathcal{E}(\mathbf{q}^1,...,\mathbf{q}^N)$, where $\mathbf{q}^i\in{\mathcal{T}_i}$, 
$\forall{i\in{[1,N]}}$ can be achieved if for each user, $\mathbf{q}^i=\mathbf{q}^{i*}$.

%Note that, although $\mathcal{E}_i$s that are inside a category are independent with each other, the errors of different categories (such as $\mathcal{E}(S_j,\hat{S_j})$ and $\mathcal{E}(S_l,\hat{S_l})$, where $l\neq{j}\in{\mathcal{R}}$) are dependent, thus whether the decomposition proposition works for the statistical aggregation is not a direct result, which can be proved using the optimal perturbation parameters. We discuss this case into detail in section \ref{model_applications}
\end{proof}

\section{Proof of Theorem 2}
\begin{proof}
The first step is to show the minimal MSE is achieved when $q_0$ and $q_1$ are at their minimum, which can be proved by taking derivative of the MSE function with respect to $q$s to show that MSE is increasing with $q$s.%; The second way is to show that the estimator has the property of $E[X|Y=1]>E[X|Y=0]$, thus to minimize the MSE, we want the terms with coefficients of $(1-E[X|Y=0])^2$ and $(0-E[X|Y=1])^2$ to be minimized. Which are achieved at the minimum values of $q_0$ and $q_1$.

The second step is to find the minimum values of $q$s, which are found according to the privacy constraints.
To derive the monotocity of the privacy metric with respect to $q$s.
Define $F^i_1=\frac{Pr(G_i=g|Y_i=1)}{Pr(G_i=g)}$, $F^i_2=\frac{Pr(G_i=g|Y_i=0)}{Pr(G_i=g)}$ which can be further expressed as
\begin{equation}
\begin{aligned}
    F^i_1=&\frac{Pr(Y_i=0|G_i=g)}{Pr(Y_i=0)}
    =\frac{(1-q^i_0)T^i_{g0}+q^i_1t^{i}_{g1}}{q^i_1P^i_1+(1-q^i_0)(1-P^i_1)};\\
    F^i_2=&\frac{Pr(Y_i=1|G_i=g)}{Pr(Y_i=1)}
    =\frac{q^i_0T^i_{g0}+(1-q^i_1)t^{i}_{g1}}{(1-q^i_1)P^i_1+q^i_0(1-P^i_1)}.
\end{aligned}
\end{equation}

Taking derivative over $q^i_0$ and $q^i_1$, we have:$\frac{\partial{F^i_1}}{\partial{q^i_0}}=\frac{(t^{i}_{g1}-P^i_1)q^i_1}{(q^i_1P^i_1+(1-q^i_0)(1-P^i_1))^2}$,
    $\frac{\partial{F^i_1}}{\partial{q^i_1}}=\frac{(t^{i}_{g1}-P^i_1)(1-q^i_0)}{(q^i_1P^i_1+(1-q^i_0)(1-P^i_1))^2}$,
    $\frac{\partial{F^i_2}}{\partial{q^i_0}}=\frac{(P^i_1-t^{i}_{g1})(1-q^i_1)}{(1-q^i_1)P^i_1+q^i_0(1-P^i_1)^2}$,
    $\frac{\partial{F^i_2}}{\partial{q^i_1}}=\frac{(P^i_1-t^{i}_{g1})q^i_0}{(1-q^i_1)P^i_1+q^i_0(1-P^i_1)^2}.$

So we know, when $t^{i}_{g1}>{P^i_1}$, $F^i_1$ is monotonically increasing with $q^i$, whereas $F^i_2$ is monotonically decreasing with $q^i$, so the minimum $q^i$s are achieved when $F^i_1=e^{-\epsilon}$ and $F^i_2=e^{\epsilon}$. Solving the equations, and we get:
$q^i_0=\frac{t^{i}_{g1}-P^i_1e^{\epsilon}}{(e^{\epsilon}+1)(t^{i}_{g1}-P^i_1)}$; $q^i_1=\frac{1+t^{i}_{g1}e^{\epsilon}-e^{\epsilon}-P^i_1}{(e^{\epsilon}+1)(t^{i}_{g1}-P^i_1)}$;
When $t^{i}_{g1}<{P_1}$, $F^i_1$ is monotonically decreasing with $q^i$, whereas $F^i_2$ is monotonically increasing with $q^i$, so the minimum $q^i$s are achieved when $F^i_1=e^{\epsilon}$ and $F^i_2=e^{-\epsilon}$. Solving the equation, and we get:
$q^i_0=\frac{P^i_1-t^{i}_{g1}e^{\epsilon}}{(e^{\epsilon}+1)(P^i_1-t^{i}_{g1})}$; $q^i_1= \frac{1+P^i_1e^{\epsilon}-e^{\epsilon}-t^{i}_{g1}}{(e^{\epsilon}+1)(P^i_1-t^{i}_{g1})}$.

%The final step is to test the value of $q_0$ and $q_1$ as functions of $T_{g1}$, taking derivative on $q$s, we have that the first set of solutions are monotonically increasing with $T_{g1}$, and the second set of solutions are monotonically decreasing with $T_{g1}$. Thus, to find a pair of $q_0$ and $q_1$ satisfying $T_{g1}$ for all $g\in{\mathcal{G}}$, we take the maximum of all possible values. As $q$s are non-negative, another candidate in the max function is 0.
The final step is to test the value of $q^i_0$ and $q^i_1$ as functions of $t^{i}_{g1}$. Taking derivative on $q^i$s, we have that the first set of solutions are monotonically increasing with $t^{i}_{g1}$, and the second set of solutions are monotonically decreasing with $t^{i}_{g1}$. Thus, to find a pair of $q_0$ and $q_1$ satisfying $T_{g1}$ for all $g\in{\mathcal{G}}$, we take the maximum of all possible values. As $q$s are non-negative, another candidate in the max function is 0.

%In summary: $q^*_0=\max\{0,\frac{T^{max}_{g1}-P_1e^{\epsilon}}{(e^{\epsilon}+1)(T^{max}_{g1}-P_1)},\frac{P_1-T^{min}_{g1}e^{\epsilon}}{(e^{\epsilon}+1)(P_1-T^{min}_{g1})}\}$; 

%$q_1^*=\max\{0,\frac{1+T^{max}_{g1}e^{\epsilon}-e^{\epsilon}-P_1}{(e^{\epsilon}+1)(T^{max}_{g1}-P_1)}, \frac{1+P_1e^{\epsilon}-e^{\epsilon}-T^{min}_{g1}}{(e^{\epsilon}+1)(P_1-T^{min}_{g1})}\}$
\end{proof}

\section{Proof of Theorem 3}\label{OPT_SOL}
\begin{proof}
Notice that $\text{Var}[X_i]$ is a non-negative constant, thus minimizing MSE is equivalent to maximize $\text{Var}[\hat{X_i}]$.

Step 1. Regardless of the privacy constraints:

\textbf{Minimized solution:}\\
Consider a set of parameters: $\mathbf{q_{min}^i}$, when $q^i_{nk}=\lambda^i_k$, $\forall{n,k\in{1,2,3...d}}$, $\text{Var}[\hat{X_i}]=0$. Since $\text{Var}[\hat{X_i}]\ge0$, thus the solution of $q^i_{nk}=\lambda^i_k$ results in a minimal value of  $\text{Var}[\hat{X_i}]$.

\textbf{Maximized solution:} Consider a set of parameters: $\mathbf{q_{max}^i}$, assume that for all $k={1,2...d}$, $q^i_{kk}=1$ and $q^i_{kl}=0$ for all $l\neq{k}$. Under this solution, $\lambda^i_k=P^i_k$ and

\begin{equation}
\begin{aligned}
    &\sum^d_{m=1}\sum^d_{n=1}\sum^d_{k=1}a_ma_nP^i_mP^i_nq^i_{mk}\left(\frac{q^i_{nk}}{\lambda^i_k}-1\right)\\
    =&\sum^d_{n=1}a^2_nP^i_n(1-P^i_n)-\sum^d_{n=1}\sum^d_{m\neq{n}}a_na_mP^i_nP^i_m
    =\text{Var}[X_i].
    \end{aligned}
\end{equation}

Notice that $\mathcal{E}_i\ge{0}$, $\text{Var}[X_i]\ge{\text{Var}[\hat{X_i}]}$. Thus, the solution of $q^i_{kk}=1$ and $q^i_{kl}=0$, $\forall k={1,2,...,d},l\neq{k}$ results in the maximum value of $\text{Var}(\hat{X}_i)$.
%When $\epsilon=0$, the mechanism satisfies strongest privacy guarantee and $\frac{\lambda^i_k}{q^i_{jk}}=1$. We know that when $q^i_{nk}=\lambda^i_k$, the mechanism achieves minimized utility. 

Next, investigate the monotonicity of the region between minimum and maximum:

Taking derivative with respect to $q^i_{lk}$, $\frac{\partial{\text{Var}[\hat{X_i}]}}{\partial{q^i_{lk}}}$ becomes
\begin{small}
\begin{equation}\label{der}
\begin{aligned}
   &\frac{1}{(\lambda^i_k)^2}\left[a_l\lambda^i_k\left(2\sum^d_{m=1}(a_mq^i_{mk}-a_j\lambda^i_k)\right)-P^i_l\left(\sum^d_{m=1}a_mq^i_{mk}\right)^2\right]\\
   =&\frac{a_{l}q^i_{lk}\left(\sum_{m\neq{l}}^da_mq^i_{mk}\right)(1-P^i_k)(q^i_{lk}-\lambda^i_k)}{\lambda^i_k}.
\end{aligned}
\end{equation}
\end{small}
From Eq. \eqref{der}, we can observe that the station point of $q^i_{lk}$ is $\lambda^i_k$, which we know is the minimal value and $\text{Var}[\hat{X_i}]$ is monotonically increasing when $q^i_{lk}>\lambda^i_k$;  $\text{Var}[\hat{X_i}]$ is monotonically decreasing when $q^i_{lk}<\lambda^i_k$. As a result, without considering the privacy constraints, the optimal solutions of each $q^i_{mn}$ is either $0$ or $1$. We next show that the maximum value of $\text{Var}[\hat{X_i}]$ can only be achieved by the solutions discussed above.

%When $q^i_{kk}=1$ for any $k={1,2,...,d}$, and $q^i_{kj}=0$ for any $j\neq{k}$.
Now, assume that for the data value $l$, there is a subset of index $\mathcal{S}$ s.t: $q^i_{lk}\neq{1}\neq{0}$, for any $k\in\mathcal{S}$. Denote $\hat{X}$ as the estimator using $\mathbf{q_{max}^i}$ and $\hat{X}'$ as the estimator using $\mathbf{q_{max}^i}$ but the parameters for data value $l$ are substituted according to the  subset. Regardless of the constraints, compare with the variance of $\text{Var}[\hat{X}_i]$ and $\text{Var}[\hat{X}'_i]$, we have:
\begin{equation}
\begin{aligned}
    &\text{Var}[\hat{X}_i]-\text{Var}[\hat{X'_i}]\\=&\sum^n_{k=1}a^2_lP^i_l(\frac{P^i_l}{P^i_l+P^i_k})+\sum^n_{k=1}a^2_kP^i_k(\frac{P^i_k}{P^i_l+P^i_k})\\
   +& \sum^d_{m\notin\{1,2,...,n\}}a_lP^i_la_mP^i_m-2\sum^n_{k=1}a_la_k\frac{P^i_lP^i_k}{P^i_l+P^i_k}\\
    =&\sum^n_{k=1}\frac{(a_lP^i_l-a_kP^i_k)^2}{P^i_l+P^i_k}+\sum^d_{m\notin\{1,2,...,n\}}a_lP^i_la_mP^i_m>{0}.
\end{aligned}
\end{equation}

Thus, the form of the optimal solution is unique: for any $k\in\{1,2,...,d\}$, only one of the $q^i_{kj}=1$, other $q^i_{kj}=0$. 

Step 2. With privacy constraints:

As $\text{Var}[\hat{X_i}]$  is monotonically increasing when $q^i_{lk}>\lambda^i_k$; and monotonically decreasing when $q^i_{lk}<\lambda^i_k$.
The optimal solution (with privacy constraints) lies on the boundaries of the constraints:     $e^{-\epsilon}=\frac{\lambda^i_k}{q^i_{jk}}$, or $\frac{\lambda^i_k}{q^i_{jk}}={e^{\epsilon}}$ (under $0\le{q^i_{jk}}$; $\sum^d_{n=1}q^i_{jn}=1;$, $\forall{j,k\in{1,2,...,d}}$).

When one of the probabilities of $q^i_{m1}, q^i_{m2},..., q^i_{md}$, approaches 1 and others approaches 0, there are $d$ possible selections, and consider all the $m\in\{1,2,...,d\}$ there are $d!$ feasible solutions. We now consider the case where $q^i_{kk}$s approach 1 for all $k\in{1,2,...,d}$, and other $q^i_{kj}$s are approaching  0. For the $q^i_{kk}$s which approach 1, the upper bounds is valid, and for $q^i_{kj}$s which approach 0, the lower bounds are valid. Considering the privacy constraints, we know the upper bound of $q^i_{kk}$ is $\lambda^i_k/e^{-\epsilon}$ and the lower bound of $q^i_{kj}$ is $\lambda^i_k/e^{\epsilon}$. As  $q^i_{kk}+\sum_{j=1,j\neq{k}}^dq^i_{kj}=1$, for all $j$s $q^i_{kj}$s are approaching boundaries simultaneously, as a result, they may not reach the boundaries at the same time.

Next, discuss whether lower bounds or upper bounds are reached first.
When lower bounds are reached, $q^i_{jk}=\frac{\lambda_k}{e^{\epsilon}}$ for all $ j\neq{k}$. Thus $q^i_{kk}=1-(1-P^i_k)/e^{\epsilon}$, $\lambda^i_k=P^i_k$.

We can check whether $q^i_{kk}$s are in the feasible region:

\begin{equation}
\begin{aligned}
    \frac{\lambda^i_k}{q^i_{kk}}-e^{-\epsilon}=&\frac{e^{\epsilon}P^i_k}{e^{\epsilon}+P^i_k-1}-e^{-\epsilon}\ge{0},
    \end{aligned}
\end{equation}

\begin{equation}
\begin{aligned}
    e^{\epsilon}-\frac{\lambda^i_k}{q^i_{kk}}=&e^{\epsilon}-\frac{e^{\epsilon}P^i_k}{e^{\epsilon}+P^i_k-1}\ge{0}.
    \end{aligned}
\end{equation}

So, when $q^i_{kj}$s reach the lower bound, $q^i_{kk}$ is still in the feasible region. It is readily seen that when $q^i_{kk}$ reaches the upper bound, $q^i_{kj}$s do not satisfy the privacy constraints. 
\end{proof}

\section{Proof of Lemma 4}\label{OPT-range}
\begin{proof}
As the MSE is the difference between the variance of the input data and the variance of the estimator, when $d$ is fixed, the variance of the input data is fixed. It is equivalent to show when $f\neq{d}$, the variance of the estimator decreases.

 We know the optimal solution of the parameters of any input $X_i=a_k$ are in the form of $q^i_{kk}$ is approaching $1$ while other $q^i_{kj}$s are approaching $0$ so that each input value can be inferred by a particular output. For example, given $Y_i=a_k$, one can probably infer that $X_i$ is also $a_k$ and the confidence increases with $\epsilon$. 

\textbf{when $f<d$}, when the $d$ is fixed, $\text{Var}(X)$ is also fixed. denote  $\text{Var}(\hat{X_i})$ as the variance of the estimator with $d=f$ and $\text{Var}(\hat{X'_i})$ as the variance of the estimator with $d>f$.
     Recall that 
     \begin{small}
     \begin{equation}
        \hat{X_i}=\sum^d_{j=1}\sum^d_{k=1}a_jPr(X_i=a_j|Y_i=a_k)\mathbbm{1}^i_{k},  
     \end{equation}
     \begin{equation}
        \hat{X'_i}=\sum^d_{j=1}\sum^f_{k=1}a_jPr(X_i=a_j|Y_i=a_k)\mathbbm{1}^i_{k},  
     \end{equation}
     \end{small}
     First assume that for each $j\in\{1,2,...,d\}$, $k\in\{1,2,...,f\}$, the parameters of  $\hat{X_i}$ and $\hat{X'_i}$ are identical. We know that for each $j\in\{1,2,...,d\}$, $k\in\{1,2,...,f\}$, $a_jPr(X_i=a_j|Y_i=a_k)\ge{0}$, thus $\text{Var}(\hat{X'}_i)$ is monotonically increasing with $f$. 
     
     Notice that the parameters of $\hat{X_i}$ and $\hat{X'_i}$ can not be identical as for at least one $j$, $q^i_{kj}$ will increase for $k\in\{f+1,f+2,...,d\}$, $j\in\{1,2,...,f\}$. However, this will make each $Pr(X_i=a_k|Y_i=a_j)$ smaller, thus  $Pr(X_i=a_k|Y_i=a_j)>Pr(X'_i=a_k|Y'_i=a_j)$.
     
     As a result: $\text{Var}(X_i)>\text{Var}(X'_i)$.
     
\textbf{When $d<f$,}  this case can be viewed as a special case of the general model with $P^i_{d+1}=P^i_{d+2}=...=P^i_{f}=0$. Thus the optimal solutions is straightforward: $q^i_{kk}=1-(1-P^i_k)/e^{\epsilon}$, $q^i_{kj}=P^i_j/e^{\epsilon}$ for $k,j\in\{1,2,...,d\}$; $q^i_{kj}=0$, for $k\in\{1,2,...,d\}$, $j\in\{d+1,d+2,...,f\}$. As a result, the optimal solution is equivalent to the case of the general model with $d=f$.
 In summary, the optimal range of output is $f=d$.
 \end{proof}

\section{Proof of Theorem 4}\label{prf:unary}
The privacy constraints can be expressed as:
\begin{small}
\begin{equation}\label{eq:unary2}
\begin{aligned}
&{\frac{Pr\left(\{Y^k_i\}_{k=1}^{|\mathcal{R}|}=y_1^{|\mathcal{R}|}\right)}{Pr\left(\{Y^k_i\}_{k=1}^{|\mathcal{R}|}=y_1^{|\mathcal{R}|}|\{U^k_i\}_{k=1}^{|\mathcal{R}|}=u_1^{|\mathcal{R}|}\right)}}\\
=&\frac{\sum_{\bar{u}_1^{|\mathcal{R}|}\in{\mathcal{B}^{|\mathcal{R}|}}}Pr(\{U^k_i\}_{k=1}^{|\mathcal{R}|}=\bar{u}_1^{|\mathcal{R}|})\prod_{k=1}^{|\mathcal{R}|}Pr(Y^k_i=y_k|U^k_i=\bar{u}_k\})}{\prod_{k=1}^{|\mathcal{R}|}Pr(Y^k_i=y_k|U^k_i=u_k))}\\
=&Pr\left(\{U^k_i\}_{k=1}^{|\mathcal{R}|}={u}_1^{|\mathcal{R}|}\right)\\
+&\frac{\sum_{\bar{u}_1^{|\mathcal{R}|}\neq{u_1^{|\mathcal{R}|}}}Pr(\{U^k_i\}_{k=1}^{|\mathcal{R}|}=\bar{u}_1^{|\mathcal{R}|})\prod_{k=1}^{|\mathcal{R}|}Pr(Y^k_i=y_k|U^k_i=\bar{u}_k)}{\prod_{k=1}^{|\mathcal{R}|}Pr(Y^k_i=y_k|U^k_i=u_k))}.
\end{aligned}
\end{equation}
\end{small}
Note that for any given output vector $y_1^k$, the product of $\prod_{k=1}^{|\mathcal{R}|}Pr(Y^k_i=y_k|U^k_i=\bar{u}_k\})$ and $\prod_{k=1}^{|\mathcal{R}|}Pr(Y^k_i=y_k|U^k_i=u_k))$ differ in at most two bits, because different values of $R$ results in only two bits difference when transferred into vector. To this end, the privacy metric of Eq. \eqref{eq:unary2}
%\begin{equation}\label{eq:unary1}
%\begin{aligned}
%&{\frac{Pr\left(\{Y^k_i\}_{k=1}^{|\mathcal{R}|}=\{y_1^k\}_{k=1}^{|\mathcal{R}|}\right)}{Pr\left(\{Y^k_i\}_{k=1}^{|\mathcal{R}|}=\{y_1^k\}_{k=1}^{|\mathcal{R}|}|\{U^k_i\}_{k=1}^{|\mathcal{R}|}=\{u_1^k\}_{k=1}^{|\mathcal{R}|}\right)}}\\
%\end{aligned}
%\end{equation}
is bounded by:
\begin{small}
\begin{equation}\label{boound_unary}
\begin{aligned}
\Bigg[P^i_r+&\frac{(1-P^i_r)q^i_{01}q^i_{10}}{(1-q^i_{01})(1-q^i_{10})}, P^i_r+\frac{(1-P^i_r)(1-q^i_{01})(1-q^i_{10})}{q^i_{01}q^i_{10}}\Bigg],
\end{aligned}
\end{equation}
\end{small}
%where $f_i(x)={u}_1^k$.
%Also, to minimize $q^i_{01}$ and $q^i_{10}$, it is assumed that $q^i_{01}+q^i_{10}\le{1}$. 
As Eq. \eqref{boound_unary} must fall in the region of $[e^{-\epsilon},e^{\epsilon}]$ for all $r\in{\mathcal{R}}$, we have:
\begin{small}
\begin{equation}
\begin{aligned}
&P^i_{\min}+(1-P^i_{\min})\frac{q^i_{01}q^i_{10}}{(1-q^i_{01})(1-q^i_{10})}\ge{e^{-\epsilon}}\\
&P^i_{\min}+(1-P^i_{\min})\frac{(1-q^i_{01})(1-q^i_{10})}{q^i_{01}q^i_{10}}\ge{e^{\epsilon}},
\end{aligned}
\end{equation}
\end{small}
where $P^i_{\min}=\min_{r\in\mathcal{R}}P^i_r$. Then, the upper bound of the ratio of $\frac{(1-q^i_{01})(1-q^i_{10})}{q^i_{01}q^i_{10}}$ becomes (when $e^{-\epsilon}-P^i_{\min}\ge{0}$):
\begin{equation}\label{upperbound_unary}
    \frac{(1-q^i_{01})(1-q^i_{10})}{q^i_{01}q^i_{10}}\le{\frac{e^{\epsilon}-P^i_{\min}}{1-P^i_{\min}}}.
\end{equation}
The privacy constraints are just met when the inequality in Eq. \eqref{upperbound_unary} becomes equality. 
Note that there are more $0$s than $1$ in any input vector $\{U_i^k\}_{k=1}^{|\mathcal{R}|}$, and the utility function of
\begin{equation}\label{utility-Unary}
\begin{aligned}
    &E\left[\left(\{U_i^k\}_{k=1}^{|\mathcal{R}|}-E\left[\{U_i^k\}_{k=1}^{|\mathcal{R}|}|\{Y_i^k\}_{k=1}^{|\mathcal{R}|}\right]\right)^2\right]\\
   % =&\sum_{k=1}^{|\mathcal{R}|}E[(U_i^k-E[U_i^k|Y_i^k])^2]\\
    =&\sum_{k=1}^{|\mathcal{R}|}\left\{\text{Var}[U_i^k]-\text{Var}\left[E[U_i^k|Y_i^k]\right]\right\},
    \end{aligned}
\end{equation}
is a linear combination of MSEs of all errors. Therefore, to minimize MSE, we first set $q^i_{01}$ to be as small as possible. As a result, $q^{i*}_{10}=\frac{1}{2}$ $q^{i*}_{01}=\frac{1-P^i_{\min}}{e^{\epsilon}-2P^i_{\min}+1}$.

%\section{Proof of Theorem 5}\label{app_thm5}
%\begin{proof}

%If for each user, $Y_i$ is released satisfying $\epsilon$-LIP, 
%\begin{equation}
%    e^{-\epsilon}\le\frac{Pr(\mathcal{D}_i=d)}{Pr(\mathcal{D}_i=d|Y_i=y)}\le{e^{\epsilon}}
%\end{equation}
%By Bayes rule, the privacy metric of CIP can be expressed as:
%\begin{equation}
%    \begin{aligned}
%        &\frac{Pr(Y=\sum_{j=1}^NY_j)}{Pr(Y=\sum_{j=1}^NY_j|\mathcal{D}_i=d)}\\
%        \overset{a}{=}&\frac{\sum_{y\in{\mathcal{X}}}Pr(Y_i=y)Pr(Y=y+\sum_{j\neq{i}}^NY_j)}{\sum_{y\in{\mathcal{X}}}Pr(Y_i=y|\mathcal{D}_i=d)Pr(Y=y+\sum_{j\neq{i}}^NY_j)}\\
%        \le&{\frac{\sum_{y\in{\mathcal{X}}}Pr(Y_i=y)Pr(Y=y+\sum_{j\neq{i}}^NY_j)}{\sum_{y\in{\mathcal{X}}}Pr(Y_i=y)e^{-\epsilon}Pr(Y=y+\sum_{j\neq{i}}^NY_j)}}\\
%        =&e^{\epsilon}.
%    \end{aligned}
%\end{equation}
%where $(a)$ follows the independent user assumption, $Pr(Y=y+\sum_{j\neq{i}}^NY_j|\mathcal{D}_i=d)=Pr(Y=y+\sum_{j\neq{i}}^NY_j)$.
%\end{proof}

% you can choose not to have a title for an appendix
% if you want by leaving the argument blank

% use section* for acknowledgment
%\ifCLASSOPTIONcompsoc
  % The Computer Society usually uses the plural form
 % \section*{Acknowledgments}
%\else
  % regular IEEE prefers the singular form
%  \section*{Acknowledgment}
%\fi

%The authors would like to thank...

% Can use something like this to put references on a page
% by themselves when using endfloat and the captionsoff option.
\ifCLASSOPTIONcaptionsoff
  \newpage
\fi

\end{document}